\newcommand{\defn}[1]{\emph{\textbf{{#1}}}}
\newcommand{\E}{\mathbb{E}}
\renewcommand{\paragraph}[1]{\vspace{.3 cm} \noindent \textbf{\boldmath #1} }
\newtheoremstyle{slanted}{3pt}{3pt}{\slshape}{}{\bfseries}{.}{.5em}{}\theoremstyle{slanted}
\newtheorem{theorem}{Theorem}\newtheorem{lemma}[theorem]{Lemma}
\newtheorem{observation}[theorem]{Observation}
\newtheorem{proposition}[theorem]{Proposition}
\newtheorem{corollary}[theorem]{Corollary}
\newtheorem{remark}[theorem]{Remark}
\newtheorem{property}[theorem]{Property}
\DeclareMathOperator{\loglog}{loglog}
\title{Online List Labeling: Breaking the $\log^2 n$ Barrier}
\author{Michael A.~Bender\\Stony Brook University \and Alex Conway\\VMWare Research \and Mart\'{\i}n Farach-Colton\\Rutgers University \and Hanna Koml\'os\\Rutgers University \and William Kuszmaul\\MIT \and Nicole Wein\\DIMACS}
\begin{document}
\date{}
\maketitle

\begin{abstract}

The online list-labeling problem is an algorithmic primitive with a large literature of upper bounds, lower bounds, and applications. The goal is to store a dynamically-changing set of $n$ items in an array of $m$ slots, while maintaining the invariant that the items appear in sorted order, and while minimizing the \emph{relabeling cost}, defined to be the number of items that are moved per insertion/deletion.

For the linear regime, where $m = (1 + \Theta(1)) n$, an upper bound of $O(\log^2 n)$ on the relabeling cost has been known since 1981. A lower bound of $\Omega(\log^2 n)$ is known for deterministic algorithms and for so-called \emph{smooth} algorithms, but the best general lower bound remains $\Omega(\log n)$. The central open question in the field is whether $O(\log^2 n)$ is optimal for all algorithms. 
    
In this paper, we give a randomized data structure that achieves an expected relabeling cost of $O(\log^{3/2} n)$ per operation. More generally, if $m = (1 + \epsilon) n$ for $\epsilon = O(1)$, the expected relabeling cost becomes $O(\epsilon^{-1} \log^{3/2} n)$.  

Our solution is \emph{history independent}, meaning that the state of the data structure is independent of the order in which items are inserted/deleted. For history-independent data structures, we also prove a matching lower bound: for all $\epsilon$ between $1 / n^{1/3}$ and some sufficiently small positive constant, the optimal expected cost for history-independent list-labeling solutions is $\Theta(\epsilon^{-1}\log^{3/2} n)$.

\end{abstract}

 \thispagestyle{empty}
\newpage
\pagenumbering{arabic}

\section{Introduction}\label{sec:intro}

The online \defn{list-labeling problem} is one of the most basic and well-studied algorithmic primitives in data structures, with an extensive literature spanning upper bounds~\cite{ItaiKoRo81,Willard82, Willard86, Willard92,BenderCoDe02twosimplified, BenderFiGi17,ItaiKa07,AnderssonLa90,GalperinR93,BabkaBCKS19, BenderHu07, Katriel02,BenderFiGi05, BenderDeFa05}, lower bounds~\cite{dietz1990lower,dietz1994tight,dietz2004tight,zhang1993density,Saks18,BulanekKoSa12}, variants~\cite{BenderHu07,BenderBeJo16,DevannyFiGo17,Dietz82,Andersson89,AnderssonLa90,GalperinR93,Raman99}, and open-problem surveys~\cite{Saks18, gal2021computational}. The problem has been independently re-discovered in many different contexts~\cite{Willard81, Andersson89, GalperinR93, Raman99}, and it has found extensive applications to areas such as ordered maintenance~\cite{Dietz82, BenderCoDe02twosimplified, BenderFiGi17, BenderCoDe02a}, cache-oblivious data structures~\cite{BenderDeFa05, BrodalFaJa02, BenderDuIa04, BenderFiGi05, BenderFaKu06}, dense file maintenance~\cite{Willard81,Willard86,Willard92,Willard82}, applied graph algorithms~\cite{WheatmanX21, WheatmanX18, WheatmanB21, PandeyWXB21, LeoB21, LeoB19fastconcurrent}, etc.  (For a detailed discussion of related work and applications, see Section~\ref{sec:related}.)

The list-labeling problem was originally formulated~\cite{ItaiKoRo81} as follows. An algorithm must store a set of $n$ elements (where $n$ changes over time) in sorted order in an array of $m \ge n$ slots. Elements are inserted and deleted over time, with each insertion specifying the new element's \defn{rank} $r \in \{1, 2, \ldots, n + 1\}$ among the other elements that are present (e.g., inserting at rank $1$ means that the inserted element is the new smallest element). To keep the elements in sorted order in the array, the algorithm must sometimes move elements around. The \defn{cost} of an algorithm is the number of elements moved during the insertions/deletions.\footnote{To accommodate the many ways in which list labeling is used, some works describe the problem in a more abstract (but equivalent) way: the list-labeling algorithm must dynamically assign each element $x$ a label $\ell(x) \in \{1, 2, \ldots, m\}$ such that $x \prec y \iff \ell(x) < \ell(y)$, and the goal is to minimize the number of elements that are \emph{relabeled} per insertion/deletion---hence the name of the problem.}

The list-labeling problem is well understood in the regime where $m \gg n$. In the \defn{pseudo-exponential regime}, when $\frac{m}{n} = 2^{n^{\Omega(1)}}$, it is possible to achieve $O(1)$ amortized cost per operation~\cite{BabkaBCKS19}.  In the \defn{polynomial regime}, when $\frac{m}{n} = n^{\Theta(1)}$, the amortized cost becomes $O(\log n)$ \cite{Kopelowitz12,Andersson89,GalperinR93}. These bounds are known to be tight for both deterministic and randomized algorithms~\cite{BulanekKoSa13, BabkaBCKS12, BabkaBCKS19}.

It has remained an open problem, however, what happens in the \defn{linear regime}, where $m = (1 + \epsilon) n$ for some $\epsilon = \Theta(1)$.  In 1981, Itai, Konheim, and Rodeh~\cite{ItaiKoRo81} showed how to achieve amortized cost $O(\log^2 n)$, and posed as an open question whether any algorithm could do better. Despite a great deal of subsequent work on alternative solutions (including deterministic, randomized, and deamortized algorithms) for the same problem~\cite{Willard82, Willard86, Willard92, BenderFiGi17, ItaiKa07, BenderHu07, BenderBeJo16, BenderCoDe02twosimplified, Katriel02}, the bound of $O(\log^2 n)$ has remained unimproved for four decades.

Starting in 1990, there has been a long line of work towards establishing a matching $\Omega(\log^2 n)$ lower bound~\cite{dietz1990lower, dietz1994tight, dietz2004tight, BulanekKoSa12, BulanekKoSa13}. It is known that any deterministic algorithm requires $\Omega(\log^2 n)$ amortized cost per insertion~\cite{BulanekKoSa12}. And the same lower bound holds for \defn{smooth} algorithms, where the relabelings are restricted to evenly rebalance elements across a contiguous subarray~\cite{dietz1990lower}. This second lower bound is surprisingly strong: it applies even to randomized algorithms and even to the offline problem, where the entire sequence of operations is known \textsl{a priori}.  However, the best general lower bound remains $\Omega(\log n)$~\cite{BulanekKoSa13}. 

These lower bounds tell us that, if an algorithm is to beat the $O(\log^2 n)$ bound, then the algorithm must be both randomized and non-smooth. Whether or not any such algorithm is possible has remained the central open question~\cite{ItaiKoRo81, dietz1990lower, dietz1994tight, dietz2004tight, BulanekKoSa12, BulanekKoSa13} in this research area (see also discussion of the problem in open-problem surveys and textbooks~\cite{Saks18, gal2021computational, mehlhorn2008algorithms}).  Several sources~\cite{dietz1990lower, dietz1994tight, dietz2004tight} have conjectured that $\Theta(\log^2 n)$ cost is optimal in general.

\paragraph{Breaking through the $\log^2 n$ barrier.} 
We present a randomized list-labeling algorithm that achieves expected cost $O(\log^{3/2} n)$ per insertion/deletion in the linear regime (Corollary \ref{thm:linearcase}). 
In breaking through the $\log^2 n$ barrier, we establish that there is a fundamental gap between deterministic and randomized algorithms for online list labeling. 
Our result is the first asymptotic improvement in the linear regime in the 40-year history of the problem. 

The original $O(\log^2 n)$ upper bound by Itai et al.~\cite{ItaiKoRo81} also extends to the \defn{dense regime} of $\epsilon = o(1)$, where the bound on amortized cost becomes $O(\epsilon^{-1} \log^2 n)$~\cite{AnderssonLa90, BirdSa07, zhang1993density}. Extending our algorithm to the same regime, we achieve expected cost $O(\epsilon^{-1} \log^{3/2} n)$ (Theorem \ref{thm:shiftedzeno}).

Applying our result to the insertion-only setting, the array can be filled from empty to full (i.e., $n = m$) in total expected time $O(n \log^{2.5} n)$ (Corollary \ref{cor:fill_array}). This improves over the previous state of the art of $O(n \log^{3} n)$, which was known to be optimal for deterministic algorithms~\cite{BulanekKoSa12}---again we have a separation between what can be achieved with deterministic and randomized algorithms. 

A surprising aspect of our results is how they contrast with the polynomial regime $m = n^{1 + \Theta(1)}$, where randomized and deterministic algorithms are asymptotically equivalent \cite{BulanekKoSa13, BabkaBCKS12, BabkaBCKS19}. Our final upper-bound result considers a continuum between these regimes, where $m = \omega(n) \cap n^{o(1)}$. In this \defn{sparse regime} there is a folklore bound~\cite{Kopelowitz12,Andersson89,GalperinR93} of $O\!\left(\frac{\log^2 n}{\log (m / n)}\right)$, which continuously deforms between $O(\log^2 n)$ for the linear regime and $O(\log n)$ for the polynomial regime. Using our techniques (Theorem \ref{thm:sparse}), we achieve expected cost $$O\!\left(\frac{\log^{3/2} n}{\sqrt{\log (m / n)}}\right).$$ 
Thus we achieve asymptotic improvements for list labeling for all $m = n^{1 + o(1)}$. 

\paragraph{An unexpected tool: history independence.} One research area that our algorithms build directly upon is the study of history-independent data structures: a data structure is said to be \defn{history independent}~\cite{Micciancio97, NaorTe01} if its current state reveals nothing about the history of the past operations beyond the current set of elements that are present.

History independence is typically viewed as a security guarantee, with the intent being to minimize the risk incurred by a security breach. Research on history-independent data structures~\cite{Micciancio97, NaorTe01, HartlineHoMo05, BuchbinderPe03, BlellochGo07, NaorSeWi08, Golovin09, Golovin10} (as well as on history-independent list labeling~\cite{BenderBeJo16} specifically) has focused on history independence as an \emph{end goal}, with the question being whether history independence can be achieved without any increase in running time.

We find that, in the context of list labeling, history independence is actually a valuable \emph{algorithmic tool} for building faster randomized data structures. History independence allows for us to have a data structure with vulnerabilities (i.e., certain spots where an insertion would be expensive) while (1) keeping those vulnerabilities hidden from the adversary; and (2) preventing the adversary from having any control over where those vulnerabilities appear. This simple paradigm plays an important role in allowing our randomized data structures to bypass the $\log^2 n$ barrier.

\paragraph{A matching lower bound for history-independent data structures.}
Finally, we show that our bounds in the dense regime are asymptotically optimal for any history-independent data structure: there exists a positive constant $c$ such that, for all $1 / n^{1/3} \le \epsilon \le 1/c$, the expected insertion/deletion cost when $m = (1 + \epsilon) n$ is necessarily at least $\Omega(\epsilon^{-1} \log^{3/2} n)$ for any history-independent data structure (Theorem \ref{thm:lower}).

This means that, if there exists a randomized data structure that achieves better bounds than those in this paper, then the data structure must fundamentally be \emph{adaptive} in how it responds to the history of the operations being performed. Of course, by being adaptive, such a data structure would also implicitly surrender the structural anonymity that history independence offers, revealing information about where the ``hotspots'' are within the data structure. Our results suggest that $\log^{3/2} n$ is a potentially fundamental barrier---whether or not the bounds achieved in this paper are optimal in general remains an enticing open problem.

\paragraph{Paper outline.}
The rest of this paper proceeds as follows. Section \ref{sec:prelims} gives preliminaries. Section \ref{sec:intuition} gives an intuitive overview of our upper bound and proof techniques. Sections \ref{sec:zeno} and \ref{sec:dense} present our upper bounds for the linear/dense regime. A key technical idea is to control the local density of the array via a random process that we call a Zeno random walk---we describe and analyze this random walk in Section \ref{sec:zeno}. Section~\ref{sec:dense} then gives our (history-independent) list-labeling data structure and uses the bounds on Zeno random walks to analyze it.  Section~\ref{sec:lower} presents our lower bound for history-independent list-labeling data structures.  Section~\ref{sec:sparse} gives a black-box reduction for transforming dense list-labeling solutions into sparse list-labeling solutions---this yields our upper bound for the sparse regime.  Finally, Section~\ref{sec:related} discusses related work in more detail.

\section{Preliminaries}\label{sec:prelims}

In this section, we formally define the list-labeling problem and history independence---we then outline the classical $O(\log^2 n)$ solution~\cite{ItaiKoRo81} and a more recent history-independent variation on that solution~\cite{BenderBeJo16}.

\paragraph{The list-labeling problem.}
A list-labeling data structure stores a dynamically changing set of size $n \le m$ in an array of $m$ slots. It supports two operations: 
\begin{itemize}
\item \textsc{Insert($r$), $r \in \{1, 2, \ldots, n + 1\}$:} This operation adds an element whose rank is $r$. This increments $n$ and also increments the ranks of each of the elements whose ranks were formerly in $\{r , \ldots, n\}$. 
\item \textsc{Delete($r$), $r \in \{1, 2, \ldots, n \}$:} This operation removes the element whose rank is $r$. This decrements $n$ and also decrements the ranks of each of the elements whose ranks were formally in $\{r + 1, \ldots, n\}$. 
\end{itemize}
The list-labeling algorithm must maintain the invariant that the elements appear in sorted order (by rank) within the array. The \defn{cost} of an insertion/deletion is the number of elements that are moved within the array during the insertion/deletion (including the element being inserted/deleted).
In the case where $n = \Omega(m)$, we will further guarantee (for our upper bounds) that the maximum gap between any two consecutive elements in the array is at most $ O (1) $ positions---this extra guarantee is often required for applications of list labeling in which algorithms perform range queries within the array, e.g., \cite{WheatmanX18, PandeyWXB21, WheatmanX21, BenderDuIa04, BenderDeFa05}. 

We will typically use an additional parameter $\epsilon$ such that either $n \le (1 - \epsilon) m$ or $m \ge (1 + \epsilon) n$ (the specific convention that we follow will differ from section to section to optimize for simplifying the algebraic manipulation in each section). 

From the perspective of the list-labeling data structure, the elements that it stores are black boxes---the only information that the data structure knows about its elements is their sorted order. This allows for list labeling to be used in applications where the elements are from arbitrary universes.

Finally, it is important to emphasize that the insertions/deletions are performed by an \defn{oblivious adversary}, who does not get to see the random decisions made by the list-labeling data structure. If the adversary were to be adaptive, then, trivially, no randomized list-labeling data structure could incur expected cost any better than the worst-case cost of the best deterministic list-labeling data structure.

\paragraph{History independence.}
A data structure is said to be  \defn{history independent} \cite{Micciancio97, NaorTe01, HartlineHoMo05, BuchbinderPe03, BlellochGo07, NaorSeWi08, Golovin09, Golovin10, BenderBeJo16} if, given access to the current state of the data structure, the only information that an adversary can deduce is the current set of elements; that is, the adversary gains no information about the history of operations performed. In the list-labeling data structure the current set of elements is specified only by their relative ranks, so the only information that an adversary can deduce is the \emph{number} of elements.

History independence plays an important supporting role throughout this paper. Indeed, although history independence does not on its own improve the asymptotics of list labeling, it does create a natural abstraction for how to separate the behavior of the data structure that we are designing from the actions of the user.

There are several basic mathematical properties of history independence that will be useful in both our upper and lower bounds. Define the \defn{array configuration} of a list-labeling data structure to be the boolean vector in $\{0, 1\}^m$ indicating which $n$ positions of the array contain elements. 
We have the following properties of a history-independent data structure for list-labeling:

\begin{property}\phantom{f}
\begin{enumerate}[noitemsep,label=(\alph*)]
\item Whenever the array contains $n$ elements, its array configuration $A$ satisfies $A \sim \mathcal{C}_{n, m}$, where $\mathcal{C}_{n, m}$ is some probability distribution over array configurations.
\item Whenever an insertion is performed at rank $r \in \{1, 2, \ldots, n + 1\}$ in an array with $n$ elements, the array configurations $A_0$ and $A_1$ before and after the insertion satisfy $(A_0, A_1) \sim \mathcal{I}_{n, m, r}$, where $\mathcal{I}_{n, m, r}$ is a joint distribution between $\mathcal{C}_{n, m}$ and $\mathcal{C}_{n + 1, m}$.\footnote{A probability distribution $\cal X$ is a joint distribution between distributions $\cal A$ and $\cal B$ if $(A, B) \sim {\cal X} \implies A \sim {\cal A}, \, B \sim {\cal B}$.}
\item Whenever a deletion is performed at rank $r \in \{1, 2, \ldots, n + 1\}$ in an array with $n + 1$ elements, the array configurations $A_1$ and $A_0$ before and after the deletion satisfy $(A_0, A_1) \sim \mathcal{D}_{n, m, r}$, where $\mathcal{D
}_{n, m, r}$ is a joint distribution between $\mathcal{C}_{n, m}$ and $\mathcal{C}_{n + 1, m}$.
\end{enumerate}
\end{property}

These properties imply that the (probability distribution on the) behavior of the algorithm on any given operation is fully determined by $n$, $m$, the operation (insertion or deletion), and  the rank $r$ of the element being inserted/deleted. In our upper bounds, we will further have that $\mathcal{D}_{n , m, r} = \mathcal{I}_{n, m, r}$; we call any list-labeling data structure with this property \defn{insertion/deletion symmetric}.

\subsection{The Classical Solution and its History-Independent Analogue}

\vspace{-.2 cm}

\paragraph{List labeling with weight-balanced trees.} The original solution to list labeling~\cite{ItaiKoRo81}, due to Itai et al.~\cite{ItaiKoRo81} in 1981, can be described in terms of weight-balanced trees~\cite{GalperinR93,NievergeltRe72,NievergeltRe73}. For brevity, we will describe the solution here for the linear regime, where $m = (1 + \Theta(1)) n$, but the same solution directly generalizes to all regimes from dense ($n = (1 - \epsilon) m$) to polynomial ($m = n^{1+\Theta(1)}$). 

Consider an array of size $m$, and impose a tree structure on it, where the root node represents the entire array, the nodes in the $i$-th level of the tree represent disjoint sub-arrays of size $m / 2^{i - 1}$, and the leaf nodes represent sub-arrays of size $\Theta(\log n)$. We keep the tree tightly weight balanced, meaning that, for any pair of sibling nodes $x$ and $y$, their densities are always within a $1 \pm O(1 / \log n)$ factor of each other. In particular, whenever an insertion or deletion breaks this invariant for some pair of siblings $x$ and $y$, we take the elements in the sub-array $x \cup y$ and rearrange them to be distributed evenly across that sub-array.\footnote{This approach is both deterministic and smooth, and thus consistent with the assumptions made by lower bounds~\cite{dietz1990lower, dietz1994tight, dietz2004tight, BabkaBCKS12}.}

This tight weight balancing ensures that \emph{all} of the nodes in the tree have densities that are within a factor of $(1 + O(1 / \log n))^{O(\log n)} = O(1)$ of each other. By selecting the constants in the algorithm appropriately, one can ensure that every leaf has more slots than it has elements, which guarantees the correctness of the data structure. On the other hand, in order to maintain such tight weight balancing, one must rebuild nodes a factor of $O(\log n)$ more often than in a standard weight-balanced binary search tree~\cite{GalperinR93,NievergeltRe72,NievergeltRe73}, leading to an amortized cost of $O(\log^2 n)$.

Intuitively, the above data structure would seem to be the asymptotically optimal approach to maintaining tightly-balanced densities within an array---the known lower bounds for list labeling~\cite{dietz1990lower, dietz1994tight, dietz2004tight, BulanekKoSa12} confirm that this is the case for both deterministic and smooth data structures. The upper bounds in this paper reveal that, perhaps surprisingly, it is not the case for randomized data structures. Randomization fundamentally reduces the cost to maintain a tightly weight-balanced tree.

\paragraph{History-independent list labeling.} 
To understand how history independence can be achieved in the context of list labeling, it is helpful to first understand it in the context of balanced binary search trees. The classic example of a balanced binary tree with a history-independent topology is the \defn{randomized binary search tree} \cite{AragonSe89, SeidelAr96} (or, similarly, the treap \cite{AragonSe89, SeidelAr96}), which maintains as an invariant that, at any given moment, the structure of the tree is random (i.e., that within each subtree, the root of that subtree is a random element). This can be achieved with reservoir sampling \cite{AragonSe89, SeidelAr96, vitter1985random, li1994reservoir, BenderBeJo16}---in particular, whenever a new item is added to a subtree of (former) size $r$, the element becomes the new root with probability $1 / (r + 1)$ (in which case the subtree is rebuilt from scratch). This simple approach yields an expected time of $O(\log n)$ per operation.

As shown by Bender et al. \cite{BenderBeJo16}, the same basic approach can be used to achieve history-independent list labeling. Now, the tree is random across all \emph{tightly balanced trees}---that is, within each subtree $T$ containing elements $x_1 < x_2 < \cdots < x_k$, the root is a random element $x_i$ of those satisfying $|i - k / 2| \in O(k / \log n)$. As before, this structure can be maintained using reservoir sampling. However, the restriction that the tree must be tightly balanced increases the frequency with which subtrees are rebuilt, so that the expected cost per operation becomes $O(\log^2 n)$, just as for the standard solution to list labeling. 

 \section{Technical Overview}\label{sec:intuition}

In this section, we present an intuitive overview of our upper bound and proof techniques. Comprehensive technical details can be found in 
Sections~\ref{sec:zeno} and~\ref{sec:dense}. For simplicity, we shall assume in this section that $m = 2n$.

Intuitively, our starting point is the history-independent list labeling solution by Bender, et al. \cite{BenderBeJo16}. As described in Section \ref{sec:prelims}, in \cite{BenderBeJo16}, the root of any subtree of size $k$ is a random element of the middle $O(k/\log n)$ elements of the subtree. We call this middle set of elements the \emph{candidate set}. 

A natural idea for decreasing the cost of this algorithm is to increase the size of the candidate set to $\delta k$ for some $\delta = \omega(1 / \log n)$. This way, the root would be resampled less often, resulting in fewer total rebalances. However, there is a problem with this approach: the subarrays representing the nodes in the $i$-th level of the tree have densities bounded between $\frac{1}{2} (1 - \delta)^i$ and $\frac{1}{2} (1 + \delta)^i$, but this means that nodes in the $\Theta(\log n)$-th level can overflow with a density of $\frac{1}{2} (1 + \delta)^{\Theta(\log n)} = \omega(1)$. Thus, having $\delta = \omega(1/ \log n)$ violates the correctness of the algorithm.

Notice, however, that \emph{most} nodes in the $i$-th level of the tree avoid a density of the form $\frac{1}{2} (1 + \delta)^i$. Indeed, if we were to perform a random walk down the tree, then the node that we encountered on our $i$-th step would likely have a density bounded above by $\frac{1}{2} (1 + \delta)^{O(\sqrt{i})}$. This means that, if we only wanted \emph{most nodes} to behave well, then we could set $\delta$ close to $\frac{1}{\sqrt{\log n}}$. 

In order to obtain the benefits of $\delta \approx 1 / \sqrt{\log n}$ while maintaining the correctness of $\delta \approx 1 / \log n$, we smoothly adjust the candidate set size for each subtree as a function of the subtree's density. We show that almost all subtrees are sparse enough to support a ``large'' candidate set ($\delta \approx 1 / \sqrt{\log n}$), while only a small fraction of subtrees require ``smaller'' candidate sets (with $\delta$ closer to $1 / \log n$). This means that most parts of the array support fast insertions/deletions, while only a small portion of the array is slow to insert/delete to.

While we have made progress by ensuring that most of the array can support fast updates, this is not sufficient to prove the final bound. Specifically, if the adversary \emph{knows} which parts of the array are slow to update, they could simply focus all of their insertions/deletions on these slow parts of the array, causing the total cost to be large. 
Instead, we would like to \emph{hide} the slow parts of the array from the adversary. 
More precisely, we are concerned about two distinct problems: the adversary could \emph{create} dense regions through their insertion sequence (e.g., by concentrating insertions in one location), or, the adversary could \emph{detect} dense regions created by the algorithm (e.g., through prior knowledge of the algorithm's distribution of states.)  

History independence comes into play in guarding against these problems. 
By definition, the first problem cannot happen with a history-independent algorithm, since the configuration of the array does not depend on the adversary's specific sequence of insertions. 
For the second problem, we add an additional layer of randomness called a \emph{random shift}. 
At the start of the algorithm, we insert random number $k \in [m]$ of dummy elements at the front of the array, and $m-k$ at the end.
This converts a potentially adversarial insertion at rank $j$ to a uniformly random insertion of rank between $j$ and $j+m$.
Together with history independence, the random shift ensures that the adversary cannot target specific regions of the array. 

To analyze our algorithm, we introduce the notion of a \emph{Zeno random walk}, which is a special type of bounded random walk where the step size decreases as the distance to a boundary decreases. The Zeno walk captures the way in which the densities of subproblems evolve if we perform a random walk down our tree. Our analysis of this random walk (Proposition~\ref{prop:zeno2}) allows us to bound the cost of a random insertion (Lemma~\ref{lem:randominsert}). 
Finally, we extend this analysis for a \emph{random} insertion to an \emph{arbitrary} insertion using the ideas outlined above of history independence and a random shift, achieving an expected $O(\log^{3/2} n)$ cost for any insertion/deletion.

\section{Zeno's Random Walk}\label{sec:zeno}
This section describes and analyzes a simple but somewhat unusual type of random walk that we will refer to as a \defn{Zeno walk}---this random walk will play an important algorithmic role in later sections.

Let $\delta \in (0, 1/2]$. A Zeno walk $Z_0, Z_1, Z_2, \ldots$ starts at $Z_0 = 0$ and deterministically satisfies $Z_i \in (-1, 1)$ for all $i$. We define $\alpha_i = 1 - |Z_i|$ to be the distance between $Z_i$ and the nearest boundary $1$ or $-1$. We determine $Z_{i + 1}$ from $Z_{i}$ as follows:
\begin{itemize}
\item An adaptive adversary selects a quantity $\delta_i \le \delta$, possibly as a function of $Z_0, Z_1, \ldots, Z_i$. 
\item $Z_{i + 1}$ is then set to be one of $Z_i + \alpha_{i} \delta_i$ or $Z_i - \alpha_i \delta_i$, each with equal probability. 
\end{itemize}

What makes the Zeno walk unusual is that, the closer it gets to $-1$ or $1$, the smaller its steps become (since the $i$-th step has its size multiplied by $\alpha_i$). The result is that (as in Zeno's paradox), the walk can get arbitrarily close to $\pm 1$ but can never reach $\pm 1$. 

We will be interested in Zeno walks $Z_1, \ldots, Z_\ell$ where the relationship between $\delta$ and the length $\ell$ of the walk is $\delta = O(1 / \sqrt{\ell})$. To gain some intuition here, consider the case where $\delta_i = \delta = 1 / \sqrt{\ell}$ for all $i$, and let us compare the Zeno walk $Z_1, \ldots, Z_\ell$ to a standard unbiased random walk $X_1, \ldots, X_{\ell}$ that changes by $\pm 1 / \sqrt{\ell}$ on each step. After $\ell$ steps, the random walk $X_1, \ldots, X_\ell$ deviates from the origin by $O(1)$ in \emph{expectation} (but could deviate by much more) and has the property that each step is \emph{deterministically} the same size. The Zeno walk does the complement of this: it deviates from the origin by at most $1$ \emph{deterministically}, but to do this it decreases the size of the $i$-th step by a factor of $1 / \alpha_i$. The key property that we will prove (Proposition \ref{prop:zeno}) is that, although the multiplier $1 / \alpha_i$ can potentially be large, the \emph{expected value} satisfies $O(1 / \alpha_i) = O(1)$ for $i \in [\ell]$. With this intuition in mind, we can now begin the analysis.

Define $Y_i := \ln (1 / (1 - Z_i))$. Rather than analyze the $Z_i$'s directly, we will instead analyze the $Y_i$'s. We will see that the sequence $Y_1, Y_2, \ldots$ behaves similarly to the standard random walk $X_1, X_2, \ldots$ that we described in the previous paragraph (except that (1) $Y_i$ is slightly biased and (2) $Y_i$ can never go below $\ln 0.5$). To make this more precise, the next lemma shows that the random walk $Y_1, Y_2, \ldots$ takes steps of size at most $O(\delta)$ and has bias at most $O(\delta^2)$ per step.

\begin{lemma}
For $i \ge 0$, we have that
\begin{equation}
    |Y_{i + 1} - Y_{i}| = O(\delta)
    \label{eq:sub1}
\end{equation}
deterministically, and that
\begin{equation}
    \Big| \E[Y_{i + 1} - Y_{i} \mid Y_1, \ldots, Y_i, \delta_i]\Big| = O(\delta^2).
    \label{eq:sub2}
\end{equation}
\label{lem:submar}
\end{lemma}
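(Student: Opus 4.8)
The plan is to compute everything in terms of the original walk $Z_i$, using the definition $Y_i = \ln(1/(1 - Z_i))$ and the update rule $Z_{i+1} = Z_i \pm \alpha_i \delta_i$ with $\alpha_i = 1 - |Z_i|$. The key observation is that the step taken by $Z$ has magnitude $\alpha_i \delta_i \le \alpha_i \delta$, and since $1 - Z_i \ge \alpha_i$ (in fact $1 - Z_i$ equals $\alpha_i$ when $Z_i \ge 0$ and equals $2 - \alpha_i \ge \alpha_i$ when $Z_i < 0$), the relative change in the quantity $1 - Z_i$ is bounded: $|Z_{i+1} - Z_i|/(1 - Z_i) \le \alpha_i \delta / (1 - Z_i) \le \delta$. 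Writing $1 - Z_{i+1} = (1 - Z_i)(1 + u_i)$ where $u_i = -(Z_{i+1} - Z_i)/(1 - Z_i)$ satisfies $|u_i| \le \delta \le 1/2$, we get $Y_{i+1} - Y_i = -\ln(1 + u_i)$. The deterministic bound \eqref{eq:sub1} is then immediate from $|\ln(1+u)| = O(|u|)$ for $|u| \le 1/2$, giving $|Y_{i+1} - Y_i| = O(\delta)$.

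For the bias bound \eqref{eq:sub2}, I would condition on $Z_1, \ldots, Z_i$ and $\delta_i$ (equivalently on $Y_1, \ldots, Y_i, \delta_i$), so that $\alpha_i, Z_i, \delta_i$ are all fixed and the only randomness is the $\pm$ sign. Then $u_i$ takes the two values $\pm \alpha_i \delta_i / (1 - Z_i)$, which are equal in magnitude; call this common magnitude $t_i$, so $t_i \le \delta$. Hence
\begin{equation*}
\E[Y_{i+1} - Y_i \mid \cdots] = -\tfrac12 \ln(1 + t_i) - \tfrac12 \ln(1 - t_i) = -\tfrac12 \ln(1 - t_i^2).
\end{equation*}
Since $0 \le t_i \le \delta \le 1/2$, we have $0 \le -\tfrac12\ln(1 - t_i^2) \le t_i^2 \le \delta^2$ (using $-\ln(1-x) \le 2x$ for $x \in [0, 1/2]$, or just the standard bound), which gives \eqref{eq:sub2}. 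Note this also reveals the sign of the bias: $Y$ drifts \emph{upward}, consistent with the remark in the text that $Y_i$ is ``slightly biased.''

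The main thing to be careful about — the only real obstacle, and it is a minor one — is handling the sign of $Z_i$ correctly when bounding $t_i \le \delta$. When $Z_i \ge 0$ we have $1 - Z_i = \alpha_i$ exactly, so $t_i = \delta_i \le \delta$ with equality possible. When $Z_i < 0$ we have $1 - Z_i = 2 - \alpha_i > \alpha_i$, so $t_i = \alpha_i \delta_i/(2 - \alpha_i) < \delta_i \le \delta$. Either way $t_i \le \delta \le 1/2$, so the logarithms are well-defined (the arguments $1 \pm t_i$ lie in $[1/2, 3/2]$) and the Taylor estimates apply with absolute constants. One should also note $Z_{i+1} \in (-1,1)$ is preserved: $|Z_{i+1}| \le |Z_i| + \alpha_i \delta_i < |Z_i| + \alpha_i = 1$, so $Y_{i+1}$ is well-defined, which is really the whole point of the Zeno construction. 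With these checks in place the lemma follows from elementary estimates on $\ln(1+x)$.
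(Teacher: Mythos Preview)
Your proof is correct and follows essentially the same approach as the paper: your quantity $t_i = \alpha_i\delta_i/(1-Z_i)$ is exactly the paper's $\gamma_i$, and both arguments reduce to the estimate $\ln(1/(1\pm\gamma_i)) = \pm\gamma_i + O(\gamma_i^2)$. Your version is slightly more explicit in computing the conditional expectation exactly as $-\tfrac12\ln(1-t_i^2)$ (which also pins down the sign of the drift), but the underlying idea is identical.
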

\begin{proof}
Define $$\gamma_i = \frac{\alpha_i \delta_i}{1 - Z_i}.$$
Note that, if $Z_i \ge 0$, then $\gamma_i = \delta_i$, and otherwise $\gamma_i < \delta_i$.  Since $Z_{i + 1} = Z_i \pm (1 - Z_i) \gamma_i$, we have that 
\begin{align*}
Y_{i + 1} & = \ln \left(\frac{1}{1 - Z_i \pm (1 - Z_i) \gamma_i}\right) \\
          & = \ln \left(\frac{1}{1 - Z_i} \cdot \frac{1}{1 \pm \gamma_i}\right) \\
          & = \ln \left(\frac{1}{1 - Z_i}\right)  + \ln \left(\frac{1}{1 \pm \gamma_i}\right) \\
          & = Y_i + \ln \left(\frac{1}{1 \pm \gamma_i}\right).
\end{align*}
By a Taylor approximation, we know that $\ln \left(\frac{1}{1 \pm \gamma_i}\right)$ is within $O(\gamma_i^2)$ of $\pm \gamma_i$. 
That is, $Y_{i + 1}$ can be computed from $Y_i$ by first adding $\pm \gamma_i$ at random to $Y_i$, and then adding/subtracting an additional $O(\gamma_i^2)$. We therefore have that
$$|Y_{i + 1} - Y_{i}| \le \gamma_i + O(\gamma_i^2) \le \delta_i + O(\delta_i^2) \le O(\delta)$$ 
and that 
$$\Big|\E[Y_{i + 1} - Y_{i} \mid Y_1, \ldots, Y_i, \gamma_i] \Big| \le O(\gamma_i^2) \le O(\delta_i^2) \le O(\delta^2).$$

\end{proof}

Using Lemma \ref{lem:submar}, we can now bound $\E[1 / \alpha_\ell]$ for the $\ell = O(1 / \delta^2)$-th step of a Zeno walk:
\begin{proposition}
For $\ell = O(1 / \delta^{2})$, we have $\E[1 / \alpha_\ell] = O(1)$. 
\label{prop:zeno}
\end{proposition}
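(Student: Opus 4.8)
The plan is to control $\E[1/\alpha_\ell]$ by analyzing the auxiliary walk $Y_i = \ln(1/(1-Z_i))$, for which Lemma~\ref{lem:submar} gives us bounded increments ($O(\delta)$ deterministically) and tiny bias ($O(\delta^2)$ per step). Note that $1-Z_\ell \le \alpha_\ell$, so $1/\alpha_\ell \le 1/(1-Z_\ell) = e^{Y_\ell}$; hence it suffices to prove $\E[e^{Y_\ell}] = O(1)$. (The other side, when $Z_\ell$ is close to $-1$, only makes $\alpha_\ell$ large and $1/\alpha_\ell$ small, so it is harmless — I would remark on this but not dwell on it, or alternatively note $\alpha_\ell = 1-|Z_\ell| \ge \tfrac12(1-Z_\ell^2) \ge \tfrac12 e^{-Y_\ell}$ directly.) So the whole proposition reduces to a moment-generating-function bound on a bounded-increment, low-bias random walk started at $Y_0 = 0$.

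The main step is an exponential supermartingale / MGF argument. Consider $M_i := e^{Y_i}$ and estimate $\E[M_{i+1} \mid \mathcal F_i]$. Writing $Y_{i+1} - Y_i = \pm\gamma_i + O(\gamma_i^2)$ with $\gamma_i \le \delta_i \le \delta$ as in the proof of Lemma~\ref{lem:submar}, we get
\[
\E[e^{Y_{i+1}-Y_i} \mid \mathcal F_i] = e^{O(\gamma_i^2)} \cdot \tfrac12\!\left(e^{\gamma_i} + e^{-\gamma_i}\right) = e^{O(\gamma_i^2)} \cosh(\gamma_i) \le e^{O(\gamma_i^2)} e^{\gamma_i^2/2} \le e^{O(\delta^2)},
\]
using $\cosh x \le e^{x^2/2}$. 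Therefore $\E[M_{i+1}\mid \mathcal F_i] \le e^{c\delta^2} M_i$ for an absolute constant $c$, and iterating from $M_0 = 1$ gives $\E[M_\ell] \le e^{c\delta^2 \ell}$. Since $\ell = O(1/\delta^2)$, this is $e^{O(1)} = O(1)$, which is exactly $\E[e^{Y_\ell}] = O(1)$ and hence $\E[1/\alpha_\ell] = O(1)$.

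I expect the only real subtlety — and the thing worth being careful about — is the interaction between the adaptive adversary and the expectation: the bound $\gamma_i \le \delta$ must hold pointwise regardless of the adversary's choice of $\delta_i$ (it does, since $\delta_i \le \delta$ and $\gamma_i \le \delta_i$ when $Z_i \ge 0$, while $\gamma_i < \delta_i$ when $Z_i < 0$), so the per-step estimate $\E[M_{i+1}\mid\mathcal F_i]\le e^{c\delta^2}M_i$ holds conditionally on $\mathcal F_i$ for every adversary strategy, and the tower property then gives the product bound with no further assumptions. A minor point is making the Taylor estimate $\ln(1/(1\pm\gamma_i)) = \pm\gamma_i + O(\gamma_i^2)$ uniform, which is fine since $\gamma_i \le \delta \le 1/2$ is bounded away from $1$. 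Everything else is a routine calculation, so I would state the $\cosh$ inequality, do the one-line conditional estimate, iterate, and conclude.
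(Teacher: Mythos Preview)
Your core MGF argument is correct and is actually a cleaner route than the paper's: the paper recenters $Y_i$ into a supermartingale $X_i = Y_i - ic\delta^2$, applies Azuma's inequality to get a subgaussian tail on $Y_\ell$, and then integrates the tail to bound $\E[1/(1-Z_\ell)]$. Your direct computation $\E[e^{Y_{i+1}-Y_i}\mid \mathcal F_i] \le e^{O(\delta^2)}$ and iteration to $\E[e^{Y_\ell}]\le e^{O(\delta^2\ell)}=O(1)$ reaches the same endpoint in one line. (In fact the exact identity $\E[e^{Y_{i+1}-Y_i}\mid\mathcal F_i] = \tfrac12(\tfrac{1}{1-\gamma_i}+\tfrac{1}{1+\gamma_i}) = \tfrac{1}{1-\gamma_i^2}$ makes the $\cosh$ detour unnecessary.)

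There is, however, a genuine gap in your reduction from $\E[1/\alpha_\ell]$ to $\E[e^{Y_\ell}]$. The inequality ``$1-Z_\ell \le \alpha_\ell$'' is false: since $|Z_\ell|\ge Z_\ell$, we always have $\alpha_\ell = 1-|Z_\ell| \le 1-Z_\ell$, so $1/\alpha_\ell \ge 1/(1-Z_\ell)$, the wrong direction. Your parenthetical fix is also wrong on both counts: when $Z_\ell$ is close to $-1$, $\alpha_\ell = 1-|Z_\ell|$ is close to $0$ (not large), and the chain $\alpha_\ell \ge \tfrac12(1-Z_\ell^2) \ge \tfrac12 e^{-Y_\ell}$ fails at the second step whenever $Z_\ell<0$, since $1-Z_\ell^2 \ge 1-Z_\ell$ is equivalent to $Z_\ell \ge 0$. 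In short, bounding $\E[e^{Y_\ell}]$ only controls $1/\alpha_\ell$ on the event $\{Z_\ell\ge 0\}$.

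The repair is easy and you should state it explicitly. Either invoke symmetry as the paper does (for any adversary strategy there is a mirrored strategy under which $-Z_\ell$ has the same law, so it suffices to bound $\E[\tfrac{1}{\alpha_\ell}\mathbb{I}_{Z_\ell\ge 0}]\le \E[\tfrac{1}{1-Z_\ell}]$), or simply observe that $1/\alpha_\ell \le \tfrac{1}{1-Z_\ell}+\tfrac{1}{1+Z_\ell}$ and run the identical MGF argument for $Y'_i=\ln(1/(1+Z_i))$ to bound the second term. With that correction, your proof is complete and arguably more elementary than the paper's Azuma-based argument.
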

\begin{proof}
By symmetry, it suffices to show that
$$\E[1 / \alpha_\ell \cdot \mathbb{I}_{Z_\ell \ge 0}] = O(1),$$
where $\mathbb{I}_{Z_\ell \ge 0}$ is 0-1 indicator random variable for the event $Z_{\ell} \ge 0$. 
Note that 
\begin{align*}
    \E[1 / \alpha_\ell \cdot \mathbb{I}_{Z_\ell \ge 0}] & = E[1 / (1 - Z_\ell) \cdot \mathbb{I}_{Z_\ell \ge 0}] \\
                                          & \le  E[1 / (1 - Z_\ell)],
\end{align*}
so we can complete the proof by showing that
\begin{equation}
\E[1 / (1 - Z_\ell)] = O(1).
\label{eq:1minz}
\end{equation}

Let $c$ be a sufficiently large positive constant and define the sequence $X_1, X_2, \ldots$, where
$$X_i = Y_i - i \cdot c \delta^2.$$
This means that $X_{i + 1} - X_i = Y_{i + 1} - Y_i - c \delta^2$, so we can think of the $X_i$'s as being a modification of the $Y_i$'s that eliminates any upward bias that the $Y_i$'s might have (recall by Lemma \ref{lem:submar} that the $Y_i$'s have bias at most $O(\delta^2)$). 

Formally, one can apply  Lemma \ref{lem:submar} to deduce that the $X_i$'s are a supermartingale with bounded differences of $O(\delta)$. That is, by \eqref{eq:sub2} we have  $\E[X_{i + 1} \mid X_1, \ldots, X_i] \le X_i$ (so the $X_i$'s form a supermartingale) and by \eqref{eq:sub1} we have $|X_{i + 1} - X_i| \le O(\delta)$ (so the martingale has bounded differences of $O(\delta)$). 

We can apply Azuma's inequality for supermartingales with bounded differences to deduce the following tail bound. For $k \ge 1$, we have
$$\Pr[X_i \ge \delta k \sqrt{i}] \le e^{-\Omega(k^2)}.$$ 
Unrolling the definition of $X_i$, we get that
$$\Pr[\ln (1 / (1 - Z_i)) \ge \delta k \sqrt{i} + i c \delta^2] \le e^{-\Omega(k^2)}.$$
Plugging in $i = \ell = O(1 / \delta^2)$, we conclude that
$$\Pr[\ln (1 / (1 - Z_\ell)) \ge \Omega(k)] \le e^{-\Omega(k^2)}.$$
This further simplifies to
$$\Pr\left[1 / (1 - Z_\ell) \ge e^{\Omega(k)}\right] \le e^{-\Omega(k^2)},$$
which implies \eqref{eq:1minz}, and completes the proof.
\end{proof}

We conclude the section by generalizing Zeno walks to take place in an arbitrary interval $(\lambda - \epsilon, \lambda + \epsilon)$. This works exactly as before, except that now the Zeno walk begins at $Z_0 = \lambda$; it deterministically stays in the interval $(\lambda - \epsilon, \lambda + \epsilon)$; it sets $\alpha_i = \epsilon - |Z_i - \lambda|$ to be the distance from $Z_i$ to the nearest boundary $\lambda - \epsilon $ or $\lambda + \epsilon $; and then $Z_{i + 1} = Z_i \pm \alpha_i \delta_i$ where $\delta_i \le \delta$ is selected by an adversary. Equivalently, a sequence $\{Z_i\}$ is a Zeno walk in the interval $(\lambda - \epsilon, \lambda + \epsilon)$ if $\{(Z_i -  \lambda) / \epsilon\}$ is a Zeno walk in $(-1, 1)$ (and the two Zeno walks have the same parameter $\delta$ as each other). Thus we get the following generalization of Proposition \ref{prop:zeno}. 
\begin{proposition}
Consider a Zeno walk in $(\lambda - \epsilon , \lambda + \epsilon )$. For $\ell = O(1 / \delta^{2})$, we have $\E[1 / \alpha_\ell] \le O(\epsilon^{-1})$. 
\label{prop:zeno2}
\end{proposition}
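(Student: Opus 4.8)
The plan is to reduce Proposition~\ref{prop:zeno2} directly to Proposition~\ref{prop:zeno} via the affine change of variables that was just described in the paragraph preceding the statement. First I would set $W_i := (Z_i - \lambda)/\epsilon$ and observe that, by construction, $\{W_i\}$ is a Zeno walk in $(-1,1)$ with the same parameter $\delta$: indeed $W_0 = 0$, the walk deterministically stays in $(-1,1)$, and if $\alpha_i^{(Z)} = \epsilon - |Z_i - \lambda|$ denotes the distance-to-boundary for the $Z$-walk while $\alpha_i^{(W)} = 1 - |W_i|$ denotes it for the $W$-walk, then $\alpha_i^{(Z)} = \epsilon \cdot \alpha_i^{(W)}$, and the adversary's steps $Z_{i+1} = Z_i \pm \alpha_i^{(Z)}\delta_i$ translate to $W_{i+1} = W_i \pm \alpha_i^{(W)}\delta_i$ with the same $\delta_i \le \delta$. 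So the admissible adversary strategies for the two walks are in bijection.

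Next I would simply push the expectation through this scaling. Since $\alpha_\ell^{(Z)} = \epsilon\,\alpha_\ell^{(W)}$, we have $1/\alpha_\ell^{(Z)} = \epsilon^{-1} \cdot 1/\alpha_\ell^{(W)}$, hence
\[
\E\!\left[1/\alpha_\ell^{(Z)}\right] = \epsilon^{-1}\,\E\!\left[1/\alpha_\ell^{(W)}\right].
\]
Proposition~\ref{prop:zeno} applied to the $W$-walk gives $\E[1/\alpha_\ell^{(W)}] = O(1)$ for $\ell = O(1/\delta^2)$, and multiplying by $\epsilon^{-1}$ yields $\E[1/\alpha_\ell^{(Z)}] = O(\epsilon^{-1})$, which is exactly the claim. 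The one point worth stating carefully is that the worst-case (over adversary strategies) value of $\E[1/\alpha_\ell]$ is what both propositions bound, and the bijection of strategies above ensures the supremum for the $Z$-walk is exactly $\epsilon^{-1}$ times the supremum for the $W$-walk — so no adversary in the scaled setting can do better than the bound inherited from the unscaled setting.

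I do not anticipate a genuine obstacle here: the entire content is the observation that Proposition~\ref{prop:zeno} was proved for a walk whose defining data (start point, step rule, boundary) is affinely equivalent to the general interval case, and that $1/\alpha_\ell$ is homogeneous of degree $-1$ under the scaling. If anything, the only mild subtlety is bookkeeping — making sure that "$\delta$" really is preserved (it is, since the scaling is by the interval half-width $\epsilon$, not by $\delta$) and that the $O(1/\delta^2)$ bound on $\ell$ is the same constant for both walks (it is, for the same reason). This is the kind of "reduction by rescaling" step that the text has essentially already written in prose; the formal proof is just making that one line precise.
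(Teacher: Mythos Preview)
Your proposal is correct and is exactly the approach the paper takes: the paper simply observes that $\{(Z_i-\lambda)/\epsilon\}$ is a Zeno walk in $(-1,1)$ with the same parameter $\delta$, and states Proposition~\ref{prop:zeno2} as an immediate generalization of Proposition~\ref{prop:zeno} without further argument. Your write-up is, if anything, more explicit than the paper's one-line justification.
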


\section{The Zeno Embedding: a Data Structure for $m \ge (1 + \epsilon) n$} \label{sec:dense}

In this section, we give a list-labeling solution for $m \ge (1 + \epsilon) n$ that achieves expected cost $O(\epsilon^{-1}\log^{3/2} n)$ per insertion and deletion. We will treat $m \in \mathbb{N}$ and $\epsilon \in (0, 1)$ as being fixed, and we will allow the number $n$ of elements to vary subject to the constraint that $m \ge (1 + \epsilon) n$.
We will also assume without loss of generality that $n$ is at least a sufficiently large positive constant.

We construct and analyze the data structure in three phases. First, we describe a certain type of static construction, which we call the \defn{Zeno embedding}, for how to embed $n$ elements into $m$ slots. Then we show how to dynamize the Zeno embedding in order to efficiently implement \emph{random} insertions/deletions.  Finally, we present one last modification to the Zeno embedding in order to implement arbitrary insertions/deletions efficiently.

\subsection{The Static Zeno Embedding}
The Zeno embedding treats the array as having a simple recursive structure: the \defn{level-0 subproblem} consists of the entire array; and the \defn{level-$i$ subproblems} each consist of either $\lfloor m / 2^i\rfloor$ or $\lceil m / 2^i\rceil$ contiguous slots in the array. 

Each level-$i$ subproblem $S$ is either a \defn{base case} (meaning it does not have child subproblems) or has two recursive children. If $S$ has $q \in \{\lfloor m / 2^i\rfloor, \lceil m / 2^i\rceil\}$ slots, then the children of $S$ have $\lfloor q/2\rfloor$ and $\lceil q / 2 \rceil$ slots, respectively. Here we are taking advantage of the basic mathematical fact that $$\{\lfloor \lfloor m / 2^i\rfloor/2 \rfloor, \lfloor\lceil m / 2^i\rceil /2 \rfloor, \lceil \lfloor m / 2^i\rfloor / 2 \rceil, \lceil \lceil m / 2^i\rceil / 2 \rceil\} \subseteq \{ \lfloor m / 2^{i + 1}\rfloor, \lceil m / 2^{i + 1}\rceil\}.$$

For each level-$i$ subproblem $S$, define $|S|$ to be the number of elements stored in that subproblem, and define the \defn{density} $\mu_S$ of the subproblem to be 
$$\mu_S = \frac{|S|}{n / 2^i}.$$

Note that in the definition of $\mu_S$, the denominator is the average number of elements per level-$i$ subproblem, which means that $\mu_S$ can be greater than 1. In fact, we will guarantee deterministically that $\mu_S \in [1 - \epsilon / 2, 1 + \epsilon / 2]$. The upper bound will ensure correctness (i.e., that no subproblem overflows), and the lower bound will ensure that every pair of consecutive elements are within $O(1)$ slots of each other.

We can now describe how to implement a given level-$i$ subproblem $S$. Define
$$\alpha_S = \epsilon/2 - |1 - \mu_s|$$
to be the distance between $\mu_S$ and the nearest boundary $\{1 - \epsilon / 2, 1 + \epsilon / 2\}$.  Let $x_1, \ldots, x_{|S|}$ denote the elements of $S$ in sorted order. Define the \defn{pivot candidate set} for $S$ to be
$$C_S = \left\{x_i \Bigm| \frac{|S|}{2} - \frac{n}{2^i} \cdot \frac{\alpha_S}{\sqrt{\log n}} \le i \le \frac{|S|}{2} +\frac{n}{2^i} \cdot \frac{\alpha_S}{\sqrt{\log n}} \right\}.$$
Roughly speaking, $C_S$ consists of the elements representing the middle $\Theta(\alpha_S / \sqrt{\log n})$-fraction of the subproblem.

If $|C_S| \le 4$, we declare $S$ to be a \defn{base case}, and we spread the elements of $S$ evenly across its slots. 
Otherwise, we define the \defn{pivot} $p_S$ for $S$ to be an element of $C_S$ chosen uniformly at random. The elements $x_i \le p_S$ are recursively placed in $S$'s left child, and the elements $x_i > p_S$ are recursively placed in $S$'s right child.

Later on, when we discuss the \emph{dynamic} Zeno embedding, we will see several ways that one can implement the random choice of $p_S$. For concreteness, we will mention one natural approach here: define $h_0, h_1, h_2, \ldots, h_{O(\log n)}$ to be an independent sequence of hash functions\footnote{Technically, our data structure does not necessarily have access to the internal values of elements, so it cannot compute a hash $h_i(x)$ of any given element. However, we can simulate a hash function $h_i$ by assigning each element $x$ a random value $h_i(x)$ when the element is inserted.} where each $h_i$ maps each element to a uniformly random real number in $[0, 1]$, and set
$$p_S = \operatorname{argmin}_{x \in C_S} h_i(x).$$

The key property of the Zeno embedding is that if we perform a random walk down the recursive tree, then the densities $\mu_S$ that we encounter form an $O(\log n)$-step Zeno walk in the interval $[1 - \epsilon / 2, 1 + \epsilon / 2]$:

\begin{lemma}
Fix any outcomes for the hash functions $h_0, h_1, h_2, \ldots$. Consider a random walk $S_0, S_1, S_2, \ldots, S_\ell$ down the recursion tree, where each $S_{i + 1}$ is a random child of $S_i$, and $S_\ell$ is a base-case subproblem. Then the sequence $\{\mu_{S_i}\}_{i = 1}^\ell$ is a Zeno walk on $[1 - \epsilon / 2, 1 + \epsilon / 2]$ with $\delta = O(1 / \sqrt{\log n})$.
\label{lem:zenointree}
\end{lemma}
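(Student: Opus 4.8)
The plan is to verify the two defining properties of a Zeno walk — the deterministic interval confinement and the correct step structure — for the sequence $\{\mu_{S_i}\}$. By the reduction at the end of Section~\ref{sec:zeno}, it suffices to check that $\{(\mu_{S_i} - 1)/(\epsilon/2)\}$ is a Zeno walk on $(-1,1)$ with $\delta = O(1/\sqrt{\log n})$, so I will track $\mu_{S_i}$ directly in the interval $[1-\epsilon/2, 1+\epsilon/2]$ with $\alpha_{S_i} = \epsilon/2 - |1 - \mu_{S_i}|$ playing the role of $\alpha_i$.

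\textbf{Step 1: the interval confinement.} This is exactly the deterministic invariant $\mu_S \in [1-\epsilon/2, 1+\epsilon/2]$ asserted for the static construction (though, strictly, one should note that the lemma fixes the hash outcomes and hence the embedding, so this is just the static guarantee). I would phrase it as: for every subproblem on the walk, $\mu_{S_i} \in [1-\epsilon/2,1+\epsilon/2]$, hence $\alpha_{S_i} \ge 0$, matching the requirement $Z_i \in (-1,1)$.

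\textbf{Step 2: the step recursion.} Fix $S_i$ a non-base-case level-$i$ subproblem with $n/2^i$ as its ``target'' element count, $|S_i|$ elements, pivot $p_{S_i}$ chosen uniformly from $C_{S_i}$. Let $S_{i+1}$ be a uniformly random child. I would compute $|S_{i+1}|$: if $S_{i+1}$ is the left child it gets the elements $\le p_{S_i}$, of which there are some number in the range $[|S_i|/2 - \frac{n}{2^i}\cdot\frac{\alpha_{S_i}}{\sqrt{\log n}},\ |S_i|/2 + \frac{n}{2^i}\cdot\frac{\alpha_{S_i}}{\sqrt{\log n}}]$ by the definition of $C_{S_i}$ (and symmetrically the right child gets the complement). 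Dividing by the target $n/2^{i+1}$ for level $i+1$, I get
\[
\mu_{S_{i+1}} = \mu_{S_i} \pm \frac{n/2^i}{n/2^{i+1}}\cdot\frac{1}{n/2^i}\cdot\frac{n}{2^i}\cdot\frac{\alpha_{S_i}}{\sqrt{\log n}}\cdot\theta = \mu_{S_i} \pm \alpha_{S_i}\cdot\frac{2\theta}{\sqrt{\log n}}
\]
for some $\theta \in [-1,1]$ depending on where $p_{S_i}$ landed in $C_{S_i}$; the $\pm$ (which child) is an independent fair coin. So setting $\delta_i := \frac{2|\theta|}{\sqrt{\log n}} \le \frac{2}{\sqrt{\log n}} = O(1/\sqrt{\log n})$ and observing the sign is $\pm$ with equal probability, the transition $\mu_{S_{i+1}} = \mu_{S_i} \pm \alpha_{S_i}\delta_i$ is exactly the Zeno step. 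I should also check $\ell = O(\log n)$, which follows because the subproblem size halves each level and base cases occur by size $\Theta(\log n)$ (or when $|C_S|\le 4$) — this gives the bound on the walk length but is not needed for the "is a Zeno walk" conclusion itself.

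\textbf{The main subtlety} is making precise that $\delta_i$ may legitimately be chosen ``adversarially'' in the Zeno-walk sense: here $\delta_i$ is determined by the (fixed) hash functions and the history $S_0,\dots,S_i$, since the pivot position within $C_{S_i}$ is a deterministic function of the hashes once the element set of $S_i$ is known, and that element set is determined by $S_0,\dots,S_i$ and the earlier pivots. So $\delta_i$ is a function of $(Z_0,\dots,Z_i)$ as the Zeno-walk definition allows — the only genuine randomness feeding into $Z_{i+1}$ given the past is the fair coin choosing which child $S_{i+1}$ is. I would be careful that the rounding in ``$\lfloor m/2^i\rfloor$ vs $\lceil m/2^i\rceil$''-style subproblem sizes does not disturb this: the densities $\mu_S$ are defined against the clean denominator $n/2^i$ rather than the actual slot count, so the floor/ceiling issues affect only slot counts (and thus correctness, handled in the static analysis) and not the $\mu$-recursion, which is exactly why the definition was set up this way. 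Assembling Steps 1–2 gives that $\{\mu_{S_i}\}_{i=1}^{\ell}$ satisfies the definition of a Zeno walk on $[1-\epsilon/2,1+\epsilon/2]$ with $\delta = O(1/\sqrt{\log n})$, which is the claim.
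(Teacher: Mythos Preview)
Your proposal is correct and follows essentially the same approach as the paper: both arguments verify that the two children of $S_i$ have densities symmetric about $\mu_{S_i}$ (so picking a random child gives $\mu_{S_{i+1}} = \mu_{S_i} \pm \beta_i$ with a fair coin) and that the deviation satisfies $\beta_i = O(\alpha_{S_i}/\sqrt{\log n})$, which is exactly the Zeno step with $\delta = O(1/\sqrt{\log n})$. Your treatment is slightly more explicit than the paper's about why $\delta_i$ is a legitimate ``adversarial'' choice (it depends only on the fixed hashes and the path $S_0,\dots,S_i$, with the fair coin for the child being the only fresh randomness) and about why the floor/ceiling slot counts do not disturb the $\mu$-recursion, but these are elaborations of the same proof rather than a different route.
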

\begin{proof}
Recall that a Zeno walk on $[1 - \epsilon / 2, 1 + \epsilon / 2]$ is any walk $Z_0, Z_1, \ldots$ that starts at $1$ and takes the following form: each step $Z_{i + 1} - Z_{i}$ is randomly $\pm \alpha_i \delta_i$ for some $\delta_i \le \delta$ (that may be chosen by an adversary) and where $\alpha_i =  \epsilon/2 - |1 - Z_i|$.  Or, equivalently, each step $Z_{i + 1} - Z_{i}$ is randomly $\pm \beta_i$ for some $\beta_i \le \delta \left( \epsilon/2 - |1 - Z_i| \right)$. 

Consider a non-base-case subproblem $S_i$, and let $A$ and $B$ be the child subproblems of $S_i$. By construction, 
$$\big||A| - |B|\big| = O\left(\frac{n}{2^i} \cdot \frac{\alpha_S}{\sqrt{\log n}}\right).$$
Since $|A| + |B| = |S_i|$, we have that $\mu_A + \mu_B = 2 \mu_{S_i}$ and 
$$|\mu_A - \mu_B| =  \frac{\big||A| - |B|\big|}{n / 2^{i + 1}} =  O\left(\frac{\alpha_{S_i}}{\sqrt{\log n}}\right).$$
Thus, since $S_{i + 1}$ is randomly one of $A$ or $B$, we have that $\mu_{S_{i + 1}}$ is randomly one of
$$\mu_{S_i} + \beta_i  \text{ or } \mu_{S_i} - \beta_i,$$
where 
$$\beta_i = |\mu_A - \mu_B| / 2 =  O\left(\frac{\alpha_{S_i}}{\sqrt{\log n}}\right) = O\left( \delta \left(\epsilon / 2 - |1 - \mu_s|\right) \right).$$
Thus the sequence $\{\mu_s\}$ is a Zeno walk on $[1 - \epsilon/2, 1 + \epsilon/2]$ with $\delta = O(1 / \sqrt{\log n})$. 

For clarity, we remark that the definition of the Zeno walk includes an adaptive adversary who chooses $\delta_i<\delta$. The adversary for the Zeno walk in this lemma simply chooses a pivot uniformly at random from the pivot candidate set, which determines $\delta_i$.
\end{proof}

The reason that Lemma \ref{lem:zenointree} is important is that it allows for us to bound the quantities $\alpha_S^{-1}$. Indeed, we use Proposition \ref{prop:zeno2} to prove the following inequality.
\begin{lemma}
Let $\mathcal{S}_i$ be the set of level-$i$ subproblems. Then
$$\frac{1}{2^i} \sum_{S \in \mathcal{S}_i} \alpha_S^{-1} = O(\epsilon^{-1}).$$
\label{lem:alpha}
\end{lemma}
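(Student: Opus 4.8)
The statement bounds a weighted sum of $\alpha_S^{-1}$ over all level-$i$ subproblems, and the natural way to produce such a bound is via an averaging/probabilistic argument: the quantity $\frac{1}{2^i}\sum_{S \in \mathcal{S}_i} \alpha_S^{-1}$ is (up to the relationship between the number of level-$i$ subproblems and $2^i$) precisely the expectation of $\alpha_{S_i}^{-1}$ when $S_i$ is the $i$-th subproblem encountered on a uniformly random root-to-leaf walk down the recursion tree. So the plan is: (1) express the sum as an expectation over a random walk down the tree; (2) invoke Lemma~\ref{lem:zenointree}, which says the densities $\mu_{S_1},\ldots,\mu_{S_i}$ along such a walk form a Zeno walk on $[1-\epsilon/2,\,1+\epsilon/2]$ with $\delta = O(1/\sqrt{\log n})$; (3) apply Proposition~\ref{prop:zeno2} with $\lambda = 1$, half-width $\epsilon/2$, and step count $i$, which gives $\E[1/\alpha_{S_i}] = O(\epsilon^{-1})$ as long as $i = O(1/\delta^2) = O(\log n)$.

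\textbf{Carrying it out.} First I would set up the random walk $S_0, S_1, \ldots$ exactly as in Lemma~\ref{lem:zenointree}: each $S_{j+1}$ is a uniformly random child of $S_j$. Since each non-base-case subproblem has exactly two children, the walk reaches a specific level-$i$ subproblem $S$ with probability $2^{-i}$ (here one must be slightly careful: subproblems have $\lfloor m/2^i\rfloor$ or $\lceil m/2^i\rceil$ slots, but the recursion splits each parent into exactly two children regardless, so the walk is an unbiased binary walk and each level-$i$ node is hit with probability exactly $2^{-i}$, unless the walk terminates at a base case before reaching level $i$). Then
\[
\E\!\left[\alpha_{S_i}^{-1} \;\middle|\; \text{walk reaches level } i\right] \cdot \Pr[\text{reach level } i] \;=\; \frac{1}{2^i}\sum_{S \in \mathcal{S}_i} \alpha_S^{-1},
\]
so $\frac{1}{2^i}\sum_{S\in\mathcal{S}_i}\alpha_S^{-1} \le \E[\alpha_{S_i}^{-1}]$ where we extend the walk conventionally (or just note the left side is at most the unconditioned expectation). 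Next, by Lemma~\ref{lem:zenointree} the sequence of densities along the walk is a Zeno walk on the interval of half-width $\epsilon/2$ centered at $1$ with $\delta = O(1/\sqrt{\log n})$; since the recursion tree has depth $\Theta(\log n)$, we have $i = O(\log n) = O(1/\delta^2)$, so Proposition~\ref{prop:zeno2} applies and yields $\E[1/\alpha_{S_i}] = O((\epsilon/2)^{-1}) = O(\epsilon^{-1})$, which is the claim.

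\textbf{Main obstacle.} The genuine subtlety is handling the base cases: the random walk $S_0, S_1, \ldots$ stops once it hits a base-case subproblem, which may occur before level $i$, so the walk does not define $S_i$ for every sample path, and Lemma~\ref{lem:zenointree} as stated concerns a walk that runs all the way to a base case of some length $\ell$ that may itself depend on the path. I would reconcile this by observing that the Zeno-walk property is ``prefix-closed'': any prefix of the density sequence is still a valid Zeno walk of the appropriate length, and Proposition~\ref{prop:zeno2} only needs $\ell = O(1/\delta^2)$, which holds for all levels $i \le \Theta(\log n)$ uniformly. One clean way to make this rigorous is to define an auxiliary process that continues the Zeno walk past base cases with $\delta_j = 0$ (so the density freezes), thereby extending every path to length exactly $\Theta(\log n)$; this leaves the relevant $\alpha$-values unchanged on the truncated paths, and Proposition~\ref{prop:zeno2} bounds the extended walk's $\E[1/\alpha_i]$, dominating the sum over $\mathcal{S}_i$. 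Beyond this bookkeeping, and the minor floor/ceiling check that each level-$i$ node is reached with probability exactly $2^{-i}$ conditioned on not having hit a base case, the argument is a direct two-line combination of the two cited results.
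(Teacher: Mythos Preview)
Your proposal is correct and follows essentially the same route as the paper: perform a uniformly random walk down the recursion tree, invoke Lemma~\ref{lem:zenointree} to identify the density sequence as a Zeno walk on $[1-\epsilon/2,\,1+\epsilon/2]$ with $\delta=O(1/\sqrt{\log n})$, apply Proposition~\ref{prop:zeno2} (using $i\le O(\log n)=O(1/\delta^2)$) to get $\E[\alpha_{S_i}^{-1}]=O(\epsilon^{-1})$, and then read off the sum via the fact that each level-$i$ node is reached with probability exactly $2^{-i}$. Your treatment of the early-termination issue---extending the walk past base cases with $\delta_j=0$ so the density freezes---is in fact a bit cleaner than the paper's device of setting $\overline{\alpha}_i=0$ beyond the base case, though both are driving at the same inequality $\frac{1}{2^i}\sum_{S\in\mathcal{S}_i}\alpha_S^{-1}\le \E[1/\alpha_i^{\text{extended}}]$.
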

\begin{proof}
Fix any outcomes for the hash functions $h_0, h_1, h_2, \ldots $. Consider a random walk $S_0, S_1, S_2, \ldots, S_\ell$ down the recursion tree, where each $S_{i + 1}$ is a random child of $S_i$, and $S_\ell$ is a base-case subproblem. Lemma \ref{lem:zenointree} tells us that $\{\mu_{S_i}\}_{i = 1}^\ell$ is a Zeno walk on $[1 - \epsilon / 2, 1 + \epsilon / 2]$ with $\delta = O(1 / \sqrt{\log n})$ (and, moreover, $\alpha_{S_i}$ corresponds to $\alpha_i$ in the Zeno walk). 

For $i \in [0, \log m]$, define $\overline{\alpha}_i$ to be $\alpha_{S_i}$ if $S_i$ exists and $0$ otherwise (i.e., if $i > \ell$). Proposition \ref{prop:zeno2} tells us that, for each $i \in [0, \log m]$,
\begin{equation}
    \E[1 / \overline{\alpha}_i] = O(\epsilon^{-1}).
    \label{eq:alpha0}
\end{equation}
On the other hand, each level-$i$ subproblem has probability exactly $1 / 2^i$ of being $S_i$. Thus
\begin{equation}
    \E[1 / \overline{\alpha}_i] = \frac{1}{2^i} \sum_{S \in \mathcal{S}_i} \alpha_S^{-1}.
    \label{eq:alpha1}
\end{equation}
Combined, \eqref{eq:alpha0} and \eqref{eq:alpha1} imply the lemma.
\end{proof}

It is interesting to note that, whereas Lemma \ref{lem:zenointree} is a statement about random walks, Lemma \ref{lem:alpha} is a \emph{deterministic} bound on the $\alpha_S^{-1}$s, even though it uses a probabilistic argument to derive the bound. 

Lastly, we also need to explicitly show that no subproblem ever overflows:

\begin{restatable}{lemma}{overflow}
Each level-$i$ subproblem $S$ satisfies $|S| \le \lfloor m / 2^i \rfloor$. 
\label{lem:overflow}
\end{restatable}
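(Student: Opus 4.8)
The plan is to show that the density $\mu_S$ of every level-$i$ subproblem satisfies $\mu_S \le 1 + \epsilon/2$, since then $|S| = \mu_S \cdot n/2^i \le (1 + \epsilon/2) n / 2^i$, and using $n \le m/(1+\epsilon)$ this is at most $\frac{1+\epsilon/2}{1+\epsilon} \cdot m/2^i < m/2^i$, which (after checking the floor bookkeeping) gives $|S| \le \lfloor m/2^i \rfloor$ because $|S|$ is an integer. So the real content is the claim that $\mu_S \in [1-\epsilon/2, 1+\epsilon/2]$ holds deterministically for every subproblem, which was asserted but not yet proved. I would establish this by induction on the level $i$.

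First, the base of the induction: the level-$0$ subproblem is the whole array, which holds all $n$ elements, so $\mu_{S_0} = 1 \in [1-\epsilon/2, 1+\epsilon/2]$. For the inductive step, suppose $S$ is a level-$i$ subproblem with $\mu_S \in [1-\epsilon/2, 1+\epsilon/2]$, and let $A, B$ be its two children (if $S$ is not a base case). The key point is that the pivot $p_S$ is chosen from the candidate set $C_S$, which by definition only contains elements $x_j$ with $|j - |S|/2| \le \frac{n}{2^i} \cdot \frac{\alpha_S}{\sqrt{\log n}}$. Hence $\big||A| - |B|\big| \le 2 \cdot \frac{n}{2^i} \cdot \frac{\alpha_S}{\sqrt{\log n}} + O(1)$, so, exactly as computed in the proof of Lemma~\ref{lem:zenointree}, $|\mu_A - \mu_B| = O\!\big(\alpha_S / \sqrt{\log n}\big)$, and since $\mu_A + \mu_B = 2\mu_S$, each child density lies within $\mu_S \pm O\!\big(\alpha_S/\sqrt{\log n}\big)$. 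Recalling $\alpha_S = \epsilon/2 - |1 - \mu_S|$ is precisely the distance from $\mu_S$ to the boundary of $[1-\epsilon/2, 1+\epsilon/2]$, a shift of size $O(\alpha_S / \sqrt{\log n}) \le \alpha_S$ (which holds once $n$ is larger than a suitable constant, absorbing the constant in the big-$O$) keeps $\mu_A, \mu_B$ inside the interval. This completes the induction.

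The one place that needs care — and which I expect to be the main obstacle — is the interaction between the real-valued density bound $\mu_S \le 1+\epsilon/2$ and the integer/floor constraint $|S| \le \lfloor m/2^i \rfloor$. One has to be careful that the slack $\frac{1+\epsilon/2}{1+\epsilon} < 1$ is enough to beat the rounding: concretely, $|S| \le (1+\epsilon/2)\,n/2^i \le (1+\epsilon/2)\,m/((1+\epsilon)2^i)$, and I want this to be $\le \lfloor m/2^i\rfloor$. Since $|S|$ is an integer it suffices that $(1+\epsilon/2)\,m/((1+\epsilon)2^i) \le \lceil m/2^i \rceil$ fails to force $|S|$ above $\lfloor m/2^i\rfloor$; the cleanest route is to note $(1+\epsilon/2)/(1+\epsilon) \le 1 - \epsilon/4$ for $\epsilon \in (0,1)$, so $|S| \le (1-\epsilon/4) m/2^i \le m/2^i - \epsilon m/(4\cdot 2^i)$, and as long as $\epsilon m / 2^i \ge 4$ — i.e. for levels that are not too deep — the gap exceeds $1$ and $|S| \le \lfloor m/2^i\rfloor$ follows immediately. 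For the deepest levels, where $m/2^i$ is $O(\epsilon^{-1})$, one instead appeals to the base-case rule: the recursion stops ($|C_S| \le 4$) while subproblems still have $\Omega(\sqrt{\log n})$ slots, so a subproblem is never subdivided down to a size where the rounding could bite, and the density bound $\mu_S \le 1+\epsilon/2$ together with integrality again gives $|S| \le \lfloor m/2^i \rfloor$ directly. I would state this split explicitly and verify the constant in the base-case threshold is chosen compatibly with the $\alpha_S \ge \Omega(\alpha_S/\sqrt{\log n})$ step above.
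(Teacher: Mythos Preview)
Your approach is essentially the same as the paper's: establish $\mu_S \le 1+\epsilon/2$ for every subproblem, then use the base-case cutoff to guarantee that subproblems never get small enough for the floor/ceiling rounding to cause trouble. Your explicit induction for the density bound is actually more thorough than the paper's appendix, which simply takes $\mu_S \le 1+\epsilon/2$ as already established.

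There is one point to tighten in your rounding discussion. You argue that the recursion stops while subproblems still have $\Omega(\sqrt{\log n})$ slots; this is true, but it is \emph{not} the quantity that controls the rounding. What you need is $\epsilon \cdot n/2^i = \Omega(1)$, and when $\epsilon = o(1/\sqrt{\log n})$ the bound $n/2^i = \Omega(\sqrt{\log n})$ does not deliver this. The paper extracts the needed inequality more directly: since $\alpha_S \le \epsilon/2$, one has the crude bound $|C_S| \le 2\alpha_S\, n/2^i \le \epsilon\, n/2^i$ (simply dropping the helpful $1/\sqrt{\log n}$ factor). Hence if the parent $S'$ at level $i-1$ is not a base case, $|C_{S'}| > 4$ forces $\epsilon\, n/2^{i-1} \ge 4$, so every subproblem (base case or not) lives at a level with $\epsilon\, n/2^i \ge 2$. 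Then
\[
\Big\lfloor \tfrac{m}{2^i}\Big\rfloor - |S| \;\ge\; \tfrac{m}{2^i} - 1 - \mu_S \tfrac{n}{2^i} \;\ge\; \tfrac{(1+\epsilon)n - (1+\epsilon/2)n}{2^i} - 1 \;=\; \tfrac{\epsilon n}{2\cdot 2^i} - 1 \;\ge\; 0,
\]
with no case split needed. Once you replace your $\Omega(\sqrt{\log n})$-slots observation with this $|C_S|\le \epsilon n/2^i$ bound, your proposal matches the paper's proof exactly.
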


This lemma is a technicality that is essentially immediate from the fact that each subproblem $S$ has density $\mu_S \le 1 + \epsilon / 2$. The only difficulty in the proof comes from the necessity to carefully handle floors/ceilings. We defer the proof to Appendix \ref{app:over}.

\subsection{Dynamizing the Zeno Embedding}

We now describe a dynamic version of the Zeno embedding; we will treat $m$ and $\epsilon$ as fixed, and allow $n$ to vary subject to the constraint that $n \ge (1 + \epsilon) m$. 

We note that, in this section we will focus on analyzing \emph{random} insertions/deletions, that is, an insertion/deletion that is performed at a random rank (in an array with arbitrary contents). Our solution will be history independent, and we will see in the next subsection that this allows the random-rank assumption to be removed.

\paragraph{Implementing insertions and deletions.} To implement an insertion/deletion in the Zeno embedding, we simply update the embedding to account for the element being added/removed. More concretely, we can implement an insertion/deletion of an element $x$ as follows. We will describe the process recursively, focusing on how to insert/delete $x$ into a given level-$i$ recursive subproblem $S$. The insertion/deletion of $x$ may change the values of $\mu_S, \alpha_S, C_S,$ and $p_S$. Note that the values of $C_s$ and $p_S$ can change regardless of whether the insertion/deletion of $x$ takes place in the candidate set. If it changes the pivot $p_S$, or if $S$ is a base-case, then we implement the insertion/deletion by rebuilding the entire subproblem from scratch, incurring a cost of $O(n / 2^i)$. Otherwise, we recursively insert/delete $x$ into either the left child (if $x \le p_S$) or the right child (if $x > p_S$). Once the insertion/deletion is complete, the Zeno embedding will be the same as if it were constructed from scratch on the current set of elements.

As described in the static Zeno embedding, there are multiple ways to implement randomly choosing a pivot. One way is to use the hash functions $h_i$ described in the previous subsection. This means that a level-$i$ subproblem $S$ being inserted/deleted into gets rebuilt if $\operatorname{argmin}_i \{h_i(x) \mid x \in C_S\}$ is changed by the insertion/deletion. We note that, in this construction, the hash functions are fixed at the very beginning and are never resampled (even when subproblems are rebuilt). 

Another way to implement the random choice of pivot is to use reservoir sampling \cite{AragonSe89, SeidelAr96, vitter1985random, li1994reservoir, BenderBeJo16}. This means that, when a subproblem is first built (or rebuilt), it picks a random $x \in C_S$ to be the pivot; whenever an element $x$ is added to $C_S$, it has probability $1 / |C_S \cup \{x\}|$ of becoming the pivot; and whenever an element $x$ is removed from $C_S$, if $x$ was the pivot, then a random element in $C_S \setminus \{x\}$ is chosen as the new pivot. Like the hashing method, reservoir sampling maintains as an invariant that each candidate in $C_S$ is equally likely to be the pivot. 

Each of the two methods (hashing and reservoir sampling) have their own benefits: reservoir sampling can be used to immediately obtain an algorithm in the RAM-model that has the same asymptotic running time as its list-labeling cost, while hashing, on the other hand, ensures that the embedding is deterministic after fixing the hash functions. In our formal arguments, we use the hash function method, but this can easily be replaced with reservoir sampling.

\paragraph{Analyzing a random insertion/deletion.} To begin analyzing the dynamic Zeno embedding, we observe that, by construction, the dynamic Zeno embedding is insertion/deletion symmetric and  history independent.

\begin{observation}
The dynamic Zeno embedding is insertion/deletion symmetric and history independent.
\end{observation}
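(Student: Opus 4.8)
The statement to prove is the Observation that the dynamic Zeno embedding is insertion/deletion symmetric and history independent.

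\textbf{Proof plan.} The plan is to verify the two properties directly from the construction, using the fact that the final state of the data structure after any sequence of operations coincides with the static Zeno embedding built from scratch on the current element set, relative to a single fixed sequence of hash functions $h_0, h_1, \ldots$. The key structural observation, already flagged in the paragraph ``Implementing insertions and deletions,'' is that after processing an insertion or deletion, ``the Zeno embedding will be the same as if it were constructed from scratch on the current set of elements.'' So the first step is to make this invariant precise and prove it by induction on the sequence of operations: assuming the current embedding equals the static embedding on the current set $X$ (with respect to the fixed hashes), show that inserting or deleting an element $x$ and running the recursive rebuild procedure yields exactly the static embedding on $X \cup \{x\}$ (resp. $X \setminus \{x\}$). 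This follows because the static embedding is a deterministic function of the element set and the hashes: each subproblem's slot count, density $\mu_S$, boundary distance $\alpha_S$, candidate set $C_S$, and pivot $p_S = \operatorname{argmin}_{x \in C_S} h_i(x)$ are all determined by which elements fall in $S$ and by the hashes; the recursive procedure rebuilds exactly those subproblems whose pivot changes (or base cases), which are precisely the subproblems where the static embedding differs.

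\textbf{History independence.} Given the invariant, history independence is immediate. Fix the number of elements $n$; the current set of elements is, from the data structure's perspective, just a set of $n$ ranks, so the static embedding is a deterministic function $F(n, h_0, h_1, \ldots)$ of $n$ and the hash functions. The hash functions are sampled once at the beginning and never resampled, so their distribution is independent of the operation history. Hence the array configuration is distributed as $F(n, \cdot)$ applied to the fixed hash distribution, which depends only on $n$ — this is exactly Property~(a), and the joint distributions across an insertion/deletion (Properties~(b),(c)) are likewise determined by coupling the before/after states through the same hash functions. So define $\mathcal{C}_{n,m}$ to be the law of $F(n, h_0, h_1, \ldots)$; this witnesses history independence.

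\textbf{Insertion/deletion symmetry.} For this we must show $\mathcal{D}_{n,m,r} = \mathcal{I}_{n,m,r}$, i.e., the joint law of (before, after) array configurations for an insertion at rank $r$ taking $n \to n+1$ equals that for a deletion at rank $r$ taking $n+1 \to n$ (with the roles of the two configurations matched appropriately). This again follows from the ``built from scratch'' invariant: both operations result in a pair of configurations $(A_0, A_1)$ where $A_0$ is the static embedding on the $n$-element set and $A_1$ is the static embedding on the corresponding $(n+1)$-element set, coupled through the shared hash functions $h_0, h_1, \ldots$. Since the static embedding is the same deterministic function of the element set and hashes in both cases, and the element sets before/after are related in exactly the same way (the $(n+1)$-set is the $n$-set plus an element at rank $r$), the two joint distributions coincide. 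I would write this as: the map from an $n$-set $X$ and an element $x$ of rank $r$ to the pair $(F(X, h), F(X \cup \{x\}, h))$ is a single deterministic function, and both $\mathcal{I}_{n,m,r}$ and $\mathcal{D}_{n,m,r}$ are its pushforward under the hash distribution.

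\textbf{Main obstacle.} The only real content is the inductive ``equals the static embedding'' invariant, and the only subtlety there is bookkeeping: one must check that the recursive rebuild touches \emph{exactly} the subproblems that differ between the old and new static embeddings — in particular that when a subproblem's pivot is unchanged, its two children's element sets are also unchanged except for the single inserted/deleted element, so the recursion correctly descends into just one child. This is routine but is where care is needed. Everything else (independence of the hashes from history, pushforward arguments) is formal. I expect this proof to be short, a few lines in the paper, precisely because the heavy lifting was front-loaded into the description of the dynamic construction.
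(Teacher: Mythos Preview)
Your proposal is correct and follows essentially the same reasoning the paper relies on: the paper states this as an Observation with no proof, because it is immediate from the invariant (already flagged in the ``Implementing insertions and deletions'' paragraph) that after any operation the embedding coincides with the static embedding built from scratch on the current element set using the fixed hashes. Your unpacking of history independence and insertion/deletion symmetry via the deterministic map $F(\text{element set}, h_0, h_1, \ldots)$ is exactly the intended argument, and your identification of the inductive invariant as the only non-formal step is accurate.
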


Due to the insertion/deletion symmetry, the expected cost of a random insertion on an array with $n$ elements is the same as the expected cost of a random deletion on an array with $n+1$ elements. Thus we need only analyze the expected cost of a random deletion.

We will analyze the probability that the deletion of an element $x$ causes the rebuild of a subproblem. More precisely, we say that a subproblem $S$ is \defn{rebuilt} if the pivot of $S$ changes, while the pivots of all of the ancestors of $S$ do not change.

Next, we will prove that, if we delete an element $x$, and $S$ is the level-$i$ subproblem that contains $x$, then the probability that $S$ is rebuilt is $O(|C_S|^{-1})$.
\begin{lemma}
If an element $x$ is deleted from a subproblem $ S $, then $S$ is rebuilt with probability
$$O\left(|C_S|^{-1}\right).$$
\label{lem:rebuildprob}
\end{lemma}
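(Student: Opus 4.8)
The plan is to analyze the event that deleting $x$ changes the pivot $p_S$, conditioned on the fact that $x$ lies in $S$. First I would set up the right conditioning: $S$ is rebuilt (in the sense defined, ignoring the ancestor condition, which only makes the event less likely) exactly when $p_S = \operatorname{argmin}_{y \in C_S} h_i(y)$ changes as a result of removing $x$ and updating the candidate set from $C_S$ to $C_S'$ (the candidate set of the subproblem after the deletion). There are two sources of change: (1) $x$ itself could have been the pivot, and (2) the candidate set itself shifts — because $|S|$ decreases by one and because $\alpha_S$ may change — so elements can enter or leave the window even if $x$ was not the pivot. I would bound the probability of each source separately by $O(|C_S|^{-1})$ and union bound.

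For source (1): since $h_i$ assigns i.i.d.\ uniform values to the elements of $C_S$, each element of $C_S$ is equally likely to be the argmin, so $\Pr[p_S = x \mid x \in C_S] = 1/|C_S|$, and if $x \notin C_S$ then source (1) contributes nothing. For source (2): the key point is that deleting one element changes $|S|$ by $1$ and changes $\alpha_S$ by $O(1/(n/2^i))$ (since $\mu_S$ moves by $1/(n/2^i)$), so the endpoints of the candidate window each move by $O(1 + \frac{n}{2^i}\cdot\frac{1}{\sqrt{\log n}}\cdot\frac{1}{n/2^i}) = O(1)$ positions in rank-space; hence the symmetric difference $C_S \triangle C_S'$ has size $O(1)$. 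The pivot changes due to a window shift only if the current argmin is one of the $O(1)$ elements being evicted, or a newly inserted element beats the incumbent. Conditioned on being in $C_S$ (or $C_S'$), each such element is the relevant extremal element with probability $O(1/|C_S|)$ by the same exchangeability of the hash values, so this contributes $O(|C_S|^{-1})$ as well. Summing the two sources gives the claim.

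The main obstacle I expect is the bookkeeping for source (2): one must argue carefully that the window really only shifts by $O(1)$ slots — this requires controlling how $\alpha_S$ (and hence the half-width $\frac{n}{2^i}\cdot\frac{\alpha_S}{\sqrt{\log n}}$) responds to a unit change in $|S|$, and handling the floor/ceiling issues in the definition of $C_S$ — and that the probability a pivot change is forced by evicting/inserting $O(1)$ elements is $O(|C_S|^{-1})$ rather than, say, $O(|C_S \triangle C_S'|/|C_S|)$ with a non-constant numerator. The cleanest way to handle the latter is to observe that after the deletion the hash values on $C_S'$ are still i.i.d.\ uniform, so $\Pr[p_S \neq p_S']$ is at most the probability that $\operatorname{argmin}$ over $C_S$ or over $C_S'$ lands in $C_S \triangle C_S' \cup \{x\}$, a set of size $O(1)$, which by exchangeability is $O(1)/\min(|C_S|,|C_S'|) = O(|C_S|^{-1})$ (using $|C_S'| = \Theta(|C_S|)$ since $S$ is not a base case, so $|C_S| \ge 5$ and the width changes only by a bounded additive amount). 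A secondary subtlety worth a sentence: reservoir sampling and the hash-function implementation are distributionally identical here, so it suffices to argue in the hash model, as the paper does elsewhere.
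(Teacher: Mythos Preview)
Your proposal is correct and follows essentially the same approach as the paper: bound the symmetric difference $C_S \triangle C_S'$ by $O(1)$ (using that a unit change in $|S|$ perturbs $\alpha_S$ by $O(2^i/n)$ and hence each endpoint of the candidate window by $O(1)$ ranks), and then use exchangeability of the hash values to conclude that adding or removing $O(1)$ elements changes the argmin with probability $O(1/|C_S|)$. The only organizational difference is that you split off ``$x$ was the pivot'' as a separate source, whereas the paper folds the removal of $x$ into its three-step decomposition of the transformation $C_S \to \overline{C}_S$; both are fine.
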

\begin{proof}
If $S$ is a base-case subproblem, either before or after the deletion, then $|C_S| = O(1)$, and the lemma is trivial. Now, suppose $S$ is not a base-case subproblem. 

Let $C_S$ denote the pivot candidate set prior to the deletion of $x$, and let $\overline{C}_S$ denote the pivot candidate set after the deletion. Each time that we add/remove an element to/from $C_S$, the probability that $p_S = \operatorname{argmin}_{x \in C_S} h_i(x)$ changes is $\Theta(1 / |C_S|)$. It therefore suffices to show that $C_S$ and $\overline{C}_S$ have a symmetric difference of at most $O(1)$ elements. 

We can think of the transformation of $ C_S $ into $\overline{C}_S$ as taking place in three steps. First we update
$$\alpha_S = \epsilon/2 - \left|1 - \frac{|S|}{n / 2^i}\right|$$
to become
$$\alpha_S = \epsilon/2 - \left|1 - \frac{|S|-1}{n / 2^i}\right|.$$
This changes $\alpha_S$ by at most $\pm \frac{1}{n / 2^i}$, which changes the set
\begin{equation}
    C_S = \left\{x_i \Bigm| \frac{|S|}{2} - \frac{n}{2^i} \cdot \frac{\alpha_S}{\sqrt{\log n}} \le i \le \frac{|S|}{2} +\frac{n}{2^i} \cdot \frac{\alpha_S}{\sqrt{\log n}} \right\}
    \label{eq:C}
\end{equation}
by at most $O(1)$ elements. Second, we replace $|S|$ in \eqref{eq:C} with $|S| - 1$. This again changes the set $C_S$ by at most $O(1)$ elements. Third, we remove the element $x$; if $x = x_j$ for some $j$, then the removal of $x$ has the effect of decrementing the index of each $x_i$ with $i \ge j$. This again changes $C_S$ by at most $O(1)$ elements. 

Combined, the three steps complete the transformation of $C_S$ into $\overline{C}_S$, meaning that $\overline{C}_S$ and $C_S$ have a symmetric difference of $O(1)$ elements, as desired.
\end{proof}

Lemma \ref{lem:rebuildprob} immediately implies a bound on the expected cost incurred from rebuilding $S$.
\begin{lemma}
If an element $x$ is deleted from a level-$i$ subproblem $ S $, the expected cost incurred from possibly rebuilding $S$ is
$$O\left(\frac{n/2^i}{|C_S|}\right).$$
\label{lem:rebuildcost}
\end{lemma}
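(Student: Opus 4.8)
The plan is to combine Lemma~\ref{lem:rebuildprob} with the trivial observation that rebuilding a level-$i$ subproblem costs $O(n/2^i)$, since a level-$i$ subproblem has $\Theta(n/2^i)$ elements (its density $\mu_S$ is bounded by $1+\epsilon/2 = O(1)$, so $|S| = O(n/2^i)$; and the cost of rebuilding is linear in the number of elements being redistributed). So the argument is essentially one line of conditional expectation.

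Concretely, I would write: let $R$ denote the (random) event that $S$ is rebuilt, and let $\mathrm{cost}(S)$ denote the cost charged to $S$ during the deletion of $x$. If $S$ is not rebuilt, no cost is charged to $S$ (the cost is instead charged to a proper descendant, which is handled separately). If $S$ is rebuilt, the cost charged to $S$ is the cost of redistributing all $|S|$ elements of $S$ across its slots, which is $O(|S|) = O(n/2^i)$, using Lemma~\ref{lem:overflow} (or just the density bound) to conclude $|S| \le \lfloor m/2^i \rfloor = O(n/2^i)$ since $m = O(n)$ here as $m \ge (1+\epsilon)n$ gives $n = \Theta(m)$ for constant $\epsilon$ — more carefully $m/2^i \le (1+\epsilon/2)\cdot\text{something}$; the cleanest bound is $|S| \le (1+\epsilon/2)\, n/2^i = O(n/2^i)$ directly from $\mu_S \le 1+\epsilon/2$. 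Hence
$$
\E[\mathrm{cost}(S)] = \Pr[R] \cdot O(n/2^i) = O\!\left(\frac{1}{|C_S|}\right) \cdot O(n/2^i) = O\!\left(\frac{n/2^i}{|C_S|}\right),
$$
where the middle step is exactly Lemma~\ref{lem:rebuildprob}. One subtlety worth flagging: $|C_S|$ itself may change during the deletion (the relevant quantity in Lemma~\ref{lem:rebuildprob} is the pre-deletion candidate set), but since $C_S$ and $\overline C_S$ differ by only $O(1)$ elements (as established in the proof of Lemma~\ref{lem:rebuildprob}), using either value of $|C_S|$ only affects the bound by a constant factor when $|C_S| = \omega(1)$, and when $|C_S| = O(1)$ the subproblem is a base case and the bound is immediate anyway; so the statement is robust to which $C_S$ is meant.

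There is really no main obstacle here — this lemma is a purely bookkeeping consequence of the preceding lemma, and the phrase ``immediately implies'' in the excerpt signals exactly that. The only thing to be careful about is being precise that the $O(n/2^i)$ rebuild cost is a deterministic upper bound (not just an expected one) on the cost incurred conditioned on $S$ being rebuilt, so that the product of expectations / probabilities is valid; this follows because a level-$i$ subproblem has a fixed slot count $\Theta(n/2^i)$ and deterministically bounded occupancy. I would keep the proof to two or three sentences.
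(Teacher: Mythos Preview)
Your proposal is correct and matches the paper's approach exactly: the paper's proof is literally two sentences, observing that a rebuild of $S$ costs $\Theta(n/2^i)$ and then invoking Lemma~\ref{lem:rebuildprob}. Your additional remarks about the density bound and the robustness of $|C_S|$ are fine elaborations but not needed for the paper's level of detail.
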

\begin{proof}
A rebuild of $S$ costs $\Theta(n / 2^i)$. Thus the lemma follows from Lemma \ref{lem:rebuildprob}. 
\end{proof}

Observe that, by design, $$\frac{n/2^i}{|C_S|} = O(\alpha_S^{-1} \sqrt{\log n}).$$ This is where Lemma \ref{lem:alpha} comes into play: it tells us that even though $\frac{n / 2^i}{|C_S|}$ may be large for some subproblems $S$, it cannot be consistently large across all subproblems. Using this, we can analyze the expected cost to delete a random element. 

\begin{lemma}
The expected cost to delete a random element $x$ from the Zeno embedding is $O(\epsilon^{-1}\log^{3/2} n)$.
\label{lem:randomdelete}
\end{lemma}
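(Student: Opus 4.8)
The plan is to charge the total expected cost of a random deletion to the sum, over all levels $i$, of the expected rebuild cost at that level, and then sum the per-level bounds using Lemma~\ref{lem:alpha}. First I would observe that when we delete a random element $x$, the subproblems that can possibly be rebuilt are exactly the subproblems $S_0, S_1, \ldots, S_\ell$ on the root-to-base-case path containing $x$, one per level $i \in \{0, 1, \ldots, \ell\}$ with $\ell = O(\log n)$ (by Lemma~\ref{lem:overflow} and the fact that base cases occur at depth $O(\log n)$). By linearity of expectation, the expected total cost is $\sum_{i=0}^{O(\log n)} \E[\text{cost of possibly rebuilding the level-}i\text{ subproblem containing }x]$.

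The key step is to evaluate $\E_x[\text{cost of possibly rebuilding the level-}i\text{ subproblem containing }x]$ for a fixed level $i$. Here the randomness is over both the uniformly random rank of $x$ and the hash functions. Conditioning on the hash functions (so that the embedding, hence every $C_S$ and $\alpha_S$, is fixed), a uniformly random element $x$ lands in a given level-$i$ subproblem $S$ with probability $|S|/n = \mu_S / 2^i$. Given that $x \in S$, Lemma~\ref{lem:rebuildcost} says the expected rebuild cost is $O\!\left(\frac{n/2^i}{|C_S|}\right)$. So the contribution of level $i$ is
\begin{equation*}
\sum_{S \in \mathcal{S}_i} \frac{\mu_S}{2^i} \cdot O\!\left(\frac{n/2^i}{|C_S|}\right) = O\!\left(\frac{1}{2^i}\sum_{S \in \mathcal{S}_i} \mu_S \cdot \frac{n/2^i}{|C_S|}\right).
\end{equation*}
Now I would use $\mu_S = O(1)$ (since $\mu_S \le 1 + \epsilon/2 \le 3/2$) and the design identity $\frac{n/2^i}{|C_S|} = O(\alpha_S^{-1}\sqrt{\log n})$ noted just before the lemma, to bound this by $O\!\left(\sqrt{\log n} \cdot \frac{1}{2^i}\sum_{S \in \mathcal{S}_i} \alpha_S^{-1}\right) = O(\epsilon^{-1}\sqrt{\log n})$ by Lemma~\ref{lem:alpha}. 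Since this holds for the hash functions fixed, it also holds in expectation over the hash functions.

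Finally, summing over the $O(\log n)$ levels gives $O(\log n) \cdot O(\epsilon^{-1}\sqrt{\log n}) = O(\epsilon^{-1}\log^{3/2} n)$, as claimed. The main obstacle I anticipate is bookkeeping the two sources of randomness cleanly: one must be careful that Lemma~\ref{lem:alpha} is a statement for \emph{fixed} hash outcomes (which the excerpt emphasizes), so the argument should fix the hashes, carry out the averaging over the random rank of $x$ entirely within that conditioning, and only invoke the (deterministic, post-hashing) bound of Lemma~\ref{lem:alpha} at the end — then take expectations over the hashes trivially since the bound is uniform. A minor additional point to handle is that for a level-$i$ subproblem that is a base case (or whose child would be), the $|C_S| \le 4$ case of Lemma~\ref{lem:rebuildprob} must be invoked, but this is already subsumed in Lemma~\ref{lem:rebuildcost}; and one should note that levels beyond the base-case depth contribute nothing, keeping the number of summed levels at $O(\log n)$.
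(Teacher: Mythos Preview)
Your proposal is correct and follows essentially the same route as the paper's proof: decompose the expected cost by level, combine $\Pr[x \in S] = \Theta(1/2^i)$ with Lemma~\ref{lem:rebuildcost} and the design identity $\frac{n/2^i}{|C_S|} = O(\alpha_S^{-1}\sqrt{\log n})$, apply Lemma~\ref{lem:alpha} to get $O(\epsilon^{-1}\sqrt{\log n})$ per level, and sum over $O(\log n)$ levels. The only refinement to your bookkeeping remark is that the clean way to reconcile the two randomness sources is to condition on $h_0,\ldots,h_{i-1}$ (which already fixes $\mathcal{S}_i$, each $C_S$, and each $\alpha_S$) rather than on all hashes, so that Lemma~\ref{lem:rebuildcost} (an expectation over $h_i$) and the deterministic bound of Lemma~\ref{lem:alpha} can both be invoked as stated.
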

\begin{proof}
Let $\mathcal{S}_i$ denote the set of level-$i$ subproblems (prior to the deletion). Each $S \in \mathcal{S}_i$ contains
$\Theta(n / 2^i)$ elements, so 
$$\Pr[x \in S] = \Theta\left(\frac{1}{2^i}\right).$$
If $x \in S$, then we have by Lemma \ref{lem:rebuildcost} that $S$ incurs expected rebuild cost
$$O\left(\frac{n / 2^i}{|C_S|}\right) = O(\alpha_S^{-1} \sqrt{\log n}).$$
The expected cost from rebuilds in the $i$-th level of recursion is therefore at most
$$O\left(\sum_{S \in \mathcal{S}_i} \frac{1}{2^i} \cdot \alpha_S^{-1} \sqrt{\log n}\right),$$
which by Lemma \ref{lem:alpha} is at most
$$O\left(\epsilon^{-1} \sqrt{\log n}\right).$$
Summing over the $O(\log n)$ levels of recursion, the total expected cost of the deletion is $O(\epsilon^{-1}\log^{3/2} n)$.
\end{proof}

Due to the previously described symmetry between insertions and deletions, the same lemma is true for insertions.
\begin{lemma}
The expected cost to insert an element $x$ with a random rank in $\{1, 2, \ldots, n + 1\}$ into the Zeno embedding is $O(\epsilon^{-1}\log^{3/2} n)$.
\label{lem:randominsert}
\end{lemma}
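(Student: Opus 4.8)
The plan is to obtain this as an immediate consequence of Lemma~\ref{lem:randomdelete} together with the insertion/deletion symmetry recorded in the Observation that the dynamic Zeno embedding is insertion/deletion symmetric and history independent. Fix the hash functions $h_0, h_1, \ldots$ once and for all; then the embedding is a deterministic function of the current set of elements, so "insert at rank $r$" carries the embedding on $n$ elements to the embedding on $n+1$ elements, while "delete the rank-$r$ element" carries the embedding on $n+1$ elements back to the embedding on $n$ elements. These two array rearrangements are inverses of one another, so they relocate exactly the same set of elements and hence incur exactly the same cost, pointwise over the choice of hashes. (This is where insertion/deletion symmetry is used rather than mere history independence: it is what guarantees the forward and backward operations are inverse rearrangements.)

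The one point that deserves a sentence of care is the claim that "number of elements moved" is invariant under reversing an operation. The cost of an operation is one (for the inserted/deleted element) plus the number of the other elements whose array slot changes; since a bijective rearrangement of the array moves a given slot's occupant iff its inverse does, and the set of "other elements" is the same for the insertion and its reverse deletion, the two costs coincide. Consequently, for each fixed $r \in \{1, 2, \ldots, n+1\}$, the expected cost (over the hashes) of inserting at rank $r$ into an $n$-element embedding equals the expected cost of deleting the rank-$r$ element from the corresponding $(n+1)$-element embedding.

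It then remains only to average over $r$. Taking $r$ uniform in $\{1, 2, \ldots, n+1\}$, the left-hand side is precisely the expected cost of a random-rank insertion into an $n$-element embedding, and the right-hand side is the expected cost of deleting a random element from an $(n+1)$-element embedding, which by Lemma~\ref{lem:randomdelete} (applied with $n$ replaced by $n+1$) is $O(\epsilon^{-1}\log^{3/2}(n+1)) = O(\epsilon^{-1}\log^{3/2} n)$, using that $n$ is at least a sufficiently large constant. One should also note the trivial fact that $m \ge (1+\epsilon)(n+1)$, which holds exactly because the insertion under analysis is legal, so the $(n+1)$-element embedding to which Lemma~\ref{lem:randomdelete} is applied is well-defined. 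There is no substantive obstacle here; the lemma is genuinely just the symmetric restatement of Lemma~\ref{lem:randomdelete}.
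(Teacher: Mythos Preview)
Your proposal is correct and takes essentially the same approach as the paper: the paper simply states that ``due to the previously described symmetry between insertions and deletions, the same lemma is true for insertions,'' invoking the earlier remark that insertion/deletion symmetry makes the expected cost of a random insertion on $n$ elements equal to that of a random deletion on $n+1$ elements. You have just written out the one-line symmetry argument in full, including the pointwise (over hash functions) equality of costs for inverse rearrangements, which the paper leaves implicit.
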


\subsection{Achieving a Bound on Arbitrary Insertions/Deletions.}
So far, we have only analyzed random insertions/deletions. At first glance, this may seem like an insignificant accomplishment. (Indeed, it is already known that random insertions/deletions can be supported in $O(\epsilon^{-1})$ amortized time per operation~\cite{DBLP:journals/mst/BenderFM06}.) 

What makes the Zeno embedding special is that it is history independent. We will now show how to reduce the list-labeling problem (with \emph{arbitrary} insertions/deletions) to the problem of constructing an insertion/deletion-symmetric history-independent embedding that supports efficient \emph{random} insertions/deletions. 

Within any history-independent data structure, the expected cost to perform a deletion at rank $r$ on an array of size $m$ containing $n$ elements can be expressed by a \defn{cost function} $T(m,n,r)$ only dependent on $m$, $n$ and $r$. Moreover, if the data structure is insertion/deletion symmetric, then the same cost function $T$ expresses the expected cost for an insertion; specifically, the expected cost to perform an insertion at rank $r$ on an array of size $m$ containing $n$ elements is $T(m,n-1,r)$.

To reduce from the arbitrary insertion/deletion case to the random insertion/deletion case, we will show that given any (insertion/deletion-symmetric) history-independent algorithm $\mathcal{A}$ with cost function $T(m,n,r)$, we can construct a history-independent algorithm $\mathcal{B}$ with cost function $T'(m,n,r)$ such that for each individual rank $r$, the cost $T'(m,n,r)$ is upper bounded by the \emph{average} of the costs $T(m,n,r)$ across all ranks (up to constant factors).

\begin{lemma}
Suppose there is an insertion/deletion-symmetric history-independent algorithm $\mathcal{A}$ whose cost is determined by a function $T(m, n, r)$. Then we can construct a new insertion/deletion-symmetric history-independent algorithm $\mathcal{B}$ with cost function $T'(m, n, r)$ satisfying
$$T'(m, n, r) = O\left(\frac{1}{m + 1}  \sum_{j = 1}^{2m} T(2m, m + n, j)\right)$$
for all $r$. 
\label{lem:randreduction}
\label{lem}
\end{lemma}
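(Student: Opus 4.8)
The idea is to build $\mathcal{B}$ from $\mathcal{A}$ by the ``random shift'' trick sketched in Section~\ref{sec:intuition}: $\mathcal{B}$ runs a copy of $\mathcal{A}$ on an array of size $2m$ holding $m+n$ elements, where $m$ of these are \emph{dummy} elements and $n$ are the real ones. To set this up, at initialization $\mathcal{B}$ samples $k$ uniformly from $\{0,1,\dots,m\}$ (the shift), places $k$ dummies as the smallest elements and $m-k$ dummies as the largest, and places the $n$ real elements in between in sorted order. When the user asks $\mathcal{B}$ to \textsc{Insert}$(r)$ with $r\in\{1,\dots,n+1\}$, $\mathcal{B}$ asks $\mathcal{A}$ to insert a real element at rank $k+r$ among $\mathcal{A}$'s current $m+n$ elements; the cost charged to $\mathcal{B}$ is exactly the cost $\mathcal{A}$ incurs. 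Deletions are handled symmetrically: \textsc{Delete}$(r)$ on $\mathcal{B}$ becomes a deletion at rank $k+r$ in $\mathcal{A}$. (Dummies are never inserted or deleted after setup; this is why the dummy count stays fixed at $m$ and $\mathcal{A}$ always holds exactly $m+n$ elements when $\mathcal{B}$ holds $n$.)

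\textbf{Key steps.} First I would verify that $\mathcal{B}$ is a legal list-labeling data structure: the real elements occupy a contiguous block of ranks $\{k+1,\dots,k+n\}$ inside $\mathcal{A}$, so reading off $\mathcal{A}$'s array configuration restricted to the slots holding real elements gives a valid labeling of $\mathcal{B}$'s elements in sorted order, and the per-operation cost of $\mathcal{B}$ is dominated by that of $\mathcal{A}$ (we can ignore the $O(1)$ bookkeeping for translating ranks). Second, I would argue history independence of $\mathcal{B}$: the state of $\mathcal{B}$ is the pair (value of $k$, state of $\mathcal{A}$). Since $\mathcal{A}$ is history independent, conditioned on $k$ the state of $\mathcal{A}$ depends only on which of its $m+n$ ranks are ``real'' (i.e., on $k$) and on $n$ — not on the operation order — and $k$ itself is sampled once and never touched, so the whole state of $\mathcal{B}$ is a function of $n$ and fresh randomness only. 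Third — the crux — I would compute $T'(m,n,r)$. Because $\mathcal{A}$ is history independent with cost function $T$, a \textsc{Delete} at rank $k+r$ in $\mathcal{A}$ (which has $m+n$ elements) costs in expectation $T(2m,\,m+n-1,\,k+r)$, wait — being careful with the off-by-one convention in the lemma statement, an insertion at rank $r'$ into an $N$-element instance costs $T(M,N,r')$ per the problem's indexing; I'd fix the convention once and then average. Since $k$ is uniform on $\{0,\dots,m\}$ and independent of everything, the expected cost of $\mathcal{B}$'s operation at rank $r$ is
\[
T'(m,n,r) \;=\; \frac{1}{m+1}\sum_{k=0}^{m} T\big(2m,\,m+n,\,k+r\big)\;\le\;\frac{1}{m+1}\sum_{j=1}^{2m} T\big(2m,\,m+n,\,j\big),
\]
where the inequality just drops to a full sum over the range $\{1,\dots,2m\}$ (valid since $1\le k+r\le m+(n+1)\le 2m$ using $m\ge(1+\epsilon)n$, so all terms $T(2m,m+n,k+r)$ appear among the $T(2m,m+n,j)$ and $T\ge 0$). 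That is exactly the claimed bound, uniformly in $r$.

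\textbf{Main obstacle.} The genuinely delicate points are (i) pinning down the rank-indexing conventions so the ``$m+n$ vs.\ $m+n-1$'' and ``$r$ vs.\ $k+r$'' bookkeeping is airtight — in particular confirming that every translated rank $k+r$ lies in the legal range for $\mathcal{A}$, which needs $m\ge(1+\epsilon)n$ so that $m+n\le 2m$ and the $2m$-slot array is never over-full; and (ii) making the history-independence argument fully rigorous, i.e., checking that Property~(a)--(c) for $\mathcal{B}$ follow from those for $\mathcal{A}$ together with the fact that $k$ is drawn once at the start — one must observe that $\mathcal{B}$'s distribution on states $\mathcal{C}'_{n,2m}$ is the mixture over $k$ of $\mathcal{A}$'s conditional state distributions, and that insertions/deletions act on this mixture coordinate-wise in $k$, so symmetry and the joint-distribution properties are inherited. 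Everything else (correctness of the labeling, the $O(1)$ overhead, dropping the partial sum to a full sum) is routine. I do not expect to need any property of $\mathcal{A}$ beyond history independence and insertion/deletion symmetry, and the resulting $\mathcal{B}$ is manifestly insertion/deletion symmetric since $\mathcal{A}$ is and the shift treats insertions and deletions identically.
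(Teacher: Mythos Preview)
Your proposal is correct and follows essentially the same approach as the paper: both construct $\mathcal{B}$ by running $\mathcal{A}$ on a size-$2m$ array with $m$ dummy elements split by a uniformly random shift $k\in\{0,\dots,m\}$ (the paper calls it $q$), translate an operation at rank $r$ in $\mathcal{B}$ to rank $k+r$ in $\mathcal{A}$, and then average $T(2m,m+n,k+r)$ over $k$ and bound by the full sum $\sum_{j=1}^{2m}T(2m,m+n,j)$. The only minor points are that you should state explicitly (as the paper does) that $\mathcal{B}$'s size-$m$ array is identified with the contiguous subarray of $\mathcal{A}$'s array lying between the two blocks of dummies, and that the needed inequality $k+r\le 2m$ follows from the list-labeling constraint $n\le m$ alone rather than from $m\ge(1+\epsilon)n$.
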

\begin{proof}
Fix a history-independent algorithm $\mathcal{A}$. We will construct a history-independent algorithm $\mathcal{B}$. We will describe the behavior of the algorithm $\mathcal{B}$ on an array of size $m$ with an arbitrary sequence $\mathcal{S}$ of insertions/deletions. 

To do so, we will construct from $\mathcal{S}$ an input to $\mathcal{A}$. The input to $\mathcal{A}$ is an array of size $2m$ with the following insertion/deletion sequence. First we insert $m$ dummy elements as follows. Let $q$ be a uniformly random integer in $[0, m]$. Insert $q$ dummy elements that are treated as taking infinitely small values (i.e., $-\infty$), and insert $m-q$ dummy elements that are treated as taking
 infinitely large values (i.e., $\infty$). Now, execute the sequence $\mathcal{S}$.
 
 Now, define $\mathcal{B}$ as the algorithm that behaves identically to $\mathcal{A}$ on $\mathcal{A}$'s subarray $[q,q+m]$  (that is, $\mathcal{A}$'s subarray from the $q^{th}$ slot to the $q+m^{th}$ slot), ignoring the dummy elements. That is, for all $i$, after the $i^{th}$ insertion from $\mathcal{S}$, the subarray $[q,q+m]$ of $\mathcal{A}$'s array with the dummy elements removed, is identical to $\mathcal{B}$'s array.
 
 We note that $\mathcal{B}$ is well defined in the sense that all elements of $\mathcal{S}$ always appear in $\mathcal{A}$'s subarray $[q,q+m]$. This is simply due to the existence of the dummy elements in $\mathcal{A}$'s array.

Now let us bound the expected cost $T'(m,n,r)$ for $\mathcal{B}$ to perform a deletion at rank $r$. This corresponds to a deletion at rank $r + q$ in $\mathcal{A}$, which has cost $T(2m, m + n, r + q)$. Notice, however, that $r + q$ is a random element in $\{r, r + 1, \ldots, r + m\}$. Thus,$$T'(m,n,r)=\frac{1}{m + 1} \sum_{j = r}^{r + m} T(2m, m + n, j),$$
which in turn is at most
$$O\left(\frac{1}{m + 1}  \sum_{j = 1}^{2m} T(2m, m + n, j)\right).$$
\end{proof}

In the case where $\mathcal{A}$ is the Zeno embedding, we refer to $\mathcal{B}$ as the \defn{shifted Zeno embedding}. Now, we are ready to put everything together and prove our main theorem, that the shifted Zeno embedding incurs expected cost $O(\epsilon^{-1} \log^{3/2} n)$ per insertion/deletion.

\begin{theorem}
Let $\epsilon \in (0, 1)$, and suppose $m \ge (1 + \epsilon) n$, where $m$ is a static value while $n$ changes dynamically. The shifted Zeno embedding on an array of size $m$ with $n$ elements incurs expected cost $O(\epsilon^{-1} \log^{3/2} n)$ per insertion/deletion.
\label{thm:shiftedzeno}
\end{theorem}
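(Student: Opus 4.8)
The plan is to observe that Theorem~\ref{thm:shiftedzeno} is an assembly of three facts already in hand: the Observation that the dynamic Zeno embedding is insertion/deletion-symmetric and history-independent (this is exactly the hypothesis of Lemma~\ref{lem:randreduction}); Lemma~\ref{lem:randominsert}/Lemma~\ref{lem:randomdelete}, which bound the cost of a \emph{random-rank} operation on a Zeno embedding; and Lemma~\ref{lem:randreduction}, which says that the algorithm $\mathcal{B}$ obtained from an insertion/deletion-symmetric history-independent $\mathcal{A}$ has per-rank cost at most a constant times the rank-averaged cost of $\mathcal{A}$. By definition the shifted Zeno embedding is precisely the output of Lemma~\ref{lem:randreduction} applied to $\mathcal{A} =$ the dynamic Zeno embedding on a $2m$-slot array, so all that remains is to track parameters through the composition.

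Concretely, I would let $T(\cdot,\cdot,\cdot)$ be the cost function of the Zeno embedding on $2m$ slots and $T'$ the cost function of $\mathcal{B}$ on $m$ slots; Lemma~\ref{lem:randreduction} gives $T'(m,n,r) = O\!\big(\tfrac{1}{m+1}\sum_{j=1}^{2m} T(2m,m+n,j)\big)$ for every $r$. Up to the factor $2m/(m+1)=\Theta(1)$, the right-hand side is the expected cost of a deletion at a uniformly random rank in a $2m$-slot array holding $m+n$ elements, so I want to invoke Lemma~\ref{lem:randomdelete} on that array. This requires a density bound $2m \ge (1+\epsilon')(m+n)$: from $m \ge (1+\epsilon)n$ we get $2m-(m+n) = m-n \ge \tfrac{\epsilon}{1+\epsilon}m \ge \tfrac{\epsilon}{2(1+\epsilon)}(m+n)$, using $m \ge (m+n)/2$, so the bound holds with $\epsilon' = \tfrac{\epsilon}{2(1+\epsilon)} = \Theta(\epsilon)$, and this holds at every point of the dynamic process, so $\mathcal{A}$ never overflows. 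Lemma~\ref{lem:randomdelete} then yields $\tfrac{1}{2m}\sum_{j=1}^{2m}T(2m,m+n,j) = O((\epsilon')^{-1}\log^{3/2}(m+n)) = O(\epsilon^{-1}\log^{3/2} n)$, hence $T'(m,n,r) = O(\epsilon^{-1}\log^{3/2} n)$ for all $r$. Finally, since $\mathcal{B}$ is insertion/deletion-symmetric, an insertion at rank $r$ on an $m$-slot array with $n$ elements costs $T'(m,n-1,r) = O(\epsilon^{-1}\log^{3/2} n)$ as well, and the $O(1)$-max-gap guarantee for $\mathcal{B}$ is inherited from that of $\mathcal{A}$ restricted to the contiguous subarray $[q,q+m]$ that $\mathcal{B}$ mirrors.

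I do not expect a real obstacle, since this is a bookkeeping step, but the one point that genuinely needs care is the parameter translation $\epsilon \mapsto \epsilon'$: one must verify that padding up to $2m$ slots with $m$ dummy elements leaves effective slack $\Theta(\epsilon)$, not a quantity that degrades as $\epsilon \to 0$, since otherwise the $\epsilon^{-1}$ in the final bound would be wrong. The computation above shows the slack is indeed $\Theta(\epsilon)$, precisely because the original $m$-slot array already carried $\Omega(\epsilon m)$ empty slots. A secondary bookkeeping point is that $\log(m+n) = O(\log n)$, which is immediate in the linear/dense regime $m = \Theta(n)$ (and more generally whenever $m = n^{O(1)}$; otherwise one simply replaces $\log n$ by $\log m$ throughout).
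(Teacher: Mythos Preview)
Your proposal is correct and follows essentially the same approach as the paper: apply Lemma~\ref{lem:randreduction} to the Zeno embedding, then recognize the resulting sum as (a constant times) the expected cost of a random-rank operation and invoke Lemma~\ref{lem:randomdelete}/\ref{lem:randominsert}. In fact you are more careful than the paper on one point---you explicitly verify that the doubled array with $m$ dummies still satisfies a density constraint $2m \ge (1+\epsilon')(m+n)$ with $\epsilon' = \Theta(\epsilon)$, which the paper's proof leaves implicit when it directly quotes the $O(\epsilon^{-1}\log^{3/2} n)$ bound.
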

\begin{proof}
Let $T(m,n,r)$ be the cost function associated with the Zeno embedding, and let $T'(m,n,r)$ be the cost function associated with the shifted Zeno embedding. From Lemma \ref{lem:randreduction}, we know that \begin{equation} T'(m, n, r) = O\left(\frac{1}{m + 1}  \sum_{j = 1}^{2m} T(2m, m + n, j)\right).\end{equation}\label{eqn:av}The right side of Equation \ref{eqn:av} is within a constant factor of the average value of $T(2m, m + n, j)$ over all ranks $j$. Thus, it is within a constant factor of the expected value of $T(2m, m + n, j)$ where $j$ is chosen uniformly at random over all ranks, which we know from Lemmas \ref{lem:randomdelete} and \ref{lem:randominsert}, is $O(\epsilon^{-1}\log^{3/2}n)$. Thus, $T'(m, n, r)=O(\epsilon^{-1}\log^{3/2}n)$, as desired.
\end{proof}

The following corollary follows immediately by applying Theorem \ref{thm:shiftedzeno} to an $n(1+\epsilon)$ sized
subarray of a linearly sized array for any $\epsilon<1$. 

\begin{corollary}
There exists a list-labeling algorithm for an array of size $m=n(1+\Theta(1))$ with expected cost $O(\log^{3/2} n)$ per insertion/deletion.
\label{thm:linearcase}
\end{corollary}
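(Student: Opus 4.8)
The plan is to obtain this as an immediate specialization of Theorem~\ref{thm:shiftedzeno}, using the fact that in the linear regime the slack parameter $\epsilon$ can be taken to be a positive constant, so that the factor $\epsilon^{-1}$ appearing in Theorem~\ref{thm:shiftedzeno} is absorbed into the $O(\cdot)$.

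Concretely, we are handed a linearly-sized array, say of size $m$, where $n$ (an upper bound on the number of elements ever present) satisfies $m = (1+\Theta(1))n$. First I would fix a constant $\epsilon \in (0,1)$ small enough that $(1+\epsilon)n \le m$; this is possible precisely because $m/n - 1$ is a positive constant. Then I would reserve a contiguous subarray of $m' := \lceil (1+\epsilon)n \rceil \le m$ slots and run the shifted Zeno embedding of Theorem~\ref{thm:shiftedzeno} inside it with static array size $m'$ and parameter $\epsilon$, never touching the remaining slots. (One could equivalently apply Theorem~\ref{thm:shiftedzeno} directly to the whole array; the subarray is just a clean way to put the problem in the exact form $m' \ge (1+\epsilon)n'$.)

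The verification then reduces to two observations. First, at every point in time the current element count $n'$ satisfies $n' \le n$, hence $m' \ge (1+\epsilon)n \ge (1+\epsilon)n'$, so the hypothesis of Theorem~\ref{thm:shiftedzeno} holds throughout, with $m'$ static and $n'$ varying exactly as the theorem requires. Second, Theorem~\ref{thm:shiftedzeno} then yields expected cost $O(\epsilon^{-1}\log^{3/2} n')$ per insertion/deletion; since $\epsilon = \Theta(1)$ and $n' \le n$, this is $O(\log^{3/2} n)$, as claimed. Restricting to a subarray preserves both correctness and history independence, since the unused slots are inert.

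I do not anticipate any genuine obstacle: all of the work lives in Theorem~\ref{thm:shiftedzeno} (and ultimately in the Zeno-walk analysis of Proposition~\ref{prop:zeno2} and Lemmas~\ref{lem:alpha}, \ref{lem:randomdelete}, \ref{lem:randominsert}, \ref{lem:randreduction}), and the corollary is merely the statement that the linear regime is the $\epsilon = \Theta(1)$ case. The only mild care needed is the bookkeeping around the floor/ceiling in the choice of $m'$ and the choice of the constant $\epsilon$ relative to the hidden constant in $m = (1+\Theta(1))n$; if desired, one could additionally remark that a standard global-rebuilding argument lets $m$ track a dynamically changing $n$ while preserving the same asymptotic bound.
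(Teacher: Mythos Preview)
Your proposal is correct and matches the paper's own argument essentially verbatim: the paper states that the corollary follows immediately by applying Theorem~\ref{thm:shiftedzeno} to an $n(1+\epsilon)$-sized subarray of the linearly-sized array for any constant $\epsilon<1$. Your additional remarks about floors/ceilings and global rebuilding are fine but unnecessary for the corollary as stated.
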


We can also use the theorem to bound the total cost to insert into every slot in an array. 

\begin{corollary}
There exists a list-labeling algorithm to fill an array of size $m$ from empty to full with expected total cost $O(m\log^{2.5} m)$.
\label{cor:fill_array}
\end{corollary}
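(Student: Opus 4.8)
The plan is to fill the array in $O(\log m)$ geometrically-spaced phases, so that within each phase we may invoke Theorem~\ref{thm:shiftedzeno} with a slack parameter $\epsilon$ that is constant while many elements remain to be inserted and shrinks only as the array approaches full. For $k = 1, 2, \ldots, \lceil \log m\rceil$, let phase $k$ consist of the insertions that grow the element count from $\lceil m(1 - 2^{-(k-1)})\rceil$ to $\lfloor m(1 - 2^{-k})\rfloor$; there are $\Theta(m 2^{-k})$ of them. At the start of phase $k$ we rebuild the shifted Zeno embedding from scratch on the $m$-slot array with slack parameter $\epsilon_k = 2^{-k}$, at a cost of $O(m)$ (for $k=1$ this is just Corollary~\ref{thm:linearcase}).

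The key point is that throughout phase $k$ the current count satisfies $n \le m(1 - 2^{-k})$, hence $(1 + \epsilon_k)\,n \le m(1 + 2^{-k})(1 - 2^{-k}) = m(1 - 2^{-2k}) \le m$, so the hypothesis $m \ge (1 + \epsilon_k) n$ of Theorem~\ref{thm:shiftedzeno} holds for \emph{every} insertion in the phase, not merely at its endpoints. Therefore each insertion in phase $k$ has expected cost $O(\epsilon_k^{-1}\log^{3/2} n) = O(2^k \log^{3/2} m)$, and the total expected cost of phase $k$ is $\Theta(m 2^{-k}) \cdot O(2^k \log^{3/2} m) = O(m \log^{3/2} m)$ --- a bound that is independent of $k$.

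Summing over the $O(\log m)$ phases yields expected insertion cost $O(m \log^{3/2} m \cdot \log m) = O(m\log^{2.5} m)$, while the $O(\log m)$ phase-boundary rebuilds add only $O(m\log m)$, which is dominated. Two degenerate ends are handled directly: the first $O(1)$ insertions (before $n$ reaches the constant threshold assumed by Theorem~\ref{thm:shiftedzeno}) are performed naively at $O(1)$ total cost, and once $m - n$ falls below a constant (so that no positive $\epsilon$ satisfies $m \ge (1+\epsilon) n$) we insert each of the remaining $O(1)$ elements by rebuilding the whole array from scratch at $O(m)$ apiece, for $O(m)$ in total. Altogether the expected total cost is $O(m\log^{2.5} m)$, as claimed.

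The entire argument is geometric-series bookkeeping layered on top of Theorem~\ref{thm:shiftedzeno}, so no step is deep; the one place to be careful is the alignment between ``$n$ lies within a $2^{-k}$ fraction of $m$'' and ``slack $\epsilon_k = 2^{-k}$'', i.e., confirming that $m \ge (1+\epsilon_k)n$ over the whole phase and that both the per-phase rebuild cost and the $\epsilon\to 0$ tail are genuinely lower-order. I expect the only part requiring an ad hoc argument to be that tail, where the main theorem is vacuous, and it is dispatched trivially since only $O(1)$ such insertions occur.
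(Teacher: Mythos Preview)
Your proof is correct and follows essentially the same approach as the paper's: partition the fill into $O(\log m)$ geometric phases, apply Theorem~\ref{thm:shiftedzeno} with $\epsilon_k \approx 2^{-k}$ in phase $k$ so that each phase costs $O(m\log^{3/2} m)$, and absorb the per-phase rebuilds and the tail as lower-order terms. The only cosmetic differences are that the paper uses $\epsilon_i = (2^i-1)^{-1}$ rather than $2^{-i}$ and truncates the phases once $m-n \le \log m$ (handling the last $\log m$ insertions trivially at cost $O(m\log m)$) instead of running phases all the way down to $m-n = O(1)$ as you do; neither choice affects the argument.
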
 

\begin{proof}
We will apply a shifted Zeno embedding in $\Theta(\log m)$ phases, using an $\epsilon_i$, defined below, for phase $i$ and rebuilding the array between phases. The first phase consists of the first $m/2$ insertions, and each phase inserts half as many elements as the preceding phase. This continues until $n > m - \log m$, at which point the final phase consists of inserting the remaining at most $\log m$ elements.

More precisely, let $k =\lceil \log m - \loglog m \rceil$, and define 
\[n_i=\frac{m(2^i-1)}{2^i} \hspace{1em}\text{for }i=0,1,\dots,k,\]
and $n_{k+1} = m$.

Items are inserted by ranks, specified by $r_1, \ldots,r_m$, so that for example, since the first insertion is into an empty array, $r_1 = 1$.  Phase $P_i$ is defined by the insertions $r_j$ with $j\in (n_{i-1},n_i]$.  We define $\epsilon_i = (2^i-1)^{-1}$ for $i>1$, and $\varepsilon_1 = \frac{m-1}{m}$.

Let $C(P_i)$ denote the expected total cost of the insertions in phase $P_i$. For $i>1$ and for all $j\in (n_{i-1},n_i]$,
\[ (1+\epsilon_i)j \leq (1+\epsilon_i)n_i = \left(1 + \frac{1}{2^i-1}\right)\left(\frac{m(2^i-1)}{2^i}\right) = m.\]
Similarly, in phase $P_1$, we have
\[ (1+\epsilon_i)j \leq \left(1 + \frac{m-1}{m}\right)\cdot\frac{m}{2} \leq m.\]
Therefore, we can apply Theorem \ref{thm:shiftedzeno} to say that for all $i$, an insertion during phase $P_i$ incurs expected cost
\[O(\epsilon_i^{-1}\log^{3/2} n) = O(2^i\log^{3/2}m),\]
and \[C(P_i) = O\left(2^i\cdot\frac{m}{2^i}\cdot\log^{3/2}m\right) = O(m\log^{3/2}m).\]
Summing over the first $k = O(\log m)$ phases, this gives expected total insertion cost $O(m\log^{2.5}m)$.

By construction, the final phase has at most $\log m$ insertions, and thus has total expected cost $O(m\log m)$.
Finally, since the total number of elements in the array is bounded by $m$, the rebuilds between phases incur total cost $O(m\log m)$, completing the proof.

\end{proof}

We conclude the section with a remark.
\begin{remark}
Many applications of list labeling require that, if $m = \Theta(n)$, then the number of empty slots between any two consecutive elements is at most $O(1)$. The Zeno embedding satisfies this property by design, since each subproblem has density at least $1 - \epsilon / 2$. The shifted Zeno embedding therefore also satisfies the same property. 
\end{remark}

\section{A Lower Bound for History-Independent Solutions}\label{sec:lower}

The shifted Zeno embedding (Theorem~\ref{thm:shiftedzeno}) has the property that it is \defn{history independent}, meaning that the state of the data structure does not reveal any information about the history of insertions/deletions. In this section, we prove that the $\epsilon^{-1} \log^{3/2} n$ bound achieved by the shifted Zeno embedding is, in fact, optimal  for history-independent data structures.

The main result of this section will be the following lower bound:
\begin{theorem}
Consider any history-independent list-labeling data structure. Let $m$ be the size of the array and let $n = (1 - \epsilon)m$, where $\epsilon$ is at most some small positive constant and is at least $m^{-1/3}$. The expected cost to insert an element with a random rank in $\{1, 2, \ldots, n + 1\}$ and then delete the element with rank $n + 1$ is $\Omega(\epsilon^{-1} \log^{3/2} n)$. 
\label{thm:lower} 
\end{theorem}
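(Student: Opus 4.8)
The plan is to establish a matching lower bound by an information-theoretic / counting argument that mirrors the structure of the upper bound, using history independence as the crucial leverage. Since the data structure is history independent, Property (a)--(c) guarantee that, whenever $n$ elements are present, the array configuration is drawn from a fixed distribution $\mathcal{C}_{n,m}$, independent of the operation history. I would first argue that it suffices to lower-bound the expected cost of a single random-rank insertion into an array drawn from $\mathcal{C}_{n,m}$ (the subsequent deletion of rank $n+1$ only adds cost, and in the insertion/deletion-symmetric case is comparable). The key object is the \emph{gap profile} of a configuration $A \sim \mathcal{C}_{n,m}$: the multiset of gap sizes between consecutive occupied slots. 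Because $n = (1-\epsilon)m$, the average gap is $\Theta(1/(1-\epsilon)) = \Theta(1)$, but there are only $O(\epsilon m)$ empty slots to distribute, so the ``slack'' available is tightly constrained.

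The heart of the argument is a recursive/dyadic decomposition of the array into $2^i$ blocks of size $\approx m/2^i$, exactly as in the Zeno embedding, and a potential-function argument on how densities of blocks must behave. The idea: when we insert at a uniformly random rank, with probability $\Theta(2^{-i})$ the insertion ``lands'' in a given level-$i$ block, and the expected cost conditioned on landing there is at least proportional to how ``full'' that block is relative to its local slack — if a level-$i$ block has density $1 - \epsilon'$ for $\epsilon' \ll \epsilon$, then a constant fraction of insertions into it force a rebalance spilling $\Omega(m/2^i \cdot (\text{something}))$ elements. I would make this precise by defining, for each block $S$, a quantity $\beta_S$ measuring its distance from density $1$ (the overflow threshold), and showing $\E_{A \sim \mathcal{C}_{n,m}}[\text{cost of random insertion}] \ge \Omega\!\left(\sum_i \frac{1}{2^i}\sum_{S \in \mathcal{S}_i} \frac{m/2^i}{\beta_S^{-1}\text{-analogue}}\right)$, i.e., a reverse of Lemma~\ref{lem:alpha}. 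The constraint is that $\sum_S (1 - \mu_S)$-type sums are bounded by $\epsilon$ per level; the convexity of $x \mapsto 1/x$ (Jensen) then forces the sum over levels to be at least $\Omega(\epsilon^{-1}\sqrt{\log n})$ per level — but only $\Omega(\sqrt{\log n})$ of the levels can simultaneously be ``expensive,'' giving the $\log^{3/2} n$ via a second-moment / anti-concentration argument on the Zeno-walk-like density trajectory.

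Concretely, the key steps in order: (1) reduce to bounding $\E[\text{cost}]$ for a random insertion into $A \sim \mathcal{C}_{n,m}$, invoking history independence; (2) set up the dyadic block decomposition and define the per-block slack $\beta_S$, with the hard constraint $\sum_{S \in \mathcal{S}_i}(1-\mu_S)_+ \le O(\epsilon)$ at every level forced by $n = (1-\epsilon)m$; (3) prove a \emph{lower} bound on expected rebalance cost: if a random insertion lands in block $S$ and $S$ is ``tight'' ($\beta_S$ small), the insertion must with constant probability move $\Omega(1/\beta_S)$ elements (this is a smoothness-free argument — any reorganization restoring a valid configuration must touch that many slots); (4) sum over blocks using Jensen against the constraint to get $\Omega(\epsilon^{-1})$ per level; (5) crucially, argue that the density trajectory down a random root-to-leaf path behaves like a walk that deviates by $\Theta(\sqrt{i})$-worth after $i$ steps, so the ``cost per level'' is not uniformly $\epsilon^{-1}\sqrt{\log n}$ but rather accumulates like $\sum_{i=1}^{\log n} \epsilon^{-1}\cdot\frac{\sqrt{i}}{\sqrt{\log n}}$-type quantity — wait, I need the reverse direction, so instead I would show that a $\Theta(1)$ fraction of insertions hit a level where the cumulative density drift is $\Omega(\sqrt{\log n})$, each costing $\Omega(\epsilon^{-1}\sqrt{\log n})$ in expectation, for total $\Omega(\epsilon^{-1}\log^{3/2} n)$.

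The main obstacle I anticipate is step (3) combined with (5): proving the \emph{lower} bound on rebalance cost without assuming smoothness. The adversary (data structure designer) gets to use arbitrary randomized, non-smooth strategies, so I cannot point to a specific contiguous subarray that gets rebuilt. Instead I must argue that, \emph{averaged over the history-independent stationary distribution}, the configuration cannot simultaneously be ``loose everywhere'' — there is a conservation law (only $\epsilon m$ empty slots) that forces a Zeno-walk-like lower-bounding structure on the densities, and anti-concentration of that walk is what yields $\log^{3/2} n$ rather than $\log n$. Pinning down that anti-concentration — essentially a converse to Proposition~\ref{prop:zeno} showing $\E[\alpha_\ell^{-1}] = \Omega(1)$ and, more importantly, that the \emph{sum over levels} of the relevant quantity is $\Omega(\log^{3/2} n)$ rather than $\Omega(\log n)$ — is where the real work lies, and where the constraint $\epsilon \ge m^{-1/3}$ presumably enters (to keep the discretization of slack from swamping the walk's fluctuations).
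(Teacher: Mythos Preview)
Your proposal has the right high-level shape---dyadic blocks, a per-level constraint, and a cost lower bound summed over levels---but it is built on the wrong scalar quantity, and this is a genuine gap that would prevent the argument from reaching $\log^{3/2} n$.

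You work with the \emph{slack} $\beta_S \approx 1-\mu_S$ and the linear constraint $\sum_{S\in\mathcal{S}_i}(1-\mu_S)_+ \le O(\epsilon)$ per level. Jensen against a linear constraint only yields $\Omega(\epsilon^{-1})$ per level, hence $\Omega(\epsilon^{-1}\log n)$ in total; you notice this and reach for an ``anti-concentration of the Zeno walk'' patch in step (5), but that step is not well-formed (as you yourself flag). The paper instead tracks the \emph{imbalance} $\Delta_j$ of a $j$-block---the absolute difference in density between its two halves---and proves the deterministic \emph{quadratic} constraint $\sum_{j\in J}\Delta_j(A)^2 = O(\epsilon^2)$ via a potential $\phi(x)=(\mu_x-(1-c\epsilon))_+^2$ that accumulates $\Omega(\Delta_j^2)$ per level along a random root-to-leaf walk and is capped at $O(\epsilon^2)$. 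Cauchy--Schwarz then forces the average $\Delta_j$ to be $O(\epsilon/\sqrt{\log n})$, and it is this square-root that supplies the extra $\sqrt{\log n}$ you are missing.

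Your step (3) also fails as stated. Small slack in a block does \emph{not} force a single insertion to move many elements: the designer can place the few empty slots exactly where the random insertion tends to land. The paper's cost argument is different and is where history independence really bites. Fix a $j$-block $x$ and watch its three \emph{boundary elements} $B(x)$ (positions $1$, $j/2$, $j$). If over $\lfloor cj\Delta_j\rfloor+1$ insert/delete pairs none of these boundary elements move, then all insertions land in the first half of $x$ and its imbalance must grow beyond $c\Delta_j$---but history independence says the imbalance distribution is the same before and after, a contradiction. This yields $\Pr[\text{boundary moves}]\ge\Omega(1/(j\Delta_j+1))$ per pair; summing over $\Theta(j)$ disjoint choices of the block's starting offset gives $\Omega(1/(\Delta_j+1/j))$ expected moves at rank-distance $\Theta(j)$, and summing over $j\in J$ gives the cost bound $\Omega\bigl(\sum_j 1/(\Delta_j+1/j)\bigr)$. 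Combining with the Cauchy--Schwarz bound on the $\Delta_j$'s finishes the proof. The condition $\epsilon\ge m^{-1/3}$ is used only to dispose of the degenerate case where some $\Delta_j\le 1/\sqrt{n}$.
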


Throughout the rest of the section, let $ c $ be a large constant, and assume that $m$ is sufficiently large as a function of $ c $. Let $m^{-1/3} \le \epsilon \le 1 / c$, and set $n = (1 - \epsilon)m$. We shall consider sequences of insertions/deletions, where each insertion is into an array of $n$ elements and each deletion is from an array of $n + 1$ elements.

To aid in the proof of Theorem \ref{thm:lower}, let us take a moment to establish several definitions and conventions. Let $J = \{2, 4, 8, \ldots, 2^{\lfloor \log m \rfloor - 2} \}$. For each $j \in [m]$, define a \defn{$j$-block} to be a block of $j$ consecutive slots in the array, allowing for wrap-around (so there are $m$ possible $j$-blocks). 

Define the \defn{density} of a $j$-block to be $k / j$, where $k$ is the number of elements in the $j$-block. Call a $j$-block \defn{live} if it has density at least $1 - c\epsilon$, and dead otherwise. Note that this definition of density is slightly different from that used for recursive subproblems in the upper-bound section (Section \ref{sec:dense}) in that we define the density to be between $0$ and $1$---this difference will make the algebraic manipulation cleaner in several places.

For each $j \in J$, define the \defn{imbalance} of a $j$-block to be $|\mu_1 - \mu_2|$, where $\mu_1$ is the density of the first $j / 2$ slots in the block, and $ \mu_2$ is the density of the final $j / 2$ slots in the block. Define the \defn{adjusted imbalance} $\Delta(x)$ of a $j$-block $x$ to be the block's imbalance if the block is live, and $0$ if the block is dead. Finally, define the \defn{boundary set} $B(x)$ to be the set of up to three elements in positions $\{1, j/2, j\}$ of $x$.

For a given array configuration $A$, define $\Delta_j(A)$ to be the average adjusted imbalance across all $j$-blocks. Finally, define $\Delta_j = \E_{A \sim \mathcal{C}_{n, m}}[\Delta_j(A)]$.

We will split the proof of Theorem \ref{thm:lower} into two key components. Section \ref{sec:imbalances} proves the following combinatorial bound, which holds deterministically for any array configuration.
\begin{proposition}
For any array-configuration $A$ with $n = (1 - \epsilon) m$ elements,
$$\sum_{j \in J} (\Delta_j(A))^2 = O(\epsilon^2).$$
\label{prop:imbalances}
\end{proposition}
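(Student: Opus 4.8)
I would prove Proposition~\ref{prop:imbalances} by recognizing $\sum_{j\in J}\Delta_j(A)^2$ as (a bound on) the restricted quadratic variation of a martingale obtained by repeatedly ``zooming in'' on a random block, and then squeezing an extra factor of $\epsilon$ out of the \emph{liveness} constraint. Fix the configuration $A$, let $H\subseteq\ZZ/m\ZZ$ be its set of $\epsilon m$ empty slots, and for a block $x$ write $h(x)=|x\cap H|$, so the density of $x$ is $1-h(x)/|x|$ and $x$ is live iff $h(x)\le c\epsilon|x|$. Write $J=\{2^\ell:1\le\ell\le L\}$ with $L=\lfloor\log m\rfloor-2$. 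Form a random decreasing chain of blocks $B_L\supseteq B_{L-1}\supseteq\cdots\supseteq B_0$: let $B_L$ be a uniformly random $2^L$-block, and, given $B_\ell$, let $B_{\ell-1}$ be its first or second half, each with probability $1/2$. A short computation shows $B_\ell$ is uniformly distributed over all $m$ $2^\ell$-blocks, for every $\ell$.

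\textbf{The martingale.} Set $X_\ell=h(B_\ell)/2^\ell\in[0,1]$. Read from $\ell=L$ down to $\ell=0$, $(X_\ell)$ is a martingale, since conditioned on $B_\ell$ the half $B_{\ell-1}$ is equiprobable and $\tfrac12(h(B_\ell^{(1)})+h(B_\ell^{(2)}))/2^{\ell-1}=X_\ell$. Two elementary facts drive the argument: (i) $|X_{\ell-1}-X_\ell|=\tfrac12 I(B_\ell)$, where $I(\cdot)$ denotes imbalance — both sides equal $|h(B_\ell^{(1)})-h(B_\ell^{(2)})|/2^\ell$; and (ii) $|X_{\ell-1}-X_\ell|\le X_\ell$, because the hole-count of a half lies between $0$ and the hole-count of the whole. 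Since $B_\ell$ is live iff $X_\ell\le c\epsilon$, these give $\Delta(B_\ell)^2 = 4\,(X_{\ell-1}-X_\ell)^2\,\mathbf{1}[X_\ell\le c\epsilon]$. Note that the bare orthogonality identity $\sum_\ell\E[(X_{\ell-1}-X_\ell)^2]=\E[X_0^2]-\E[X_L^2]\le\epsilon$ already bounds $\sum_{j\in J}\Delta_j(A)^2$ by $O(\epsilon)$; the point is to do better by \emph{not} discarding the indicator $\mathbf{1}[X_\ell\le c\epsilon]$.

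\textbf{The potential and the conclusion.} To isolate the restricted quadratic variation $\sum_\ell\E[(X_{\ell-1}-X_\ell)^2\mathbf{1}[X_\ell\le c\epsilon]]$, I would use the ``flattened square'' $\phi$ defined by $\phi(x)=x^2$ on $[0,2c\epsilon]$ and $\phi(x)=4c\epsilon x-4c^2\epsilon^2$ for $x>2c\epsilon$: it is convex, $C^1$, nonnegative, and $\phi(x)\le x^2$ everywhere (with equality on $[0,2c\epsilon]$). Convexity makes $\phi(X_\ell)$ a submartingale, and the crucial point is that when $X_\ell\le c\epsilon$, fact (ii) forces \emph{both} possible values of $X_{\ell-1}$ into $[0,2c\epsilon]$, where $\phi(x)=x^2$; hence the conditional increment of $\phi(X_\ell)$ equals $(X_{\ell-1}-X_\ell)^2$ exactly in that case (and is $\ge 0$ otherwise). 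Telescoping from $\ell=L$ to $\ell=1$ and taking expectations,
$$\sum_{\ell=1}^{L}\E\big[(X_{\ell-1}-X_\ell)^2\,\mathbf{1}[X_\ell\le c\epsilon]\big]\ \le\ \E[\phi(X_0)]-\E[\phi(X_L)]\ \le\ \E[\phi(X_0)].$$
Since $X_0=\mathbf{1}[B_0\in H]$ with $B_0$ a uniformly random slot, $\Pr[X_0=1]=\epsilon$, and $\phi(1)\le 4c\epsilon$ (immediate from either branch), so $\E[\phi(X_0)]\le 4c\epsilon^2$. Finally, by Cauchy--Schwarz and the uniformity of $B_\ell$, $\Delta_{2^\ell}(A)^2=(\E\,\Delta(B_\ell))^2\le\E[\Delta(B_\ell)^2]=4\,\E[(X_{\ell-1}-X_\ell)^2\mathbf{1}[X_\ell\le c\epsilon]]$; summing over $\ell\in\{1,\dots,L\}$ yields $\sum_{j\in J}\Delta_j(A)^2\le 16c\epsilon^2=O(\epsilon^2)$.

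\textbf{The main obstacle.} The one nonroutine idea is precisely the passage from $O(\epsilon)$ to $O(\epsilon^2)$: a standard quadratic-variation bound wastes the fact that imbalances only count on \emph{live} blocks, i.e., on blocks where the hole-density $X_\ell$ is at most $c\epsilon$. The remedy is to measure quadratic variation with a potential that is strongly convex \emph{only} on the window $[0,2c\epsilon]$ that a single step cannot leave — which is exactly why fact (ii), $|X_{\ell-1}-X_\ell|\le X_\ell$, is needed — so that the submartingale increments pay for the restricted variation and nothing more. The remaining ingredients — orienting the chain so that hole-density is a martingale (zooming \emph{in}, not out) and checking that $B_\ell$ is uniform over all $2^\ell$-blocks — are routine.
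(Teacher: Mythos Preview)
Your proof is correct and follows the same overall architecture as the paper: form a random chain of blocks by repeatedly halving, observe that the (hole-)density along the chain is a martingale, and show via a potential function that the expected increments dominate the squared adjusted imbalances $\Delta_j(A)^2$, while the total potential is $O(\epsilon^2)$.

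The difference lies in the choice of potential. The paper uses $\phi(x)=(\mu_x-(1-c\epsilon))^2\cdot\mathbf{1}[x\text{ live}]$, which is deterministically $O(\epsilon^2)$ but discontinuous; consequently its increment analysis requires a Markov-inequality step to control the probability that a block is dead, together with a case split on whether $\sqrt{\phi(x_{i-1})}-\Delta/2$ is positive. You instead use a $C^1$ convex potential on the hole density that agrees with $x^2$ on $[0,2c\epsilon]$ and is linear beyond. This potential is not deterministically $O(\epsilon^2)$, but you recover the bound in expectation from $\E[\phi(X_0)]\le\epsilon\cdot\phi(1)\le 4c\epsilon^2$. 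The payoff is that the submartingale property is automatic from convexity, and your observation (ii), $|X_{\ell-1}-X_\ell|\le X_\ell$, guarantees that a step from a live block stays in the quadratic region, so the increment equals the squared imbalance exactly with no case analysis and no appeal to Markov. Your route is a genuine streamlining of the same idea.
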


Section \ref{sec:imbalances} also uses Cauchy-Schwarz to arrive at the following corollary.
\begin{corollary}
For any array-configuration $A$ with $n = (1 - \epsilon) m$ elements,
$$\frac{1}{|J|} \sum_{j \in J} \Delta_j(A) = O(\epsilon / \sqrt{\log n}).$$
\label{cor:imbalances}
\end{corollary}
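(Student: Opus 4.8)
The plan is to derive Corollary~\ref{cor:imbalances} directly from Proposition~\ref{prop:imbalances} by a single application of the Cauchy–Schwarz inequality, treating $\{\Delta_j(A)\}_{j \in J}$ as a vector indexed by $J$. Recall that $J = \{2, 4, 8, \ldots, 2^{\lfloor \log m \rfloor - 2}\}$, so $|J| = \lfloor \log m \rfloor - 2 = \Theta(\log n)$ (using that $n = (1-\epsilon)m$ with $\epsilon$ bounded above by a small constant, so $\log n = \Theta(\log m)$).

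First I would write, by Cauchy–Schwarz applied to the all-ones vector and the vector $(\Delta_j(A))_{j \in J}$,
\[
\sum_{j \in J} \Delta_j(A) \;\le\; \sqrt{|J|} \cdot \sqrt{\sum_{j \in J} (\Delta_j(A))^2}.
\]
Then I would substitute the bound $\sum_{j \in J} (\Delta_j(A))^2 = O(\epsilon^2)$ from Proposition~\ref{prop:imbalances}, giving $\sum_{j \in J} \Delta_j(A) = O(\sqrt{|J|} \cdot \epsilon) = O(\epsilon \sqrt{\log n})$. Dividing both sides by $|J| = \Theta(\log n)$ yields
\[
\frac{1}{|J|}\sum_{j \in J} \Delta_j(A) \;=\; O\!\left(\frac{\epsilon \sqrt{\log n}}{\log n}\right) \;=\; O\!\left(\frac{\epsilon}{\sqrt{\log n}}\right),
\]
which is exactly the claimed corollary.

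There is essentially no obstacle here: the only thing to be slightly careful about is the estimate $|J| = \Theta(\log n)$, which requires knowing $\log m = \Theta(\log n)$ — this follows since $n \le m \le n/(1-\epsilon) \le 2n$ for $\epsilon$ at most a small constant, so $\log n \le \log m \le \log n + 1$. Everything else is the textbook Cauchy–Schwarz bound $\|v\|_1 \le \sqrt{d}\,\|v\|_2$ for a $d$-dimensional vector, together with arithmetic. Since this is a short deduction rather than a substantial argument, I would present it as a two- or three-line proof immediately following the statement of Proposition~\ref{prop:imbalances}.
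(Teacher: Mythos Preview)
Your proposal is correct and takes essentially the same approach as the paper: apply Cauchy--Schwarz to get $\sum_{j\in J}\Delta_j(A)\le \sqrt{|J|}\,\sqrt{\sum_{j\in J}\Delta_j(A)^2}$, plug in Proposition~\ref{prop:imbalances}, and divide by $|J|=\Theta(\log n)$. The paper's proof is identical up to the direction in which the Cauchy--Schwarz inequality is written.
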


Section \ref{sec:costlower} then gives a lower bound in terms of the $\Delta_i$'s on the expected cost that any history-independent data structure must incur. 

\begin{proposition}
Suppose $n = (1 - \epsilon)m$, where $m^{-1/3} \le \epsilon \le 1/c$ and $c$ is some sufficiently large positive constant. Suppose we perform an insertion at a random rank $r \in \{0, \ldots, n + 1\}$ and then delete the element with rank $n + 1$. The expected total cost of the insertion/deletion is at least
$$\Omega\left(\sum_{j \in J} \frac{1}{\Delta_j + 1/j}\right).$$
\label{prop:costlower}
\end{proposition}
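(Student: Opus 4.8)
The plan is to lower bound the expected cost of the insert-then-delete operation by summing, over all $j \in J$, the expected cost attributable to ``scale $j$'' — that is, the cost of moving elements whose movement is forced at granularity $j$. The key idea is that for each $j$, most $j$-blocks are either dead (low density) or nearly balanced, but the random insertion can land in a $j$-block that is live and imbalanced, and then restoring any reasonable configuration forces $\Omega(1)$ elements from that block's boundary set $B(x)$ to move — and these costs at different scales $j$ should be disjointly chargeable. Concretely, I would first argue that, by history independence, after the insert-then-delete pair the array configuration is again distributed as $\mathcal{C}_{n,m}$, so we may reason about the ``before'' and ``after'' both being samples from this distribution; the cost of the round trip is at least the number of elements that differ in position between before and after (or, more carefully, between the intermediate $(n+1)$-element configuration and the surrounding $n$-element ones).

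The second step is the charging scheme. For a fixed scale $j$, I would consider the $j$-block $x$ containing the inserted element's position (under the random rank $r$, so $x$ is essentially a uniformly random $j$-block). If $x$ is live, then inserting one more element into $x$ increases the imbalance between its two halves unless the insertion happens to fall on the ``lighter'' side in just the right way; a counting argument shows that with constant probability the insertion of a random-rank element into a live block with current adjusted imbalance $\Delta(x)$ forces at least one element of $B(x)$ (the three canonical boundary elements at positions $1, j/2, j$) to move within the block, and similarly for the deletion. Summing $\Pr[\text{move at scale } j]$ over $j \in J$, and observing that the boundary sets $B(x)$ at distinct scales are essentially disjoint (the positions $\{1, j/2, j\}$ for geometrically-spaced $j$ hit different parts of the array), I get that the total expected cost is at least $\Omega\big(\sum_{j} \Pr[\text{scale-}j \text{ block is live and ``tight''}]\big)$. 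The quantity $\Pr[\text{scale-}j \text{ block live}]$ is, up to the ``tightness'' adjustment, controlled by $\Delta_j$: roughly, a live $j$-block contributes cost proportional to $1/(\text{imbalance} + 1/j)$ because if its imbalance is $\Delta$ then it takes about $1/\Delta$ insertions before a rebalance is forced, but each such rebalance is shared, giving an amortized $1/(\Delta + 1/j)$ per operation; averaging $1/(\Delta(x) + 1/j)$ over random blocks $x$ and applying Jensen/convexity (since $t \mapsto 1/(t + 1/j)$ is convex) lower bounds this by $\Omega(1/(\Delta_j + 1/j))$.

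The third step is to assemble these per-scale lower bounds into $\Omega\big(\sum_{j \in J} \frac{1}{\Delta_j + 1/j}\big)$, using the disjointness of boundary sets to ensure no element is double-counted across scales (so the sum of per-scale costs is a valid lower bound on total cost). Here I would need the constraint $\epsilon \ge m^{-1/3}$ and $\epsilon \le 1/c$: the upper bound on $\epsilon$ ensures ``live'' blocks (density $\ge 1 - c\epsilon$) are still genuinely dense so that the pigeonhole forcing-a-move argument goes through, and the lower bound on $\epsilon$ ensures the scales in $J = \{2, 4, \ldots, 2^{\lfloor \log m\rfloor - 2}\}$ are all in a regime where $1/j$ is comparable to or smaller than the relevant imbalances, so the $\Delta_j + 1/j$ denominator is the right normalization and the number of scales is $\Theta(\log m) = \Theta(\log n)$.

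The main obstacle I anticipate is making the ``forced move'' argument rigorous in a way that is both (i) charged to a specific, scale-identifiable set of elements (the boundary set $B(x)$) so that the disjointness-across-scales accounting is valid, and (ii) robust to the adversary's freedom in how to rebalance — since the data structure is arbitrary (only history-independent), it need not rebalance smoothly, so I cannot assume it moves elements in any particular pattern. The resolution should be that history independence forces the configuration distribution to be fixed, so I can argue about the expected disagreement between two independent-ish draws from $\mathcal{C}_{n,m}$ and $\mathcal{C}_{n+1,m}$ restricted to a random $j$-block, and that this expected disagreement is $\Omega(1/(\Delta_j + 1/j))$ by a direct combinatorial/coupling argument on block contents rather than by reasoning about the algorithm's moves; the subtlety is relating ``disagreement in block contents'' back to ``number of elements moved,'' which requires care about elements entering/leaving the block versus shifting within it, and about not over-counting a single moved element at multiple scales — I expect the bulk of the technical work in Section~\ref{sec:costlower} to be exactly this bookkeeping.
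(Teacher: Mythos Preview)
Your high-level scaffolding (per-scale charging via boundary sets, then summing across $j \in J$ with near-disjointness) matches the paper, but the proposal has a genuine gap at the core step: you never establish why a \emph{single} insertion/deletion pair moves a boundary element of a random $j$-block with probability $\Omega(1/(j\Delta_j + 1))$. Your ``counting argument shows that with constant probability \ldots'' is not true as stated: for a single insertion into a live block with imbalance $\Delta(x)$, nothing forces any element of $B(x)$ to move --- the data structure can simply absorb the new element into a free slot in the appropriate half. The paper's argument is fundamentally different: it considers a \emph{batch} of $k = \lfloor cj\Delta_j\rfloor + 1$ identical insertion/deletion pairs; by history independence and Markov, with probability $\ge 0.6$ the block's imbalance is $< c\Delta_j$ both before and after the whole batch; if \emph{no} operation touches $B(x)$, then all $k$ insertions land in the same half of the block (since they all go between the rank-$t$ and rank-$(t+j/2)$ boundary elements), driving the imbalance up by more than $2c\Delta_j$, a contradiction. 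Hence at least one of the $k$ pairs is critical, and --- again by history independence --- each individual pair is critical with probability $\Omega(1/k)$. This ``repeat $k$ times, pigeonhole, then divide by $k$ via history independence'' is the crux you are missing; neither your one-shot counting argument nor your coupling/disagreement idea substitutes for it.

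There is a second missing ingredient. The per-block critical probability is only $\Omega(1/(j\Delta_j + 1))$, not $\Omega(1/(\Delta_j + 1/j))$ --- you are off by a factor of $j$. The paper recovers this factor by varying the \emph{offset} $s \in [j/6, j/3]$ between the insertion rank and the block's left endpoint: the boundary sets $B(x_s)$ for different $s$ are disjoint (they sit at positions $1, j/2, j$ of blocks shifted by one slot each), so the expected number of boundary elements moved at scale $j$ is $\sum_s \Pr[\text{critical for } x_s] = \Omega(j/(j\Delta_j + 1)) = \Omega(1/(\Delta_j + 1/j))$. Your Jensen step is not what is happening here and would not produce this factor. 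Finally, the cross-scale disjointness in the paper is not about $B(x)$ directly but about \emph{rank distance}: a moved element at rank $r'$ satisfies $|r'-r| = \Theta(j)$ for only $O(1)$ values of $j \in J$, which is how the per-scale contributions sum without overcounting. Your explanation of the role of $\epsilon \ge m^{-1/3}$ is also off: that hypothesis is not used in the proof of this proposition at all (it enters only when combining with Proposition~\ref{prop:imbalances} to prove Theorem~\ref{thm:lower}).
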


Note that the expected cost in Proposition \ref{prop:costlower} is with respect to the randomness introduced by both the random rank $r$ and the randomness in the history-independent data structure.

Intuitively, the above results tell us that any optimal history-independent data structure must behave a lot like the Zeno embedding. Indeed, Corollary \ref{cor:imbalances} tells us that, no matter how we configure our array $A$, it is impossible to achieve imbalances that are consistently $\omega(\epsilon / \sqrt{\log n})$---so, if our goal is to maximize the imbalances in our array, we can't hope to do any better than the Zeno embedding already does. Proposition \ref{prop:costlower} then tells us that small imbalances are necessarily expensive to maintain (and, in fact, the asymptotic relationship between cost and imbalance is the same as the one achieved by the Zeno embedding). Combining the propositions, we can prove the theorem as follows.

\begin{proof}[Proof of Theorem \ref{thm:lower}]
By Corollary \ref{cor:imbalances}, we have
\begin{equation}
\frac{1}{|J|} \sum_{j \in J} \Delta_j = \E_{A \sim \mathcal{C}_{n, m}}\left[\frac{1}{|J|} \sum_{j \in J} \Delta_j(A) \right] \le O\left(\epsilon  /\sqrt{\log n}\right).
\label{eq:cauchy}
\end{equation}
By Proposition \ref{prop:costlower}, the the expected cost of the insertion/deletion is at least
\begin{equation}
\Omega\left(\sum_{j \in J} \frac{1}{\Delta_j + 1/j}\right).
\label{eq:cl}
\end{equation}
If $\Delta_j \le 1 / \sqrt{n}$ for any $j \ge \sqrt{n}$, then \eqref{eq:cl} becomes $\Omega(\sqrt{n}) \ge \Omega(\epsilon^{-1} \log^{3/2} n)$ (since $\epsilon \ge \Omega(n^{-1/3})$), and we are done. On the other hand, if $\Delta_j \ge 1/\sqrt{n}$ for all $j \ge \sqrt{n}$, then \eqref{eq:cl} is at least 
\begin{equation}
\Omega\left(\sum_{j \in J, j \ge \sqrt{n}} \frac{1}{\Delta_j}\right).
\label{eq:jsum}
\end{equation}
By \eqref{eq:cauchy}, we know that at least half of the $\Delta_j$'s in the above sum satisfy $\Delta_j = O(\epsilon / \sqrt{\log n})$.
Thus the expected cost comes out to at least
$$\Omega(\epsilon^{-1} \log^{3/2} n).$$
\end{proof}

\subsection{Proof of Proposition \ref{prop:imbalances}}
\label{sec:imbalances}

Although Proposition \ref{prop:imbalances} is a deterministic statement, we will prove it with a probabilistic argument.

For $j \in J$, define the  \defn{children} of a $ j $-block to be the $ j/2$-blocks consisting of the first and last $ j/2 $ slots of the block, respectively. Also, let $j^* = 2^{\lfloor \log m \rfloor - 2}$ 
be the largest element of $J$.

For a $j$-block $x$ with density $\mu$, define the \defn{potential} $\phi(x)$ to be 
$$\phi(x) = \begin{cases} 0 & \text{ if }x \text{ is dead} \\
                        (\mu - (1 - c\epsilon))^2 & \text{ otherwise}.
\end{cases}$$
For a random $j$-block $x$, one should think of $\phi(x)$ as measuring something similar to (but not quite equal to) the variance of $\mu$. The key differences between what $\phi$ and variance measure is that (1) $\phi$ evaluates directly to $0$ on any $j$-block $x$ that is dead (i.e., has density less than $1 - c\epsilon$), and (2) $\phi$ examines the square of the distance between $\mu$ and the death-threshold $1 - c \epsilon$, rather than the square of the distance from $\mu$ to $\E[\mu] = 1 - \epsilon$.

Note that
$0 \le \phi(x) \le O(\epsilon^2)$ deterministically. On the other hand, we will now see how to relate the expected potential $\phi(x)$ of a random $1$-block to the quantity $\sum_{j \in J} (\Delta_j(A))^2$. 

\begin{lemma}
Let $x$ be a random $1$-block. Then
$$\E[\phi(x)] = \Omega\left(\sum_{j \in J} (\Delta_j(A))^2\right).$$
\label{lem:phia}
\end{lemma}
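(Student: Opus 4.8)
The plan is to use a telescoping / martingale-style decomposition down the hierarchy of dyadic blocks, much as in the analysis of a one-dimensional random walk. Fix the array configuration $A$. I would consider the random process obtained by picking a uniformly random $j^*$-block $X_{j^*}$ and then repeatedly descending: given a $j$-block $X_j$, let $X_{j/2}$ be one of its two children, each chosen with probability $1/2$. This produces blocks $X_{j^*}, X_{j^*/2}, \ldots, X_2, X_1$, where the final $X_1$ is a uniformly random $1$-block. Write $\mu_j$ for the density of $X_j$; note $\E[\mu_j \mid X_{2j}] = \mu_{2j}$, since a parent's density is the average of its two children's densities, so $\{\mu_j\}$ (read from large $j$ to small $j$) is a martingale with respect to the descent.

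The key identity is that the expected squared total displacement of this martingale equals the sum of expected squared increments: $\E[(\mu_1 - \mu_{j^*})^2] = \sum_{j \in J} \E[(\mu_{j/2} - \mu_j)^2]$ (where I'm being slightly loose about which index ``$j$'' the increment from level $2j$ to level $j$ is attached to; I will attach it so the index set is exactly $J$). Now for each $j \in J$, the increment $\mu_{j/2} - \mu_j$ is, conditioned on $X_j$, equal to $\pm \tfrac12(\text{imbalance of }X_j)$ with equal probability, so $\E[(\mu_{j/2}-\mu_j)^2 \mid X_j] = \tfrac14(\text{imbalance of }X_j)^2$. Averaging over the uniformly random $j$-block $X_j$ gives $\E[(\mu_{j/2}-\mu_j)^2] = \tfrac14 \,\E_{x \text{ a random }j\text{-block}}[(\text{imbalance of }x)^2] \ge \tfrac14 (\E_x[\text{imbalance of }x])^2$ by Jensen — but I actually want adjusted imbalances and $\Delta_j(A)^2 = (\E_x[\Delta(x)])^2$, so I need to be careful to only ``pay'' for increments on live blocks.

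This is where the live/dead distinction and the potential function $\phi$ do their work, and it is the main obstacle. The clean way: restrict attention to the descent conditioned on staying inside live blocks, or equivalently absorb the contribution of dead blocks into the definition of $\phi$. Concretely, I would argue that whenever $X_j$ is dead, either all of its descendants contribute $\phi = 0$ (so we lose nothing by stopping), or we can bound the relevant increments crudely; and whenever $X_j$ is live, $\E[(\mu_{j/2}-\mu_j)^2 \mid X_j] = \tfrac14 \Delta(X_j)^2$ with $\Delta$ the adjusted imbalance (which agrees with the imbalance on live blocks). Summing over $j \in J$ and averaging over the starting block, $\sum_{j\in J}\tfrac14\,\E_{x}[\Delta(x)^2] \le \E[(\mu_1 - \mu_{j^*})^2] \le 2\,\E[(\mu_1 - (1-c\epsilon))^2 \mathbb{I}_{X_1 \text{ live}}] + 2\,\E[(\mu_{j^*}-(1-c\epsilon))^2]$, and the first term is exactly $2\,\E[\phi(x)]$ for a random $1$-block $x$; the second term, involving the essentially fixed global density, is $O(\epsilon^2)$ and can be charged against the $j=j^*$ term of the left side (or simply absorbed since $j^* = \Theta(m)$ makes its $\phi$-variance negligible). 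Finally, applying Jensen per level, $\E_x[\Delta(x)^2] \ge (\E_x[\Delta(x)])^2 = \Delta_j(A)^2$, which yields $\sum_{j\in J}\Delta_j(A)^2 = O(\E[\phi(x)])$ as claimed. The delicate points to nail down in the full proof are (i) making the ``stop at dead blocks'' step rigorous so that dead-block increments are genuinely not needed, and (ii) handling the top-level base term $\mu_{j^*} - (1-c\epsilon)$, using $\epsilon \ge m^{-1/3}$ and the near-constancy of the global density to show it does not dominate.
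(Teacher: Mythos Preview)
Your decomposition via the density martingale $\{\mu_j\}$ is the right skeleton, and indeed the paper uses the same random descent $X_{j^*} \to X_{j^*/2} \to \cdots \to X_1$. But the specific chain of inequalities you propose has a real gap at the final step.

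The problem is quantitative. A random $1$-block has density $\mu_1 \in \{0,1\}$ with $\Pr[\mu_1 = 0] = \epsilon$, and on that event $|\mu_1 - \mu_{j^*}| = \Theta(1)$ since $\mu_{j^*} = 1 - O(\epsilon)$. Hence $\E[(\mu_1 - \mu_{j^*})^2] = \Theta(\epsilon)$, not $O(\epsilon^2)$. So your martingale-variance identity only yields $\sum_{j}\Delta_j(A)^2 = O(\epsilon)$, which is a $1/\epsilon$ factor too weak. Your fix---inserting the indicator $\mathbb{I}_{X_1\text{ live}}$ to turn the first term into $\E[\phi(x)]$---pushes the inequality the wrong direction: you need an \emph{upper} bound on $\E[(\mu_1-\mu_{j^*})^2]$, and throwing away the dead-$X_1$ contribution makes the right-hand side smaller, not larger. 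That dead-$X_1$ term is exactly the $\Theta(\epsilon)$ you cannot afford. Your fallback (i), stopping the martingale at the first dead block, runs into a different obstruction: the stopped increment at level $j$ is $\tfrac14(\text{imbalance of }X_j)^2\cdot\mathbb{I}_{\text{all ancestors of }X_j\text{ live}}$, and you have no constant lower bound on $\Pr[\text{all ancestors live}]$ uniformly over $\Theta(\log n)$ levels---the per-level bound $\Pr[X_j\text{ dead}]\le 1/c$ does not union-bound usefully.

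The paper avoids this by tracking $\phi$ itself down the tree rather than $\mu$. The key pointwise inequality is: if the parent $x_{i-1}$ is live with potential $\phi(x_{i-1})$ and imbalance $\Delta$, then the random child satisfies $\E[\phi(x_i)\mid x_{i-1}] \ge \phi(x_{i-1}) + \Delta^2/8$, \emph{even though one child may be dead} (with potential $0$). This comes from the case split on whether $\sqrt{\phi(x_{i-1})} - \Delta/2 \ge 0$: if so, the two child potentials average to $\phi(x_{i-1})+\Delta^2/4$; if not, the dead child contributes $0$ but the live child alone already gives $\tfrac12(\sqrt{\phi(x_{i-1})}+\Delta/2)^2 \ge \phi(x_{i-1})+\Delta^2/8$. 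When the parent is dead, one simply uses $\phi(x_i)\ge 0 = \phi(x_{i-1})$. This makes $\phi$ a submartingale with per-step gain $\Omega(\Delta(x_{i-1})^2)$, and since each $x_{i-1}$ is live with probability $\ge 0.9$ \emph{unconditionally} (no need for all ancestors live), Jensen gives the per-level gain $\Omega(\Delta_j(A)^2)$. Telescoping finishes. The missing ingredient in your proposal is precisely this ``truncated-square submartingale'' calculation, which is what absorbs the dead-block case without losing the $\epsilon^2$ scale.
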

\begin{proof}
Let $ x_0 $ be a random $j^*$-block, and for $i \in [\log j^*]$, let $x_i$ be a random child of $x_{i - 1}$. This means that each $x_i$ is itself a random $2^{\log j^* - i}$-block, and that $x := x_{\log j^*}$ is a random $1$-block. 
Define $\mu_i$ to be the density of $x_i$.

We will argue that
\begin{equation}
\E[\phi(x_i) - \phi(x_{i - 1})] = \Omega\left((\Delta_{j^* / 2^i}(A))^2\right).
\label{eq:pd}
\end{equation}
This would imply that
$$\E[\phi(x)] = \E[\phi(x_0)] + \sum_i \E[\phi(x_i) - \phi(x_{i - 1})] = \Omega\left(\sum_{j \in J} (\Delta_j(A))^2\right),$$
as desired. 

For the rest of the proof, consider some $\phi_i$ and set $j = j^* / 2^i$. We claim that with probability at least $1 - 1 / c \ge 0.9$, $x_{i - 1}$ is live. Indeed, in expectation at most an $\epsilon$ fraction of the slots in $x_{i - 1}$ are free, so by Markov's inequality the probability that more than a $c\epsilon $ fraction of the slots in $x_{i - 1}$ are free is at most  $1/c \le 0.1$.  

If we condition that $x_{i - 1}$ is live, then its imbalance $\Delta$ satisfies $\E[\Delta] = \Theta(\Delta_{j^* / 2^{i - 1}}(A))$. Furthermore, if $x_{i - 1}$ is live, then we have that $\phi(x_i)$ is randomly one of
$$(\sqrt{\phi(x_{i - 1})} + \Delta/2)^2$$
or
$$(\max\{0,\sqrt{\phi(x_{i - 1})} - \Delta/2\})^2.$$
Note that the average of these is
\begin{align*}
& 0.5 (\sqrt{\phi(x_{i - 1})} + \Delta/2)^2 + 0.5
(\max\{0,\sqrt{\phi(x_{i - 1})} - \Delta/2\})^2.
\end{align*}
If $0<\sqrt{\phi(x_{i - 1})} - \Delta/2$, this average is 
\begin{align*}
\phi(x_{i-1})+\Delta^2/4.
\end{align*}
On the other hand, if $0\ge\sqrt{\phi(x_{i - 1})} - \Delta/2$, this average is
\begin{align*}
&0.5 \cdot  \phi(x_{i - 1})+\Delta \sqrt{\phi(x_{i - 1})}/2+\Delta^2/8\\
&\ge 1.5 \cdot \phi(x_{i - 1})+\Delta^2/8.
\end{align*}
Thus, in either case, this average is \[\ge\phi(x_{i - 1})+\Delta^2/8.\]
It follows that
\begin{align*}
\E[\phi(x_i) \mid x_{i - 1} \text{ live}]& \ge\E[\phi(x_{i - 1}) \mid x_{i - 1} \text{ live} ] + \cdot \E[\Delta^2\mid x_{i - 1} \text{ live}]/8 \\
       					& \ge \E[\phi(x_{i - 1})\mid x_{i - 1} \text{ live}] + \cdot \E[\Delta\mid x_{i - 1} \text{ live}]^2/8 \\
					& \ge \E[\phi(x_{i - 1})\mid x_{i - 1} \text{ live}] + \Omega(\Delta_{j^* / 2^{i-1}}(A)^2). \\
\end{align*}
On the other hand, 
\begin{align*}
\E[\phi(x_i) \mid x_{i - 1} \text{ not live}]& \ge 0 \\
   & = \E[\phi(x_{i - 1}) \mid x_{i - 1} \text{ not live}].
\end{align*}
So we can conclude that
\begin{align*}
\E[\phi(x_i)] & \ge \E[\phi(x_{i - 1})] + \Pr[x_{i - 1} \text{ live}] \cdot \Omega(\Delta_{j^* / 2^i}(A)^2) \\
              & \ge \E[\phi(x_{i - 1})] + 0.9\cdot \Omega(\Delta_{j^* / 2^{i-1}}(A)^2),
\end{align*}
hence \eqref{eq:pd}.
\end{proof}

We can now prove Proposition \ref{prop:imbalances}.
\begin{proof}[Proof of Proposition \ref{prop:imbalances}]
Let $x$ be a random $1$-block. Then by Lemma \ref{lem:phia},
$$\E[\phi(x)] = \Omega\left(\sum_{j \in J} \Delta_j^2(A)\right).$$
On the other hand, $\phi(x) = O(\epsilon^2)$ deterministically. Combined, these imply 
$$\sum_{j \in J} \Delta_j(A)^2 = O(\epsilon^2).$$

\end{proof}

\begin{proof}[Proof of Corollary \ref{cor:imbalances}]
Cauchy-Schwarz implies
$$\sum_{j \in J} \Delta_j(A) ^2\geq \left(\sum_{j \in J} \Delta_j(A)\right)^2/|J| = \left(\sum_{j \in J} \Delta_j(A)\right)^2 / O(\log n).$$ 
Thus we have
$$\sum_{j \in J} \Delta_j(A) \le O(\sqrt{\log n}) \sqrt{\sum_{j \in J} \Delta_j(A) ^2} \le O\left(\epsilon \sqrt{\log n}\right),$$
where the final inequality uses Proposition \ref{prop:imbalances}. Dividing by $|J| = \Theta(\log n)$, we have
$$\frac{1}{|J|} \sum_{j \in J} \Delta_j(A) = O(\epsilon / \sqrt{\log n}).$$
\end{proof}

\subsection{Proof of Proposition \ref{prop:costlower}}\label{sec:costlower}

In this section, we prove Proposition \ref{prop:costlower}. All of the lemmas in this section assume an array of size $m$ that initially contains $n$ elements, where $n = (1 - \epsilon) m$.

We begin by establishing that, if we consider an element with random rank $ t \in [n / 2]$, and we examine the $j$-block beginning at that element, then there are several basic properties that hold with probability at least $0.9$. 
\begin{lemma}
Let $j \in J$ and consider a random $t \in [n / 2 ]$. Define $x$ to be the $j$-block whose first position contains the current rank-$t$ element. With probability at least $0.9$, the following all hold:
\begin{itemize}
\item $x$ is live;
\item $|B(x)| = 3$;
\item $\Delta(x) < c \Delta_j$,
\end{itemize}
\label{lem:nine}
\end{lemma}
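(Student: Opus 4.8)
The plan is to read the probability in the statement as being over the joint randomness of the (history-independent) array configuration $A \sim \mathcal{C}_{n, m}$ and the random rank $t \in [n/2]$, and to bound each of the three failure events separately, showing each has probability $O(1/c)$; since $c$ is a large constant, a union bound then gives overall failure probability below $0.1$. (If one instead wants the lemma with $A$ fixed, every estimate below holds for each fixed configuration with $\Delta_j(A)$ in place of $\Delta_j$, and one takes expectations over $A$ at the end.) Throughout, write $p_t$ for the slot holding the current rank-$t$ element and $x_t$ for the $j$-block $\{p_t, p_t + 1, \dots, p_t + j - 1\}$ (indices mod $m$); note $p_t$ is an injective function of $t$.

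For \textbf{liveness}, the engine is a one-line double count valid for every fixed configuration. Let $E_t$ be the number of empty slots in $x_t$. Each of the $\epsilon m = m - n$ empty slots lies in at most $j$ of the blocks $x_1, \dots, x_{n/2}$ (at most $j$ element positions can fall in any window of $j$ consecutive slots), so $\sum_{t \le n/2} E_t \le j \epsilon m$ and hence $\E_t[E_t] \le \frac{j \epsilon m}{n/2} = \frac{2\epsilon}{1 - \epsilon} j \le 4 \epsilon j$. Since $x_t$ is dead exactly when $E_t > c\epsilon j$, Markov's inequality gives $\Pr[x_t \text{ dead}] \le 4/c$.

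For the \textbf{boundary set}, it suffices to show every slot at positions $1, j/2, j$ of $x_t$ is occupied (which forces $|B(x_t)| = 3$ when those positions are distinct; the degenerate $j = 2$ case must be handled directly or excluded). Position $1$ of $x_t$ is slot $p_t$, holding the rank-$t$ element, hence always occupied. Positions $j/2$ and $j$ are slots $p_t + j/2 - 1$ and $p_t + j - 1$, each a fixed injective function of $t$, so for at most $\epsilon m$ choices of $t$ is the corresponding slot empty; the fraction of bad $t \in [n/2]$ is at most $\frac{2\epsilon m}{n/2} \le 8\epsilon$, so $\Pr[|B(x_t)| \ne 3] \le 8/c$. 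For the \textbf{imbalance} bound, observe that $\{x_t : t \le n/2\}$ is a sub-collection (with multiplicity one) of the $m$ many $j$-blocks over which $\Delta_j(A)$ averages, so for every fixed $A$ we get $\sum_{t \le n/2} \Delta(x_t) \le m \Delta_j(A)$, i.e.\ $\E_t[\Delta(x_t)] \le \frac{2}{1 - \epsilon} \Delta_j(A) \le 4 \Delta_j(A)$; taking expectation over $A$ yields $\E_{A, t}[\Delta(x_t)] \le 4 \Delta_j$. Markov's inequality then gives $\Pr[\Delta(x_t) \ge c \Delta_j] \le 4/c$ (and if $\Delta_j = 0$ then $\Delta(x_t) = 0$ almost surely and the event holds trivially).

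Combining the three estimates by a union bound, the probability that some condition fails is at most $14/c$, which is below $0.1$ once $c$ is a sufficiently large constant. The only mildly delicate point — and the thing to set up carefully — is that $x_t$ is \emph{not} a uniformly random $j$-block but the block anchored at a random one of the first $n/2$ element positions; this is precisely why each bound comes out only up to a constant factor ($\tfrac{2}{1-\epsilon} \le 4$), which is harmless here. Everything else is a routine double count or multiset containment followed by Markov.
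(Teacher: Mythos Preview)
Your proposal is correct and takes essentially the same approach as the paper: bound each of the three failure events by $O(1/c)$ via a first-moment computation plus Markov, then union bound. The only cosmetic difference is that the paper does the ``subset of all $m$ blocks'' reduction once upfront (observing $n/2 \ge m/3$ and then passing to a uniformly random $j$-block $x'$), whereas you perform the equivalent double count inline for each property; both routes incur the same constant-factor loss. Your observation that position $1$ of $x_t$ is always occupied is a small simplification the paper does not make, and you are right to flag the degenerate $j=2$ case (where positions $1$ and $j/2$ coincide so $|B(x)| \le 2$); the paper's statement is sloppy there, though it is harmless in the downstream applications.
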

\begin{proof}
It suffices to show that each individual property holds with probability at least $0.97$. Observe that $x$ is chosen at random from one of $n/2 \ge m / 3$ $j$-blocks. It therefore suffices to show that, if we define $x'$ to be a uniformly random $j$-block, then each property holds with probability at least $0.99$ for $x'$. 

Note that $x'$ contains at most $\epsilon j$ free slots in expectation, so by Markov's inequality the probability that $x'$ contains $ \ge c \epsilon j$ free slots is at most  $1/c \le 0.01$.  Thus $x'$ is live with probability at least $0.99$.

We claim that $\E[3 - |B(x')|] = 3\epsilon$. This is because the probability that a given slot is occupied is $1-\epsilon$, so $\E[|B(x')|]=3(1-\epsilon)=3-3\epsilon$. Thus, $\E[3 - |B(x')|] = 3\epsilon\le 3/c$. Thus, by Markov's inequality we have $\Pr[3 - |B(x')| \ge 1] =  3/c \le 0.01$. Thus $|B(x')| = 3$ with probability at least $0.99$. 

Finally, observe that $\E[\Delta(x')] = \Delta_j$, so by Markov's inequality we have $\Pr[\Delta(x') \ge c \Delta_j] \le 1/c \le 0.01$. Thus $\Delta(x') < c \Delta_j$ with probability at least $0.99$. 
\end{proof}

Call an insertion/deletion \defn{critical} to a $j$-block $x$ if:  $x$ is live when the operation is performed; and the operation leads to at least one of the elements in $B(x)$ being rearranged. The next lemma argues that, if we perform enough insertions/deletions inside a random $j$-block, then at least one of them will likely be critical.

\begin{lemma}
Let $j \in J$, let $s \in [j / 6, j / 3]$, and consider a random $t \in [n/4 - s, n / 2 - s]$. Define $x$ to be the $j$-block whose first position contains the element with rank $t$. Suppose we perform $\lfloor c j \Delta_j \rfloor + 1$ insertion/deletion pairs, where each insertion adds a new element with rank $t + s$ and each deletion removes the highest-ranked element (i.e., the element with rank $n + 1$). With probability at least $0.6$, at least one of the insertions/deletions is critical to $x$.
\label{lem:eight}
\end{lemma}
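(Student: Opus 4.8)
The plan is to argue by contradiction, using history independence to say that after the $K := \lfloor c j \Delta_j \rfloor + 1$ insertion/deletion pairs the array is again distributed as $\mathcal{C}_{n,m}$, so that no fixed family of $j$-blocks can have adjusted imbalance much larger than the average value $O(\Delta_j)$. Write $x_L, x_R$ for the first and last $j/2$ slots of $x$, with element counts $k_L, k_R$, and let $r$ be the rank of the element at position $j/2$ of $x$ (when it is occupied). First I would isolate a \textbf{good event} $G$, over the random rank $t$ and the initial configuration $A^{(0)} \sim \mathcal{C}_{n,m}$: that $x$ (i) is \emph{dense}, with $\mu \ge 1 - O(\epsilon)$, and in particular $\mu > 2/3$ (using $\epsilon \le 1/c$ with $c$ large); (ii) has $|B(x)| = 3$; and (iii) has imbalance $\Delta(x) < c\Delta_j$. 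By the same Markov's-inequality computations as in the proof of Lemma~\ref{lem:nine}, each of these properties fails with probability $O(1/c) + O(\epsilon)$ for a \emph{uniformly} random $j$-block, and since $t$ ranges over an interval of size $n/4 = \Omega(m)$, restricting to this interval costs only a constant factor; hence $\Pr[\bar G] \le 0.3$ for $c$ large. Property (i) together with $s \le j/3$ guarantees that rank $t+s$ corresponds to an element lying strictly inside $x$ (indeed $\mu j > 2j/3 > j/3 \ge s$), and because $|x|$ only grows under insertions at rank $t+s$, it remains inside $x$ throughout.

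The heart of the argument is structural. Let $N$ be the event that none of the $2K$ operations is critical to $x$, and suppose for contradiction that $\Pr[N]$ is not small. On $G \cap N$, a short induction shows $x$ never loses an element and hence stays dense (in particular live): while $x$ is live, $N$ forbids any operation from moving an element of $B(x)$, so in particular no element crosses out of $x$ at position $1$ or $j$. Since the rank-$t$ element sits at position $1$ of $x$ and never moves (and never has its rank shifted, as $t < t+s < n+1$), the occupied slots of $x$ always read, left to right, ranks $t, t+1, t+2, \dots$, so $k_L = r - t + 1$ throughout; likewise no element ever crosses position $j/2$. Consequently, each insertion at rank $t+s$ adds exactly one element to $x$ and either increments $r$ by one (if $t+s \le r$, so the new element lands in $x_L$) or leaves $r$ fixed (if $t+s > r$, so it lands in $x_R$), while each far-away deletion of rank $n+1$ alters neither $k_L$ nor $k_R$. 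Therefore, after all $K$ pairs, one of $k_L, k_R$ has grown by exactly $K$ and the other is unchanged, so the imbalance of $x$ in the final configuration $A^{(K)}$ is at least
$$\frac{2K}{j} - \frac{|k_L^{(0)} - k_R^{(0)}|}{j/2} \;=\; \frac{2K}{j} - \Delta(x) \;>\; \frac{2K}{j} - c\Delta_j \;\ge\; \max\!\left(c\Delta_j,\; \tfrac{1}{j}\right),$$
where the last step uses $K > c j \Delta_j$ and $K \ge 1$; and since $x$ is still live, this also equals its adjusted imbalance.

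It remains to derive the contradiction. After the $K$ pairs the array again holds $n$ elements, so by history independence $A^{(K)} \sim \mathcal{C}_{n,m}$. For any fixed $A^{(0)}$ and $A^{(K)}$, averaging the adjusted imbalance of $x$ (evaluated in $A^{(K)}$) over the $n/4$ values of $t$ — which pick out $n/4$ \emph{distinct} $j$-blocks — is at most $\frac{4m}{n}\,\Delta_j(A^{(K)}) \le 8\,\Delta_j(A^{(K)})$, so taking expectations over $A^{(0)}$ and $A^{(K)}$ gives $\E[\,\Delta(x)\text{ in }A^{(K)}\,] \le 8\Delta_j$. Comparing with the bound $\Delta(x) \ge \max(c\Delta_j, 1/j)$ that holds on $G \cap N$ forces $\Pr[G \cap N] \le 8/c \le 0.1$ for $c$ large, whence $\Pr[N] \le \Pr[\bar G] + \Pr[G \cap N] \le 0.4$; that is, at least one of the $2K$ operations is critical to $x$ with probability $\ge 0.6$. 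I expect the main obstacle to be the bookkeeping in the middle paragraph — turning the hypothesis ``no element of $B(x)$ is rearranged'' into the precise claims that nothing crosses positions $1$, $j/2$, or $j$ of $x$, that $x$ therefore stays dense, and that exactly one half absorbs all $K$ inserted elements. This requires the fact that consecutive occupied slots carry consecutive ranks, a case split on whether rank $t+s$ lies left or right of position $j/2$, and care with the boundary case where an inserted element lands at position $j/2$ itself. The probabilistic steps (the good event over the restricted rank interval, and the averaging identity $\E[\Delta(x)\text{ in }A^{(K)}] = O(\Delta_j)$) are routine provided one is careful about the order in which expectations are taken.
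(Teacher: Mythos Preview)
Your approach mirrors the paper's: establish a good event via Markov bounds (this is Lemma~\ref{lem:nine}), assume no operation is critical, argue structurally that all $K$ insertions land in one half of $x$ while the other half is untouched, and contradict the imbalance bound. Two minor differences are worth noting. First, the paper argues specifically that all insertions go into the \emph{first} half (using $s\le j/3$ together with liveness to rule out the second half), whereas your two-case analysis on whether $t+s\le r$ handles either half uniformly---your version is slightly cleaner and sidesteps a constant-juggling step. Second, for the ``after'' side the paper simply applies Lemma~\ref{lem:nine} to $A^{(K)}$ as well (history independence gives $A^{(K)}\sim\mathcal{C}_{n,m}$), obtaining $\Delta(x)<c\Delta_j$ both before and after with probability $\ge 0.6$, and then reads off a direct contradiction on that combined event; you instead run a Markov inequality on the final imbalance.

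Your Markov route is fine in spirit, but the stated order of expectations has a real gap. You write ``for any fixed $A^{(0)}$ and $A^{(K)}$, averaging \ldots over the $n/4$ values of $t$,'' yet $A^{(K)}$ depends on $t$ (the insertions are at rank $t+s$), so conditioning on $A^{(K)}$ distorts the distribution of $t$; you cannot treat $t$ as uniform under that conditioning, and hence cannot bound the inner average by $O(\Delta_j(A^{(K)}))$. The repair is already latent in your structural paragraph: on $G\cap N$ the rank-$t$ element never moves and never changes rank, so in $A^{(K)}$ the block $x$ coincides with the $j$-block $x''$ whose first position holds the rank-$t$ element \emph{of $A^{(K)}$}. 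Hence
\[
\E\big[\Delta^{(K)}(x)\cdot\mathbb{I}_{G\cap N}\big]\;\le\;\E\big[\Delta^{(K)}(x'')\big],
\]
and since history independence gives $A^{(K)}\sim\mathcal{C}_{n,m}$ for each fixed $t$, the right-hand side equals $\E_{t}\,\E_{A\sim\mathcal{C}_{n,m}}[\Delta(x''_{t,A})]$ with $t$ and $A$ now \emph{independent}---precisely the setting where your $n/4$-block averaging gives the $O(\Delta_j)$ bound. This is exactly the step the paper is performing (implicitly) when it invokes Lemma~\ref{lem:nine} on the final configuration, and once you insert it your argument goes through.
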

\begin{proof}
We know that, with probability at least $0.6$, the properties in Lemma \ref{lem:nine} hold for $x$ both before the insertions/deletions are performed and after the insertions/deletions are performed. Suppose for contradiction that none of the insertions/deletions are critical to $x$.

Thus, we know that none of the operations rearrange any of the elements in $B(x)$. We additionally claim that none of the elements in $B(x)$ are deleted. This follows from the fact that we always delete the element of rank $n+1$, while the highest-ranked element in $x$ is has rank less than $n+1$ for the following reason. The first element of $x$ is at rank at most $n/2$, and $x$ contains at most $j^*=2^{\lfloor \log m\rfloor-2}\leq \frac{n}{4(1-\epsilon)}$ elements. So the last element of $x$ has rank at most $\frac{n}{2}+\frac{n}{4(1-\epsilon)}$, which is less than $n+1$ since $\epsilon<1/2$.

Additionally, we claim that all of the insertions go into the first $j / 2$ slots of $x$. This is because otherwise there would be at most $s\le j/3$ elements in the first $j/2$ slots, which means that there would be least $j/6$ empty slots, which contradicts the fact that $x$ is live. 

Thus, during the course of the insertions/deletions, the first $j / 2$ slots in $x$ gain $\lfloor  c j \Delta_j \rfloor + 1$ elements, while the second $j / 2$ slots of $x$ remain stable in their number of elements. We know that $\Delta(x) < c\Delta_j$ both before and after the insertions/deletions are performed. So, over the course of the insertions/deletions, the density of the first $j/2$ slots in $x$ changes by less than $2c\Delta_j$. This means that the number of elements in the first $j/2$ slots of $x$ changes by at most $j/2 \cdot 2c\Delta_j=cj\Delta_j$, which contradictions the fact that the first $j / 2$ slots in $x$ gain $\lfloor  c j \Delta_j \rfloor + 1$ elements.
\end{proof}

Using symmetry, we can reinterpret the previous lemma as a statement about a single insertion/deletion pair.
\begin{lemma}
Let $j \in J$, let $s \in [j / 6, j / 3]$, and consider a random $t \in [n/4 - s, n / 2 - s]$. Define $x$ to be the $j$-block whose first position contains the element with rank $t$. Suppose we perform a single insertion/deletion pair, where the insertion adds a new element with rank $t + s$ and the deletion removes the current highest-ranked element (i.e., the element with rank $n + 1$). With probability $\Omega\left(\frac{1}{\lfloor j \Delta_j \rfloor + 1}\right)$, at least one of the insertion/deletion is critical to $x$. 
\label{lem:probcritical}
\end{lemma}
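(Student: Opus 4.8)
The plan is to obtain Lemma~\ref{lem:probcritical} from Lemma~\ref{lem:eight} by a union bound over the $\lfloor cj\Delta_j\rfloor+1$ pairs, combined with a symmetry argument powered by history independence. Write $N := \lfloor cj\Delta_j\rfloor+1$; since $c$ is an absolute constant, $N = \Theta(\lfloor j\Delta_j\rfloor+1)$, so it suffices to show that a single insertion/deletion pair is critical to $x$ with probability $\Omega(1/N)$.

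First I would set up the $N$-pair experiment of Lemma~\ref{lem:eight} and, for each $k\in[N]$, let $x^{(k)}$ denote the $j$-block whose first slot holds the current rank-$t$ element at the moment pair $k$ is about to be performed, so that $x^{(1)}=x$. The key observation is that the event ``some pair is critical to $x$'' is contained in the event ``some pair is critical to the corresponding $x^{(k)}$''. Indeed, if $k^{*}$ is the first pair critical to the fixed block $x$, then no earlier pair rearranges any element of $B(x)$, and (by the rank arithmetic in the proof of Lemma~\ref{lem:eight}, using that the rank-$t$ element stays at rank $t\le n/2$ and $x$ spans at most $j^*$ elements) no element of $B(x)$ is ever deleted; hence the rank-$t$ element, which occupies the first slot of $x$, is untouched through pairs $1,\dots,k^{*}-1$, so $x^{(k^{*})}=x$ and pair $k^{*}$ is critical to $x^{(k^{*})}$. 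Since Lemma~\ref{lem:eight} gives that the first of these events has probability at least $0.6$, we get $\Pr[\exists k: \text{pair }k\text{ critical to }x^{(k)}]\ge 0.6$, and a union bound yields $\sum_{k=1}^N \Pr[\text{pair }k\text{ critical to }x^{(k)}]\ge 0.6$; hence some index $k_0$ satisfies $\Pr[\text{pair }k_0\text{ critical to }x^{(k_0)}]\ge 0.6/N = \Omega(1/N)$.

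Second I would argue that this last probability is exactly the one Lemma~\ref{lem:probcritical} asks about. The event ``pair $k_0$ is critical to $x^{(k_0)}$'' is a deterministic function of the configuration entering pair $k_0$, the ranks of the pending insertion/deletion, and the data structure's behavior on that pair, because $x^{(k_0)}$ is itself read off from the configuration entering pair $k_0$. Now every insertion in the experiment is at the fixed rank $t+s$ and every deletion at rank $n+1\ge t+s$, and inserting strictly above rank $t$ and deleting the top element never change the rank of the rank-$t$ element; so by Property~(a) the configuration entering pair $k_0$ has marginal law $\mathcal{C}_{n,m}$ (the array again holds $n$ elements), with the rank-$t$ element placed exactly where a fresh draw from $\mathcal{C}_{n,m}$ would place it. Therefore the joint law of (configuration, pending ranks, data-structure action on the pair) entering pair $k_0$ coincides with that of a single fresh insertion/deletion pair, so $\Pr[\text{pair }k_0\text{ critical to }x^{(k_0)}]$ equals the single-pair probability of being critical to $x$. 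Chaining this with the previous paragraph gives $\Omega(1/N)=\Omega\left(\frac{1}{\lfloor j\Delta_j\rfloor+1}\right)$, as desired.

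The main obstacle is the bookkeeping in the first step: one must ensure that the fixed block $x$ of Lemma~\ref{lem:eight} and the ``follow-the-rank-$t$-element'' blocks $x^{(k)}$ agree on exactly the events that matter, and that this agreement is robust to the fact that interior elements of $x$ (those outside $B(x)$) may be moved. Isolating the event containment at the level of the first critical pair $k^{*}$—rather than trying to track all $N$ pairs at once—is what keeps this clean and reduces the remainder to a routine pair-by-pair invocation of history independence.
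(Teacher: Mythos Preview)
Your approach is correct and is essentially the same as the paper's: both derive the single-pair bound from Lemma~\ref{lem:eight} by invoking history independence so that every pair in the $N$-pair experiment has the same marginal law, then averaging. The paper compresses all of this into two sentences (``the probability distribution of array configurations \ldots\ is the same after each insertion/deletion pair \ldots\ thus Lemma~\ref{lem:eight} immediately extends''), whereas you make explicit the one point the paper leaves implicit: the event-containment step linking ``some pair is critical to the fixed $x$'' to ``some pair $k$ is critical to $x^{(k)}$'' via the observation that the rank-$t$ element sits in $B(x)$ and hence cannot move before the first critical pair. Your detour through a specific index $k_0$ is unnecessary---once you know the $\Pr[E_k]$ are all equal by history independence you can sum directly to get $Np\ge 0.6$---but it is not wrong.
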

\begin{proof}
By history independence, the probability distribution of array configurations is only dependent upon $n$, $m$, and $r$, and these quantities are the same after each insertion/deletion pair in Lemma \ref{lem:eight}. Thus, Lemma \ref{lem:eight} immediately extends to each individual insertion/deletion pair.
\end{proof}

The previous lemma analyzes for a specific block $x$ the probability that a specific insertion/deletion pair is critical to $x$. Notice, however, that a given insertion/deletion pair can be critical to many $j$-blocks simultaneously. Indeed, by applying Lemma \ref{lem:eight} simultaneously for multiple different values of $s$, we can deduce a lower bound on the expected number of elements that are rearranged at distance $\Theta(j)$ (in rank) from the element currently being inserted.
\begin{lemma}
Let $j \in J$. Suppose we perform an insertion at a random rank $r \in [n / 4, n / 2]$ and then we delete the highest-ranked element (i.e., the element with rank $n + 1$). Let $q_j$ be the number of elements that are rearranged by the insertion/deletion, and that have ranks $r'$ satisfying $|r - r'| = \Theta(j)$ after the insertion. Then 
$$\E[q_j] = \Omega\left(\frac{1}{\Delta_j + 1/j}\right).$$
\label{lem:jdist}
\end{lemma}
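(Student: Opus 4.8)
## Proof Proposal for Lemma~\ref{lem:jdist}

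The plan is to reduce the statement about a single random insertion at rank $r \in [n/4, n/2]$ to many simultaneous applications of Lemma~\ref{lem:probcritical}, one for each choice of offset $s$ in the valid range $[j/6, j/3]$, and then use the fact that any element of $B(x)$ rearranged by a critical operation contributes to $q_j$. First, I would fix $j \in J$ and observe that a uniformly random rank $r \in [n/4, n/2]$ can be generated by first choosing a random $s \in [j/6, j/3]$ (say uniformly over the $\Theta(j)$ integers in that range) and then choosing $t = r - s$ uniformly in the corresponding shifted window $[n/4 - s, n/2 - s]$; up to lower-order boundary effects (which are negligible since $j \le j^* \le m/4$ and the windows all have length $\Theta(n)$), this matches the distribution on $r$ in the lemma statement. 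So it suffices to lower bound $\E[q_j]$ when we first draw $s$, then $t$, then perform the insertion at rank $t+s$ and the deletion of rank $n+1$.

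Next, for each fixed $s$, let $x_s$ be the $j$-block whose first position holds the current rank-$t$ element. Lemma~\ref{lem:probcritical} tells us that with probability $\Omega(1/(\lfloor j\Delta_j\rfloor + 1))$ the insertion/deletion pair is critical to $x_s$, meaning $x_s$ is live and at least one element of $B(x_s)$ is rearranged. The key geometric point is that the boundary elements of $x_s$ sit at ranks roughly $t$, $t + j/2$, and $t + j$; since the inserted element has rank $t + s$ with $s \in [j/6, j/3]$, every element of $B(x_s)$ has rank $r'$ with $|r - r'| = \Theta(j)$. Hence whenever the pair is critical to $x_s$, at least one element counted by $q_j$ can be attributed to the offset $s$.

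To turn ``at least one per critical $s$'' into a bound on $\E[q_j]$, I would count pairs $(s, y)$ where $y \in B(x_s)$ is rearranged. On one hand, summing the per-$s$ critical probabilities over all $\Theta(j)$ choices of $s$ gives $\Omega\!\bigl(j/(\lfloor j\Delta_j\rfloor + 1)\bigr) = \Omega\!\bigl(1/(\Delta_j + 1/j)\bigr)$ expected ``hits'' — but a single rearranged element $y$ could be a boundary element for many different blocks $x_s$, so I must cap this multiplicity. A given rearranged element $y$ of rank $r'$ lies in $B(x_s)$ only if $r'$ is within $O(1)$ of one of $t$, $t+j/2$, $t+j$, i.e. only for $O(1)$ values of $s$ (recall $t = r - s$, so as $s$ varies each boundary position moves by exactly one rank per unit of $s$, and each $y$ of fixed rank is hit for $O(1)$ consecutive values of $s$ at each of the three boundary slots). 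Therefore $\sum_{s} \mathbb{I}[\text{pair critical to } x_s] \le O(1) \cdot q_j$, and taking expectations gives $\E[q_j] = \Omega\!\bigl(1/(\Delta_j + 1/j)\bigr)$, as desired.

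The main obstacle I anticipate is the careful bookkeeping in the last step: making the multiplicity bound precise — that each rearranged element is charged to only $O(1)$ offsets $s$ — requires being careful about how $B(x_s)$ is defined (the ``up to three elements in positions $\{1, j/2, j\}$'') and about the coupling between the random choice of $s$ and the random choice of $t$, since $t$ depends on $s$ through $r = t + s$. One clean way to handle this is to reverse the order of generation where convenient: fix $r$ first, then note that the event ``pair critical to $x_s$'' for the particular $x_s$ determined by $t = r-s$ is exactly the event appearing in Lemma~\ref{lem:probcritical}, and that the number of $s \in [j/6, j/3]$ for which a given realized rearrangement lands in $B(x_s)$ is $O(1)$ purely as a deterministic statement about ranks. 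Everything else — the window-length matching, the $\Theta(j)$ distance claim, the conversion $j/(\lfloor j\Delta_j\rfloor+1) = \Theta(1/(\Delta_j + 1/j))$ — is routine.
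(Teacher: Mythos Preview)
Your approach is essentially the paper's: for each $s \in [j/6, j/3]$ define $x_s$ as the $j$-block starting at the rank-$(r-s)$ element, apply Lemma~\ref{lem:probcritical} to each, and use bounded overlap of the boundary sets $B(x_s)$ to convert the sum of per-$s$ critical probabilities into a lower bound on $\E[q_j]$. The paper is slightly cleaner in that it asserts the $B(x_s)$'s are outright disjoint (so no multiplicity bookkeeping is needed) and it skips your detour through ``generate $r$ by first sampling $s$, then $t$''---for each fixed $s$ the quantity $t = r - s$ is already uniform on $[n/4 - s, n/2 - s]$, which is exactly the hypothesis of Lemma~\ref{lem:probcritical}, so one can simply fix $r$ and sum over $s$ directly.
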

\begin{proof}
For $s \in [j / 6, j / 3]$, define $x_s$ to be the $j$-block beginning with the element whose rank is $r - s$. Note that the sets $B(x_s)$ are disjoint across  $s \in [j / 6, j / 3]$. Let $B_s$ be the number of elements in $B(x_s)$ that are rearranged by the insertion/deletion; and let $E_s$ be the event that both $B_s \ge 1$ and that $x_s$ is live. 

If $x_s$ is live then the elements of $B_s$ have ranks $r'$ satisfying $|r - r'| = \Theta(j)$. Since the $B_s$'s are disjoint, it follows that
$$\E[q_j] \ge \sum_{s \in [j / 6, j / 3]} \Pr[E_s].$$

For each $s \in [j / 6, j / 3]$, we have by Lemma \ref{lem:probcritical} that
$$\Pr[E_s] = \Omega\left(\frac{1}{\lfloor j \Delta_j \rfloor + 1}\right).$$
Thus 
$$\E[q_j] = \Omega\left(\frac{j}{\lfloor j \Delta_j \rfloor + 1}\right) = \Omega\left(\frac{1}{\Delta_j + 1/j}\right),$$ as desired.
\end{proof}

Finally, we can deduce a lower bound on the total number of elements that are rearranged by a random insertion/deletion.
\begin{lemma}
Suppose we perform an insertion at a random rank $r \in [n / 4, n / 2]$ and then we delete the highest-ranked element. The expected total cost of the insertion/deletion is at least
$$\Omega\left(\sum_{j \in J} \frac{1}{\Delta_j + 1/j}\right).$$
\label{lem:costlower}
\end{lemma}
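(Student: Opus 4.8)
The plan is to derive Lemma~\ref{lem:costlower} from Lemma~\ref{lem:jdist} by summing the per-scale contributions over all $j \in J$, after first arguing that these contributions are counting \emph{disjoint} sets of rearranged elements and hence can be added without any overcounting. Recall that Lemma~\ref{lem:jdist} already tells us that, for each fixed $j \in J$, the number $q_j$ of rearranged elements whose post-insertion rank $r'$ satisfies $|r - r'| = \Theta(j)$ has $\E[q_j] = \Omega(1/(\Delta_j + 1/j))$. The total cost of the insertion/deletion is at least the total number of rearranged elements, so if the rank-distance bands $\{r' : |r - r'| = \Theta(j)\}$ for distinct $j \in J$ were pairwise disjoint, we would immediately get $\E[\text{cost}] \ge \sum_{j \in J} \E[q_j] = \Omega(\sum_{j \in J} 1/(\Delta_j + 1/j))$ by linearity of expectation.

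First I would pin down the constants hidden in the ``$\Theta(j)$'' of Lemma~\ref{lem:jdist}: tracing through the proof, the rearranged elements counted there come from boundary sets $B(x_s)$ of $j$-blocks $x_s$ with $s \in [j/6, j/3]$, and when $x_s$ is live these boundary elements sit at rank-distance between roughly $j/6$ and $3j/2$ from $r$ (the first position of $x_s$ is at rank $r - s$ with $s \in [j/6, j/3]$, and $B(x_s)$ spans positions $\{1, j/2, j\}$ of the block, with the live condition forcing the physical span $j$ to correspond to $\Theta(j)$ elements). So I can fix an explicit constant interval, say $[j/6, 2j]$, such that all elements counted toward $q_j$ have $|r - r'| \in [j/6, 2j]$. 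Since $J = \{2, 4, 8, \dots\}$ is geometric with ratio $2$, the intervals $[j/6, 2j]$ for $j \in J$ overlap only a bounded number of times --- each fixed rank-distance value lies in $[j/6, 2j]$ for at most $O(1)$ values of $j \in J$ (concretely, $j$ ranges over at most $\log_2(12) < 4$ consecutive powers of two). Therefore $\sum_{j \in J} q_j$ overcounts the true number of rearranged elements by at most a constant factor, i.e.\ the true cost is $\Omega(\sum_{j \in J} q_j)$, and taking expectations with linearity gives the claimed $\Omega(\sum_{j \in J} 1/(\Delta_j + 1/j))$.

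The main obstacle I anticipate is purely bookkeeping: making the ``$\Theta(j)$'' constants in Lemma~\ref{lem:jdist} fully explicit and checking that with the stated ranges ($r \in [n/4, n/2]$, $s \in [j/6, j/3]$, $j \le j^* = 2^{\lfloor \log m \rfloor - 2}$, $\epsilon < 1/c$) none of these bands escape the valid rank range $\{1, \dots, n+1\}$ or collide with the deleted top element of rank $n+1$ --- but these are exactly the feasibility checks already carried out inside the proofs of Lemmas~\ref{lem:eight} and~\ref{lem:nine}, so no new work is needed, only a citation of those bounds. A minor secondary point: the elements counted by different $q_j$ are genuinely disjoint as \emph{elements} only after we account for the bounded overlap of the distance bands; alternatively one can avoid even this by noting $\max_{j \in J} q_j \le \text{cost}$ and $\sum_{j} q_j \le |J| \cdot \max_j q_j$ is too lossy, so the bounded-overlap argument is the clean route. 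I would write the proof in two sentences: invoke Lemma~\ref{lem:jdist} for each $j$, observe the $O(1)$-fold overlap of the geometric family of rank-distance bands, and conclude by linearity of expectation.

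\begin{proof}[Proof of Lemma \ref{lem:costlower}]
For each $j \in J$, Lemma \ref{lem:jdist} provides a set $Q_j$ of rearranged elements, each of which has post-insertion rank $r'$ with $|r - r'| = \Theta(j)$, and with $\E[|Q_j|] = \Omega(1/(\Delta_j + 1/j))$. Fix constants $c_1 < c_2$ (independent of $j$) so that every element counted in $Q_j$ has $|r - r'| \in [c_1 j, c_2 j]$. Since $J$ is a geometric progression with ratio $2$, any fixed integer value of $|r - r'|$ lies in $[c_1 j, c_2 j]$ for at most $\lceil \log_2(c_2 / c_1) \rceil + 1 = O(1)$ values of $j \in J$. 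Hence each rearranged element is counted in at most $O(1)$ of the sets $Q_j$, so the total number of rearranged elements is at least $\Omega\!\left(\sum_{j \in J} |Q_j|\right)$, and this number is a lower bound on the cost of the insertion/deletion. Taking expectations and applying linearity of expectation together with Lemma \ref{lem:jdist},
$$\E[\text{cost}] \;=\; \Omega\!\left(\sum_{j \in J} \E[|Q_j|]\right) \;=\; \Omega\!\left(\sum_{j \in J} \frac{1}{\Delta_j + 1/j}\right),$$
as desired.
\end{proof}
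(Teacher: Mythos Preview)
Your proposal is correct and follows essentially the same approach as the paper: define $q_j$ via Lemma~\ref{lem:jdist}, observe that because $J$ is a geometric sequence each rearranged element satisfies $|r'-r|=\Theta(j)$ for only $O(1)$ values of $j\in J$, conclude that the total number of rearranged elements is $\Omega(\sum_j q_j)$, and apply linearity of expectation. Your write-up is slightly more explicit about the constants governing the band overlap, but the argument is the same.
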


\begin{proof}
Let $q$ be the number of elements that are rearranged by the insertion/deletion.  For each $j \in J$, define $q_j$ as in Lemma \ref{lem:jdist}.
Each element that is rearranged by the insertion/deletion has a rank $r'$ satisfying $|r' - r| = \Theta(j)$ for at most a constant number of $j \in J$. That is, each rearrangement is counted by at most $O(1)$ of the $q_j$'s. Thus  
$$q = \Omega\left(\sum_j q_j\right).$$
By Lemma \ref{lem:jdist}, it follows that
$$\E[q] = \Omega\left(\sum_{j \in J} \frac{1}{\Delta_j + 1/j}\right).$$
\end{proof}

 Lemma \ref{lem:costlower} considers an insertion with a random rank $r \in [n/4, n / 2]$, but this trivially implies the same claim for a random rank $r \in \{0, \ldots, n\}$ (i.e., Proposition \ref{prop:costlower}). Thus the section is complete. \section{Upper Bound For Sparse Arrays}
\label{sec:sparse}

Define the \defn{$\tau$-sparse list-labeling problem} to be the list-labeling problem in the regime of $n \le m / \tau$. Previously in this paper, we studied the setting where $\tau = O(1)$. In this section, we extend our upper bounds to apply to the sparse regime where $m = \tau n$ for some $16 \le \tau \le n^{o(1)}$. We do this via a simple general-purpose reduction from the sparse setting to the linear setting.

We will prove the following proposition:

\begin{proposition}

Let $T$ be a non-negative convex function  satisfying $T(\Theta(i)) = \Theta(T(i))$ for all $i$ and satisfying $T(0) = 0$. Let $16 \le \tau \le n^{o(1)}$. If there exists a 2-sparse list-labeling solution whose expected amortized cost is upper bounded by $T(\log n)$, then there exists a $\tau$-sparse list-labeling solution whose expected amortized cost is upper bounded by
$$O\left(T\left(\frac{\log n}{\log \tau}\right) \cdot \log \tau\right).$$
\label{prop:sparsered}
\end{proposition}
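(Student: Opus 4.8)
The plan is to build the $\tau$-sparse solution recursively by nesting a two-level structure: a top-level ``coarse'' list-labeling instance that groups the $n$ elements into blocks, and within each block a recursively-built $\tau'$-sparse instance for a smaller stretch factor $\tau'$. Concretely, I would partition the array of $m = \tau n$ slots into $n / B$ contiguous \emph{super-slots}, each of size $\tau B$, for a block size $B$ to be chosen; the top level treats each super-slot as a single ``slot'' and runs the given 2-sparse solution on roughly $n/B$ logical items (so the top-level array has $\Theta(n/B)$ super-slots, which is a 2-sparse instance as long as each super-slot is used to hold $\le B/2$ actual elements on average, i.e.\ $\tau B \ge 4B$, true since $\tau \ge 16$). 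Inside each super-slot we store up to $\Theta(B)$ elements in $\tau B$ slots, which is again a sparse instance but now with $\Theta(n/B) = n^{o(1)}$ fewer... — more precisely, I will recurse so that after $k = \Theta(\log \tau)$ levels the per-level stretch factor is $\Theta(1)$ and the leaf instances are $2$-sparse instances on $n^{\Theta(1/k)}$-sized... Let me restate: choose the per-level blow-up to be a constant $c$ (say $c = 4$), so it takes $L = \Theta(\log_c \tau) = \Theta(\log \tau)$ levels to consume the full stretch factor $\tau$, and at each level we invoke the two-sparse solution.

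The key steps, in order, are: (1) describe the recursive decomposition and check the slot/element arithmetic at every level so that each invoked instance really is (at least) 2-sparse and no level overflows --- this uses $\tau \ge 16$ to give slack; (2) describe how an insertion at rank $r$ is routed: it descends through the $L$ levels, at each level possibly triggering a rebalance of one node of the top-level 2-sparse structure at that level, which by definition has expected amortized cost $T(\log N_\ell)$ where $N_\ell$ is the number of logical items at level $\ell$; (3) observe that $N_\ell \le n$ for all $\ell$ but the \emph{cost of a unit of work} at level $\ell$ is scaled by the size of a level-$\ell$ block, so the contribution of level $\ell$ to the amortized cost is (number of elements moved per logical move at level $\ell$) $\times$ $T(\log N_\ell)$; (4) sum the geometric-looking series over the $L = \Theta(\log\tau)$ levels. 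The crucial point making the bound come out to $O(T(\log n/\log\tau)\cdot \log\tau)$ rather than $O(T(\log n)\cdot\log\tau)$ is that we should run the recursion ``bottom-heavy'': arrange that the logical instance at each level has only $n^{1/L}$-ish \emph{effective} size in the sense that matters for $T$ --- but in fact $N_\ell$ can be as large as $n$. So the real mechanism must be different: instead, the top-level 2-sparse instance should be run on items that are themselves \emph{blocks} of size $n^{1 - 1/L}$, so it has only $n^{1/L}$ logical items, giving cost $T(\log(n^{1/L})) = T((\log n)/L)$ per top-level logical move; and one top-level logical move costs $\Theta(n^{1-1/L})$ element moves, but this is amortized against the $\Theta(n^{1-1/L})$ insertions that must happen into a block before it needs to move up. So each level contributes $O(T((\log n)/L))$ amortized, and there are $L = \Theta(\log\tau)$ levels, giving the claimed $O(T((\log n)/\log\tau)\cdot\log\tau)$ after invoking convexity and $T(\Theta(i)) = \Theta(T(i))$ to absorb constants. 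Correctness of the per-level amortization (that a block really does absorb $\Theta(\text{its size})$ operations between promotions, in a history-independent / oblivious-adversary-robust way) is what needs care.

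The main obstacle I expect is step (4) together with making the amortization rigorous against an oblivious adversary: a naive block decomposition lets the adversary hammer insertions into one block, forcing that block to split/rebalance/promote far more often than once per $\Theta(\text{block size})$ operations, which would ruin the amortized bound. The fix is to use the given 2-sparse solution not on a fixed block decomposition but as the mechanism that \emph{itself} decides block boundaries --- i.e.\ the items handed to the top-level 2-sparse instance are maintained by that instance's own rebalancing, so the ``a block absorbs $\Theta(B)$ operations'' claim becomes exactly the amortized guarantee $T(\log N)$ of the sub-instance, applied with $N = n^{1/L}$ logical items and with each ``move'' by the sub-instance scaled up by the $\Theta(n^{1-1/L})$-sized payload it represents. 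Verifying that this scaling is legitimate --- that moving a logical item of the top-level instance costs $\Theta(n^{1-1/L})$ real element moves and no more, and that the recursive structure inside each payload block is consistent under these promotions --- is the technical heart, and also where the hypotheses $T$ convex, $T(0) = 0$, and $T(\Theta(i)) = \Theta(T(i))$ get used to collapse the $L$-fold sum and the constant-factor slop into the stated bound.
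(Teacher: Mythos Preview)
Your overall architecture is the same as the paper's: a $\Theta(\log\tau)$-level recursive decomposition where each level runs the given $2$-sparse solution on $n^{\Theta(1/\log\tau)}$ logical ``block'' items, contributing $O(T((\log n)/\log\tau))$ amortized per level, with convexity and $T(\Theta(i))=\Theta(T(i))$ absorbing constants. The paper packages this as an induction on $\tau$ (reducing $\tau$-sparse to $(\tau/16)$-sparse) via an intermediate abstraction it calls \emph{bucketed list labeling}, but the content is what you describe.

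The one place your write-up goes astray is the amortization mechanism you flag as ``the technical heart.'' Your proposed fix---``use the given $2$-sparse solution \ldots\ as the mechanism that itself decides block boundaries''---conflates two distinct pieces. The $2$-sparse solution only places logical items into logical slots; it has no say over what the logical items \emph{are} or when they split/merge. The paper handles the adversary-hammering worry you raise with a separate, elementary \emph{hysteresis} rule on block sizes: merge a block with a neighbor when its size drops below $2N/m$, split it when its size exceeds $8N/m$. After any split or merge the resulting block sits in the range $[4N/m, 5N/m]$, so it must absorb $\Theta(N/m)$ further operations before the next split/merge---regardless of where the adversary aims. Hence $k$ real operations trigger only $O(km/N)$ logical operations on the $2$-sparse instance; each logical operation has amortized cost $T(\log m)$ there, and each unit of logical cost translates to $O(N/m)$ real moves, so the product telescopes to $O(T(\log m))$ amortized per real operation at that level. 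This hysteresis argument is what actually replaces your ``a block absorbs $\Theta(B)$ operations'' claim, not the $2$-sparse solution's own rebalancing guarantee. Once you swap in this mechanism, the rest of your outline goes through as written.
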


Combining Proposition \ref{prop:sparsered} and Corollary \ref{thm:linearcase}, we obtain the following upper bound for the sparse regime:
\begin{theorem}
For $16 \le \tau \le n^{o(1)}$, there exists a solution to the $\tau$-sparse list-labeling problem with expected amortized cost upper bounded by 
$$O\left(\frac{\log^{3/2} n}{\sqrt{\log \tau}}\right).$$
\label{thm:sparse}
\end{theorem}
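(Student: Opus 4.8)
The plan is to prove Theorem~\ref{thm:sparse} by combining the reduction in Proposition~\ref{prop:sparsered} with the linear-regime upper bound. First I would verify that the linear-regime solution from Corollary~\ref{thm:linearcase} supplies a valid $2$-sparse list-labeling solution to feed into Proposition~\ref{prop:sparsered}: take $\tau = 2$, so $m = 2n$, and Corollary~\ref{thm:linearcase} gives expected cost $O(\log^{3/2} n)$ per operation, hence expected amortized cost at most $T(\log n)$ with $T(x) = c\,x^{3/2}$ for a suitable constant $c$. I would then check the hypotheses on $T$ demanded by Proposition~\ref{prop:sparsered}: $T(x) = c x^{3/2}$ is non-negative, convex on $x \ge 0$, satisfies $T(0)=0$, and satisfies $T(\Theta(i)) = \Theta(T(i))$ since scaling the argument by a constant scales $x^{3/2}$ by a constant. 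So $T$ is an admissible cost function.

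With the hypotheses checked, I would invoke Proposition~\ref{prop:sparsered} directly. For $16 \le \tau \le n^{o(1)}$ it yields a $\tau$-sparse list-labeling solution with expected amortized cost
$$O\!\left(T\!\left(\frac{\log n}{\log \tau}\right)\cdot \log \tau\right)
= O\!\left(\left(\frac{\log n}{\log \tau}\right)^{3/2}\!\!\cdot \log \tau\right)
= O\!\left(\frac{\log^{3/2} n}{\log^{3/2}\tau}\cdot \log\tau\right)
= O\!\left(\frac{\log^{3/2} n}{\sqrt{\log \tau}}\right),$$
which is exactly the bound claimed in the theorem. That is the entire argument: it is a one-line plug-in once the reduction (Proposition~\ref{prop:sparsered}) and the linear-regime bound (Corollary~\ref{thm:linearcase}) are in hand.

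The only point that requires a little care — and the one I would expect to be the main (minor) obstacle — is the boundary conditions on the parameter range. I would want to confirm that ``$16 \le \tau$'' in the theorem matches the ``$16 \le \tau$'' hypothesis of Proposition~\ref{prop:sparsered} (it does, by design), and that the substitution $n \mapsto n$ inside $T$ is legitimate: the proposition's $T$ is evaluated at $\log n$ where $n$ is the number of elements in the $\tau$-sparse instance, and the $2$-sparse subroutine is applied to instances with at most $n$ elements, so $T(\log n')$ for $n' \le n$ is still $O(\log^{3/2} n)$ by monotonicity of $T$. Since $\tau \le n^{o(1)}$ forces $\log\tau = o(\log n)$, the final bound is a genuine asymptotic improvement over the folklore $O(\log^2 n / \log\tau)$ across the whole stated range. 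No genuinely new estimates are needed beyond what Sections~\ref{sec:zeno}--\ref{sec:dense} and the reduction already provide.
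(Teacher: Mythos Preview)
Your proposal is correct and follows precisely the paper's approach: the paper proves Theorem~\ref{thm:sparse} by directly combining Proposition~\ref{prop:sparsered} with Corollary~\ref{thm:linearcase}, exactly as you outline. Your verification that $T(x)=cx^{3/2}$ satisfies the convexity, non-negativity, $T(0)=0$, and $T(\Theta(i))=\Theta(T(i))$ hypotheses is the only detail the paper leaves implicit, and the arithmetic simplification is identical.
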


To prove Proposition \ref{prop:sparsered}, we introduce an intermediate problem that we call the \defn{bucketed list-labeling problem}. In this problem, there are $m$ buckets and up to $N = \Omega(m)$ elements at a time, with elements being inserted and deleted as in the classical list-labeling problem. Elements must be assigned to buckets so that, if two elements $a$ and $b$ are assigned to buckets $u \neq v$, then $a < b \iff u < v$. The cost of adding/removing an element to/from a bucket is $0$ when that element is inserted/deleted, but the cost of rearranging items is equal to the \emph{sum} of the sizes of the buckets containing those items. (So even moving one item from a bucket $u$ to a bucket $v$ costs $|u| + |v|$). Finally, each bucket has a maximum capacity of $8 N / m$ elements.

Our next lemma reduces bucketed list labeling to 2-sparse list labeling.
\begin{lemma}
Let $T$ be a non-negative convex function.  If there exists a 2-sparse list-labeling solution whose expected amortized cost is upper bounded by $T(\log n)$, then there exists a bucketed list-labeling solution whose expected amortized cost is upper bounded by
$$O\left(T\left(\log m\right)\right).$$
\label{lem:rone}
\end{lemma}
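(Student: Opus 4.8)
The plan is to simulate a 2-sparse list-labeling solution on a suitably shrunken universe, using the 2-sparse structure as a ``directory'' that routes elements into the $m$ buckets. Given a bucketed instance with $N = \Omega(m)$ elements and $m$ buckets of capacity $8N/m$, I would maintain internally a 2-sparse list-labeling data structure $\mathcal{D}$ over an array of size $m$ holding at most $m/2$ \emph{representative} elements—one representative per nonempty bucket, chosen to be (say) the minimum element of that bucket. The array slot that $\mathcal{D}$ assigns to a representative tells us which physical bucket that representative's group of elements lives in, and the left-to-right order of slots in $\mathcal{D}$ matches the left-to-right order of buckets, so the ordering invariant for bucketed list labeling is inherited directly from the ordering invariant of $\mathcal{D}$.

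The key steps, in order. First I would fix the correspondence: group the $N$ elements into consecutive runs of $\Theta(N/m)$ elements each, so there are $\Theta(m)$ groups, each small enough to fit in a bucket (capacity $8N/m$) with room to spare; the representative of a group is its minimum, and the groups are exactly the ``blocks between consecutive representatives.'' Second, I would specify how an \textsc{Insert}$(r)$ is handled: locate the group containing rank $r$, add the element to that group's bucket; if the bucket now exceeds, say, $4N/m$, split the group into two halves, which means inserting one new representative into $\mathcal{D}$ (cost charged to $\mathcal{D}$, which by hypothesis is $O(T(\log m))$ representatives moved, each move costing at most one full bucket of size $O(N/m)$). Conversely, \textsc{Delete} removes an element and, if a bucket drops below $N/m$, merges it with a neighbor and deletes a representative from $\mathcal{D}$. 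Third—the crucial accounting step—I would argue that between the creation of a representative and its destruction, $\Omega(N/m)$ insert/delete operations must have touched its group (because splits happen at $4N/m$ and merges at $N/m$, so the group size must change by $\Omega(N/m)$), so the amortized number of representative insertions/deletions into $\mathcal{D}$ per bucketed operation is $O(1)$. Fourth, I would combine: a bucketed operation costs $O(N/m)$ directly (touching one or two buckets) plus, amortized, $O(1)$ operations on $\mathcal{D}$; each operation on $\mathcal{D}$ moves an expected $O(T(\log m))$ representatives, and each such move costs $O(N/m)$ in the bucketed cost model; since $N = \Omega(m)$ and the number of representatives is $\Theta(m)$, we have $N/m = O(N/m)$—I will want to normalize so that amortized cost comes out as $O(T(\log m))$ by observing that the $N/m$ factors cancel against the per-element granularity, i.e. cost should be measured per element and there are $\Theta(N/m)$ elements per bucketed ``unit.'' I would restate the cost bookkeeping carefully so the final amortized bound is exactly $O(T(\log m))$.

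The main obstacle I anticipate is the cost-model bookkeeping in that fourth step: the bucketed cost model charges the \emph{sum of bucket sizes} for a rearrangement, so one move of one representative in $\mathcal{D}$ translates to moving an entire bucket of $\Theta(N/m)$ elements, i.e. cost $\Theta(N/m)$, and I must make sure these $N/m$ factors are handled consistently against the fact that there are $\Theta(m)$ representatives and $\Theta(N)$ total elements. Concretely, if $\mathcal{D}$ incurs expected amortized cost $T(\log m)$ \emph{per operation on $\mathcal{D}$} and there are $\Theta(1)$ operations on $\mathcal{D}$ per bucketed operation, and each unit of $\mathcal{D}$-cost is a bucket-move of size $\Theta(N/m)$, then the bucketed amortized cost is $\Theta((N/m) \cdot T(\log m))$ per bucketed operation—but a ``bucketed operation'' inserts/deletes a single element, and I believe the intended normalization treats bucket-size-$\Theta(N/m)$ as the natural unit, so one should either rescale $N/m \to 1$ (legitimate since $T$ is only required to be convex with $T(\Theta(i)) = \Theta(T(i))$, not to absorb multiplicative constants in its argument) or, more cleanly, observe that the reduction is stated for amortized cost and the $\Theta(N/m)$ elements sharing a bucket can share the cost. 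I would resolve this by being explicit that the bucketed amortized cost per operation is $O(T(\log m))$ after this normalization, and by using convexity of $T$ only where genuinely needed (to combine the $\mathcal{D}$-cost of a split/merge, which may move several representatives, into a single $T$ evaluation), with $T(0)=0$ handling the empty-bucket edge cases.
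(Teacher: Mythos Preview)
Your construction is essentially the paper's: group elements into $\Theta(N/m)$-sized blocks, maintain the blocks as ``elements'' in a 2-sparse list-labeling structure over the $m$ bucket slots, and split/merge blocks by hysteresis. The gap is entirely in your step four. You correctly argue that each split/merge is preceded by $\Omega(N/m)$ bucketed operations, but then you round the resulting amortized count of $\mathcal{D}$-operations up to $O(1)$ per bucketed operation. That rounding is exactly what loses you the proof: the tight count is $O(m/N)$, not $O(1)$, and you need the tight count. With it, the bucketed amortized cost is
\[
O\!\left(\frac{m}{N}\right)\cdot O\bigl(T(\log m)\bigr)\cdot O\!\left(\frac{N}{m}\right)=O\bigl(T(\log m)\bigr),
\]
and the $N/m$ factors cancel honestly. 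Your attempted ``normalization'' fixes (rescaling $N/m\to 1$, or letting the $\Theta(N/m)$ elements in a bucket ``share the cost'') do not work: the bucketed problem allows $N$ to be arbitrarily large relative to $m$, and each bucketed operation is a single insertion/deletion whose cost must be bounded on its own amortized basis. Use the $O(m/N)$ bound you already derived and the argument closes immediately.

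Two minor points. First, in this cost model the direct cost of an insert/delete that does not trigger a split/merge is $0$, not $O(N/m)$; only rearrangements are charged. Second, convexity of $T$ and $T(0)=0$ play no role in this lemma---the paper's proof here is pure counting, and those hypotheses are used only in the companion reduction from $\tau$-sparse to bucketed list labeling.
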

\begin{proof}
An important component of our bucketed list-labeling solution is to partition the elements into up to $m / 2$ disjoint \defn{blocks}, where each block contains up to $8N / m$ consecutive elements. We maintain these blocks using hysteresis: every time that a block's size falls below $2 N / m$ (due to deletions), we merge it with an adjacent block (unless there is only one block in the system); and every time that a block's size exceeds $8N / m$ (due to insertions or merges), we split that block into two blocks of equal size. Note that a block's size can never exceed $10N / m$ because a block of size $\leq 8N / m$ can be merged with a block of size $<2 N / m$, and there is no way to create a larger block. Thus, after a split, the size of each resulting block is between $4N / m$ and $5N / m$. Starting from an empty array, during a sequence of $k$ insertions/deletions, the number of block splits/merges will be at most $O(k m / N)$.

To construct a bucketed list-labeling solution, we treat the $m$ buckets as slots in an array of size $m$, and we treat the up-to-$m / 2$ blocks as elements that reside in that array. This allows for us to treat the bucketed list-labeling problem as a 2-sparse list-labeling problem: block splits corresponded to element insertions in the 2-sparse list-labeling problem; and block merges correspond to element deletions in the 2-sparse list-labeling problem. 

If an operation incurs cost $S$ in the 2-sparse list-labeling problem, then it incurs cost $O(S \cdot N / m)$ in the bucketed list-labeling problem (since each element in the former problem corresponds to a block of $O(N / m)$ elements in the latter problem). On the other hand, starting from an empty array, if $k$ insertions/deletions are performed in the bucketed list-labeling problem, the number of insertions/deletions in the 2-sparse list-labeling problem will only be $O(k m / N)$.  Combining these with the assumption that the 2-sparse list-labeling problem incurs cost $T(\log n)$, we have that the total cost of the bucketed list-labeling problem is $O(T(\log(m/2))\cdot N/m \cdot k \cdot m/N) = O(kT(\log m))$, thus the amortized cost of the bucketed list-labeling problem is $O(T(\log m))$.
\end{proof}

Next we reduce sparse list labeling to bucketed list labeling.
\begin{lemma}
Let $T$ be a non-negative convex function satisfying $T(\Theta(i)) = \Theta(T(i))$ for all $i$ and satisfying $T(0) = 0$. Let $16 \le \tau \le n^{o(1)}$. If there exists a bucketed list-labeling solution whose expected amortized cost is upper bounded by $T(\log m)$, then there exists a $\tau$-sparse list-labeling solution whose expected amortized cost is upper bounded by
$$O\left(T\left(\frac{\log n}{\log \tau}\right) \cdot \log \tau\right).$$
\label{lem:rtwo}
\end{lemma}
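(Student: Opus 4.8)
The plan is to build a recursive, tree-structured data structure of depth $\Theta(\log n / \log \tau)$, where at each level we use a bucketed list-labeling instance with a $\poly(\tau)$ blow-up in array size (so that each level is ``locally sparse'' by a factor of roughly $\tau$, matching what the bucketed solution can absorb). Concretely, I would split the range of $n$ elements into groups of size roughly $\tau^{c}$ for a suitable constant $c$, recurse on the sequence of group-representatives, and within each group run a bucketed list-labeling instance that maps the up-to-$\tau^c$ elements of the group into the corresponding slice of the real array. The bucketed problem is the right intermediate abstraction here precisely because the ``cost of rearranging an item equals the size of its bucket'' rule is what lets us charge a move at a coarse level of the recursion (which shuffles an entire group, i.e.\ many array slots) to a single bucket-rearrangement in the bucketed instance one level up. Since $m = \tau n$, we have $\Theta(\log(n)/\log\tau)$ levels, and at each level the relevant bucketed instance has $\Theta(\log(\text{local array size})) = \Theta(\log \tau)$ as its parameter, so the per-level amortized cost is $O(T(\Theta(\log\tau)))$; but a single user operation can propagate up the recursion, and the total cost across all levels is what we must bound.

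The key steps, in order: (1) Fix the branching factor: let each node of the recursion own a contiguous slice of the array and hold $B = \Theta(\tau^{c})$ children, choosing $c$ so that the induced ``bucket capacity'' constraint $8N/m$ of the bucketed problem is satisfied with room for hysteresis, and so that the sub-array owned by a child is still at least a factor $2$ sparser than its element count (this is what keeps every level a legitimate bucketed instance). (2) Use the standard merge/split hysteresis (exactly as in Lemma \ref{lem:rone}) at every level so that, over $k$ operations, a level contributes only $O(k / B)$ structural operations to the level above it — this geometric decay in the number of propagated operations is the crux. (3) Invoke the assumed bucketed solution with cost $T(\log m_\ell)$ at level $\ell$, where $m_\ell$ is that level's local array size; since $m_\ell = \Theta(\tau^{O(1)})$ at every level, $T(\log m_\ell) = T(\Theta(\log\tau)) = \Theta(T(\log\tau))$ using the hypotheses $T(\Theta(i)) = \Theta(T(i))$ and $T(0)=0$ together with convexity. (4) Sum the contributions: level $\ell$ handles $O(k/B^{\ell})$ operations, each of amortized cost $O(T(\log\tau))$, but each such operation also triggers a cascade — here I would instead account top-down, charging to a user operation the full root-to-leaf path it touches, giving $O\big(\frac{\log n}{\log\tau}\big)$ levels each costing $O(T(\log\tau))$, for total amortized cost $O\big(\frac{\log n}{\log\tau}\cdot T(\log\tau)\big)$. (5) Finally, reconcile this with the target bound $O\big(T(\frac{\log n}{\log\tau})\cdot\log\tau\big)$: this is where convexity of $T$ enters decisively — for a convex $T$ with $T(0)=0$, the map $i\mapsto T(i)/i$ is nondecreasing, so $\frac{\log n}{\log\tau}\cdot T(\log\tau) = \log\tau \cdot \big(\frac{\log n}{\log\tau}\cdot\frac{T(\log\tau)}{\log\tau}\cdot\log\tau\big)$; more directly, $a\cdot T(b) \le \text{(something)}\cdot T(ab)$ type inequalities for convex $T$ with $T(0)=0$ let us trade the factor $\frac{\log n}{\log\tau}$ outside $T$ for putting the whole quantity $\frac{\log n}{\log\tau}$ inside $T$ at the cost of a $\log\tau$ factor. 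I would state and prove this elementary convexity inequality as a small sub-claim.

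The main obstacle I anticipate is step (5): getting the algebra to land exactly on $O\big(T(\frac{\log n}{\log\tau})\cdot\log\tau\big)$ rather than the more naive $O\big(\frac{\log n}{\log\tau}\cdot T(\log\tau)\big)$. These two are equal when $T$ is linear but differ for superlinear $T$ (e.g.\ $T(i) = i^{3/2}$, which is the case of interest), and it is the convex, superlinear case where the target bound is the \emph{stronger} one — so the recursion must genuinely exploit that the ``hard'' logarithmic cost is incurred only once, at the finest scale, and not independently re-incurred at every one of the $\frac{\log n}{\log\tau}$ levels. The resolution is that a single user operation does \emph{not} cause a full-cost bucketed operation at every level: by the hysteresis of step (2), it causes a structural (split/merge) event at level $\ell$ only once per $B^\ell$ operations in expectation (amortized), so the amortized per-operation cost is $\sum_\ell B^{-\ell} \cdot T(\log\tau) = O(T(\log\tau))$ from propagation alone — but one must \emph{also} pay, on every operation, the cost of descending the tree and touching one bucketed instance per level at the ``inexpensive'' end. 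Carefully separating the cheap descent cost from the rare-but-expensive rebalancing cost, and then bundling the $\frac{\log n}{\log\tau}$ cheap descents into a single application of $T$ via convexity, is the delicate part; I would handle it by defining the per-level cost function carefully and applying the convexity sub-claim once to the vector of per-level costs.
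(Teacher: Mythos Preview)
Your recursive structure is dual to the paper's, and this duality is where the argument breaks. You take depth $\Theta(\log n/\log\tau)$ with branching $B=\Theta(\tau^c)$, so each level's bucketed instance has $\Theta(\log\tau)$ as its $\log m$ parameter and contributes $\Theta(T(\log\tau))$ amortized cost. The honest bound this yields is
\[
\frac{\log n}{\log\tau}\cdot T(\log\tau),
\]
exactly the ``naive'' quantity you write in step~(4). Your step~(5) then needs the inequality
\[
\frac{\log n}{\log\tau}\cdot T(\log\tau)\;\le\;O\!\left(\log\tau\cdot T\!\left(\frac{\log n}{\log\tau}\right)\right),
\]
i.e.\ $a\,T(b)\le O(b\,T(a))$ with $a=\log n/\log\tau$ and $b=\log\tau$. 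Since $T(x)/x$ is nondecreasing, this holds precisely when $b\le O(a)$, that is, when $\log\tau\le O(\sqrt{\log n})$. The lemma, however, allows $\tau$ up to $n^{o(1)}$, and in the regime $\log\tau\gg\sqrt{\log n}$ your bound is strictly worse: for $T(i)=i^{3/2}$ and, say, $\log\tau=(\log n)^{0.9}$, your expression is $(\log n)^{1.45}$ while the target is $(\log n)^{1.05}$. The hysteresis observation in your last paragraph does not rescue this: every user operation still touches one bucketed instance per level, and the amortized cost at each level is genuinely $\Theta(T(\log\tau))$; the geometric decay applies to how often splits propagate, but those costs are already folded into the bucketed amortization.

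The paper resolves this by choosing the parameters the other way around. It partitions into $n^{c/\log\tau}$ chunks (so the top-level bucketed instance has $\log m = \Theta(\log n/\log\tau)$ and already costs $\Theta(T(\log n/\log\tau))$), and recurses within each chunk with sparsity $\tau' = \tau/16$. The induction is on $\tau$, giving depth $\Theta(\log\tau)$. Convexity is still used, but for a different inequality: $T\big((1-1/\log\tau)\cdot x\big)\le(1-1/\log\tau)\,T(x)$, which produces a per-level slack of $T(\log n/\log\tau)$ that telescopes and absorbs the top-level bucketed cost at each step. The essential missing idea in your proposal is that the branching factor should be chosen so that $\log(\text{number of buckets})$ equals $\Theta(\log n/\log\tau)$, not $\Theta(\log\tau)$.
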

\begin{proof}
We may assume without loss of generality that $\tau$ is a natural number. We prove the result by induction on $\tau$. The base case of $16 \le \tau \le O(1)$ is trivial, since we can break the array into $\Theta(n)$ chunks of size $\Theta(1)$ and treat each chunk as a bucket in the bucketed list-labeling problem.

Now suppose that $\omega(1) \le \tau \le n^{o(1)}$. Let $c$ be a large positive constant (to be selected later), and partition the array into $n^{c/\log \tau}$ chunks of size $m ' = \lfloor m / n^{c/\log \tau}\rfloor$ slots each (possibly orphaning $O(n^{c / \log \tau})$ slots due to rounding errors). Treat each of the $n^{c/\log \tau}$ chunks as a bucket, and assign elements to chunks using bucketed list labeling. By assumption, bucketed list labeling has expected amortized cost $T(\log m)$ where $m$ is the number of buckets, and plugging in the value $n^{c/\log \tau}$ for $m$ we get that the expected amortized cost of the bucketed list labeling instance is: 
$$T(\log n^{c/\log \tau}) = T\left( \frac{c \log n}{\log \tau}\right)$$
per operation. Since $T(\Theta(i)) = \Theta(T(i))$, we can further bound the above cost to be at most 
\begin{equation}
    c' \cdot T\left( \frac{\log n}{\log \tau}\right),
    \label{eq:sparsea}
\end{equation}
where $c'$ is a constant determined by $c$.

By design, each chunk contains at most $n' = 8 n / n^{c/\log \tau}$ elements, so 
\begin{align*}
    \frac{m'}{n'} \ge \frac{\lfloor m / n^{c/\log \tau}\rfloor}{8 n / n^{c/\log \tau}} \ge \frac{m}{16n} \ge \tau / 16.
\end{align*}
Thus we can recursively implement each chunk as an instance of $\frac{\tau}{16}$-sparse list labeling. By our inductive hypothesis for $\tau' = \frac{\tau}{16}$, we have that for every sufficiently large positive constant $Q$, the expected amortized cost of performing an insertion/deletion in a given chunk is at most 
\begin{align*}
& Q \cdot T\left(\frac{\log n'}{\log \tau'}\right) \cdot \log \tau' \\
& = Q \cdot T\left(\frac{(1 - c / \log \tau + 3/\log n) \log n }{(1 - 4/\log \tau)\log \tau}\right) \cdot (1-4/\log \tau) \log \tau \\
& \le Q \cdot T\left(\frac{(1 - 1 / \log \tau) \log n }{\log \tau}\right) \cdot \log \tau \hfill \hspace{7mm}\text{(since }c\ge 8\text{)}\\
& \le Q \cdot (1 - 1 / \log \tau) \cdot T\left(\frac{\log n }{\log \tau}\right) \cdot \log \tau \hspace{5mm}\text{(since }T\text{ is convex and }T(0) = 0\text{)}\\
& \le Q \cdot T\left(\frac{\log n }{\log \tau}\right) \cdot \log \tau - Q \cdot  T\left(\frac{\log n }{\log \tau}\right).
\end{align*}
Combining this with \eqref{eq:sparsea}, the total expected amortized cost of an insertion/deletion is at most 
\begin{align*}
& c' \cdot T\left( \frac{\log n}{\log \tau}\right) +  Q \cdot T\left(\frac{\log n }{\log \tau}\right) \cdot \log \tau - Q \cdot  T\left(\frac{\log n }{\log \tau}\right). \\
\end{align*}
Choosing $Q$ to be at least $c'$, this is at most
$$Q \cdot T\left(\frac{\log n }{\log \tau}\right) \cdot \log \tau,$$
which completes the proof by induction.
\end{proof}

Lemmas \ref{lem:rone} and \ref{lem:rtwo} directly imply Proposition \ref{prop:sparsered}, completing the section. \section{Related work}\label{sec:related}

\paragraph{Formulations and reformulations.}
The list-labeling problem has been independently formulated several times and under various names. It was first studied by Itai, Konheim and Rodeh~\cite{ItaiKoRo81} as a sparse table scheme for implementing priority queues. Willard~\cite{Willard81} considered the \emph{file-maintenance problem}, where records are inserted and deleted in a sequentially ordered file. Dietz~\cite{Dietz82} formulated the similar  \emph{order-maintenance problem} of maintaining order in a linked list with efficient insertions. Andersson~\cite{Andersson89} and Andersson and Lai~\cite{AnderssonLa90} studied a version of the problem in the context of balanced binary search trees, which Galperin and Rivest~\cite{GalperinR93} independently studied under the name \emph{scapegoat trees}. Raman~\cite{Raman99} posited an analogous problem related to building locality preserving dictionaries. 

This problem has mainly been studied in four regimes for the size $m$ of the label array: dense ($m=(1+o(1))n$), linear ($m=(1 + \Theta(1))n$), polynomial ($m=n^{1 + \Theta(1)}$), and superpolynomial ($m=n^{\omega(1)}$).

\paragraph{Upper and lower bounds in the linear regime.}
In the linear regime, Itai, Konheim and Rodeh\cite{ItaiKoRo81}, first proved that items can be inserted with $O(\log^2 n)$ amortized cost. Various subsequent works have made improvements or simplifications to the algorithms achieving this cost, but the upper bound has remained unchanged. Willard~\cite{Willard82, Willard86, Willard92} deamortized this result to a $O(\log^2 n)$ worst-case cost. Bender, Cole, Demaine, Farach-Colton and Zito~\cite{BenderCoDe02twosimplified},  Bender, Fineman, Gilbert, Kopelowitz and Montes~\cite{BenderFiGi17} and Katriel~\cite{Katriel02} provided simplified algorithms for this result for the {order-maintenance problem}. Itai and Katriel~\cite{ItaiKa07} additionally simplified the algorithm for the amortized upper bound. 

The list-labeling problem where $m=(1+\varepsilon)n$, and where the gap between any two inserted items is $O(1)$ is often called the \emph{packed-memory array problem}, for which bounds of $O(\varepsilon^{-1} \log^2 n)$ are known~\cite{BenderDeFa05,BenderDeFa00,BenderFiGi05}.  
Bender and Hu~\cite{BenderHu07} provided an \emph{adaptive} packed-memory array algorithm, that is, it matches the $O(\log^2n)$  worst case insertion cost in the linear regime while achieving cost of $O(\log n)$ on certain common classes of instances. 
Bender, Berry, Johnson, Kroeger, McCauley, Phillips, Simon, Singh and Zage~\cite{BenderBeJo16} presented a history-independent packed-memory array which again matches the existing upper bound in the linear regime.

Dietz and Zhang~\cite{dietz1990lower} proved a lower bound on insertion costs of $\Omega(\log^2 n)$ amortized per insertion in the linear regime for the natural class of \emph{smooth} algorithms, where the relabelings are restricted to evenly rebalance elements across a contiguous subarray.  Bul\'anek, Kouck\'y and Saks.~\cite{BulanekKoSa12} showed a $\Omega(\log^2 n)$ lower bound for deterministic algorithms in the linear regime, and thus proved that the best known upper bounds were tight for deterministic algorithms.  The best general lower bound is $\Omega(\log n)$ in the linear regime~\cite{BulanekKoSa13}.

\paragraph{Other upper bounds.}
In the dense setting, Andersson and Lai~\cite{AnderssonLa90}, Zhang~\cite{zhang1993density}, and Bird and Sadnicki~\cite{BirdSa07} showed an $O(n\log^3n)$ upper bound for filling an array from empty to full for $m=n$. For arrays of polynomial size, it was known as a folklore algorithm that an amortized $O(\log n)$ insertion cost can be achieved by modifying the techniques in~\cite{ItaiKoRo81}. Kopelowitz~\cite{Kopelowitz12} extended this to a worst case upper bound. This bound was also matched in the balanced search tree setting~\cite{Andersson89,GalperinR93}. In the superpolynomial array regime, Babka, Bul\'anek, Cun\'at, Kouck\'y and Saks~\cite{BabkaBCKS19} showed an algorithm with amortized $O(\log n/\loglog m)$ cost when $m=\Omega(2^{\log^k n})$, which implies constant amortized cost in the pseudo-exponential regime of $m = 2^{n^{\Omega(1)}}$. 
Devanny, Fineman, Goodrich and Kopelowitz~\cite{DevannyFiGo17} studied the \emph{online house numbering problem}, which is similar to the list-labeling problem, except with the objective to minimize the maximum number of times an element is relabeled.

\paragraph{Other lower bounds.} 
Dietz and Zhang~\cite{dietz1990lower} proved a lower bound of $\Omega(\log n)$ per insertion in the polynomial regime for \emph{smooth} algorithms. Bul\'anek, Kouck\'y and Saks~\cite{BulanekKoSa12} showed an $\Omega(n\log^3n)$ lower bound for $n$ insertions into an initially empty array of size $m=n+n^{1-\epsilon}$. Dietz, Seiferas and Zhang\cite{dietz2004tight} proved a lower bound of $\Omega(\log n)$ in the polynomial regime for general deterministic algorithms, with a simplification by Babka, Bul\'anek, Cun\'at, Kouck\'y, and Saks~\cite{BabkaBCKS12}.
Bulanek, Kouck\'y and Saks~\cite{BulanekKoSa13} also proved that the $\Omega(\log n)$ lower bound for the polynomial regime extends to randomized algorithms. In the superpolynomial regime, Babka, Bul\'anek, Cun\'at, Kouck\'y and Saks~\cite{BabkaBCKS19} showed a lower bound of $\Omega\left(\frac{\log n}{\loglog m - \loglog n}\right)$ for $m$ from $n^{1+C}$ to $2^n$, which reduces to a bound of $\Omega(\log n)$ for $m=n^{1+C}$. 

\paragraph{Theoretical Applications.}
Applications of list labeling include the diverse motivating problems under which it was first studied, such as priority queue implementation, ordered file maintenance, etc. 
Hofri and Konheim~\cite{HofriK87} studied a similar array structure for use in a \emph{control density array}, a sparse table that supports search, insert and deletion by keys. Fagerberg, Hammer and Meyer~\cite{FagerbergH019} used upper bounds from~\cite{ItaiKoRo81} for their rebalancing scheme, which maintains optimal height in a balanced B-tree. 

Bender, Demaine and Farach-Colton~\cite{BenderDeFa05} used the packed-memory array in their \emph{cache-oblivious B-tree} algorithm, so our result directly implies an improvement in that scheme.  Specifically, insertions into their B-tree take $O(\log_B N + (\log^2 N)/B)$ I/Os, and using our list-labeling algorithm, this is improved to  $O(\log_B N + (\log^{3/2} N)/B)$ I/Os. Brodal, Fagerberg and Jacob~\cite{BrodalFaJa02} and Bender, Duan, Iacono and Wu~\cite{BenderDuIa04} independently simplified the cache-oblivious B-tree algorithm. Bender, Fineman, Gilbert and Kuszmaul~\cite{BenderFiGi05} presented \emph{concurrent} cache-oblivious B-trees for the distributed setting. Bender, Farach-Colton and Kuszmaul~\cite{BenderFaKu06} described \emph{cache-oblivious string B-trees} for improved performance on variable length keys, compressed keys, and range queries.   All of these cache-oblivious algorithms use packed-memory arrays.

In their results on the \emph{controller problem} for managing global resource consumption in a distributed network, Emek and Korman~\cite{EmekKo11} reduced the list-labeling problem to prove their lower bounds. 
Bender, Cole, Demaine, Farach-Colton and Zito~\cite{BenderCoDe02twosimplified} also applied list labeling lower bounds to the problem of maintaining a dynamic ordered set which supports traversals in the cache-oblivious and sequential-access models. 
Kopelowitz~\cite{Kopelowitz12} studied the \emph{predecessor search on dynamic subsets of an ordered dynamic list problem}, which combines the order-maintenance problem with the \emph{predecessor problem} of maintaining dynamic sets which support predecessor queries.
Nekrich used techniques for linear list labeling  from~\cite{ItaiKoRo81} in data structures supporting various problems related to querying points in planar space, such as orthogonal range reporting~\cite{Nekrich07, Nekrich09}, the stabbing-max problem\cite{Nekrich11}, and the related problem of searching a dynamic catalog on a tree~\cite{Nekrich10}. Mortensen~\cite{Mortensen03} similarly considered applications to the orthogonal range and dynamic line segment intersection reporting problems.

\paragraph{Practical Applications.}
Additionally, a variety of practical applications use the packed-memory array as an algorithmic component.  Durand, Raffin and Faure~\cite{DurandRF12} proposed using a packed-memory array to maintain sorted order during particle movement simulations for efficient searching. Khayyat, Lucia, Singh, Ouzzani, Papotti, Quian\'e-Ruiz, Tang and Kalnis~\cite{KLSOPQ0K17} applied it to handle dynamic database updates in their inequality join algorithms. Toss, Pahins, Raffin and Comba~\cite{TossPRC18} presented a \emph{packed-memory quadtree}, which supports large streaming spatiotemporal datasets. De Leo and Boncz~\cite{LeoB19pma} presented the \emph{rewired memory array}, an implementation of a packed-memory array which improves on its practical performance. Several works \cite{WheatmanX21, WheatmanX18, WheatmanB21, PandeyWXB21, LeoB21, LeoB19fastconcurrent} implemented \emph{parallel} packed-memory arrays for the purpose of storing dynamic graphs with fast updates and range queries. Assessing whether our results can be used to obtain practical speedups for these applications remains an interesting direction for future work.

\paragraph{Related work on history independence.}
History independence has been studied for data structures in both internal and external memory models~\cite{Micciancio97, NaorTe01, HartlineHoMo05, BuchbinderPe03, BlellochGo07, NaorSeWi08, Golovin09, Golovin10, BenderBeJo16}. Even prior to the formalization of history independence~\cite{Micciancio97, NaorTe01} in the late 1990s, there were several notable early works on hashing and search trees that implicitly achieved history-independent topologies~\cite{AmbleKn74,SundarTa90,AnderssonOt91,Snyder77,Pugh88,Pugh90,AragonSe89,PughTe89}. 
The notion of history independence studied in this paper is sometimes referred to as \emph{weak history independence}---for a survey of stronger notions of history independence, along with other related work, see recent work~\cite{GoodrichKoMi17} by Goodrich, Kornaropoulos, Mitzenmacher and Tamassia. (Note that, the \emph{weaker} the notion of history independence that one uses, the \emph{stronger} any lower bound on history-independent data structures becomes.)

History independence is typically treated as a security property: the goal is to minimize the amount of information that is leaked if an adversary sees internals of the data structure.  To the best of our knowledge, the results in this paper are the first to use techniques from history independence in order to achieve \emph{faster} algorithms than were previously possible.

\section{Acknowledgements}

This research was partially sponsored by the United States Air Force Research Laboratory and the United States Air Force Artificial Intelligence Accelerator and was accomplished under Cooperative Agreement Number FA8750-19-2-1000. The views and conclusions contained in this document are those of the authors and should not be interpreted as representing the official policies, either expressed or implied, of the United States Air Force or the U.S. Government. The U.S. Government is authorized to reproduce and distribute reprints for Government purposes notwithstanding any copyright notation herein.

This work was also supported by NSF grants 
CCF-2106999, 
CCF-2118620, 
CNS-1938180, 
CCF-2118832,  
CCF-2106827, 
CSR-1763680, 
CCF-1716252, 
CNS-1938709. Nicole Wein is supported by a grant to DIMACS from the Simons Foundation (820931).  
Finally, William Kuszmaul is partially supported by a Hertz Fellowship and an NSF GRFP Fellowship. 

\bibliographystyle{plain}

\appendix

\section{Proof of Lemma \ref{lem:overflow}}\label{app:over}

\overflow*

We remark that Lemma \ref{lem:overflow} is essentially immediate from the fact that each subproblem $S$ has density $\mu_S \le 1 + \epsilon / 2$. The only difficulty in the proof comes from the necessity to carefully handle floors/ceilings.

\begin{proof}
By construction, each level-$i$ subproblem $S$ has $$|C_S| \le 2 \alpha_S  \cdot \frac{n} {2^i} \le \epsilon \cdot \frac{n}{2^i}.$$ Thus, if $|C_S| > 4$ (i.e., $S$ is a non-base-case subproblem), we must have $\epsilon n / 2^i \ge 4$. Since every base-case subproblem is the child of a non-base-case subproblem, we have that for base-case subproblems $\epsilon n/2^{i-1} \ge 4$. This means that every subproblem $S$ is in a level $i$ satisfying 
\begin{equation}
\frac{\epsilon n }{2^i} \ge 2.
\label{eq:sizereq}
\end{equation}
We wish to show that $\lfloor \frac{m}{2^i} \rfloor - |S| \ge 0$. We know that
$$\left\lfloor \frac{m}{2^i} \right\rfloor - |S|  = \left\lfloor \frac{m}{2^i} \right\rfloor - \mu_S \frac{n}{2^i}  \ge \frac{m}{2^i} - \mu_S \frac{n}{2^i} - 1  \ge \frac{(1 + \epsilon)n - \mu_S n}{2^i} - 1.$$
Since $\mu_S \le 1 + \epsilon / 2$, it follows that
$$\left\lfloor \frac{m}{2^i} \right\rfloor - |S|  \ge \frac{\epsilon n / 2}{ 2^i}  - 1.$$
By \eqref{eq:sizereq}, we can conclude that $\lfloor \frac{m}{2^i} \rfloor - |S| \ge 0$, as desired. 
\end{proof}
 
\end{document}